\documentclass[12pt]{amsart}  

\usepackage{bbold}
\usepackage{amscd} 
\usepackage{amsmath,amssymb,amsthm}
\usepackage{mathtools}
\usepackage{mathrsfs}
\usepackage[hidelinks]{hyperref}
\usepackage{enumitem}   

\usepackage{lmodern}
\usepackage{microtype}
\usepackage{xcolor}

\hypersetup{colorlinks=true,linkcolor=blue!55!black,citecolor=blue!55!black,urlcolor=blue!55!black}

\newtheorem{theorem}{Theorem}[section]
\newtheorem{lemma}[theorem]{Lemma}
\newtheorem{proposition}[theorem]{Proposition}
\newtheorem{corollary}[theorem]{Corollary} 
\newtheorem{definition}[theorem]{Definition}
 
\newtheorem{remark}[theorem]{Remark} 
\numberwithin{equation}{section}

\def\B{\mathscr B}
\def\M{\mathbb M}

\def\H {\mathfrak H}
\def\F{\mathfrak F}

\def \IS{X}

\def\dom{\text{\rm dom}}
\def\ran{\text{\rm ran}}
\def\supp{\text{\rm supp}}

\def\RE{\mathbb R}
\def\CO{{\mathbb C}}
\def\o{$\bar{\text{\rm o}}$}

\def\NA{\mathbb N}

\def\ph*{\phi_\star}

\def\be{\begin{equation}}
\def\ee{\end{equation}}
\def\min{{\rm min}}
\def\max{{\rm max}}

\def\-{{\rm in}}
\def\+{{\rm ex}}

\def\S{\mathfrak D}

\newcommand{\closure}[2][3]{%
      {}\mkern#1mu\overline{\mkern-#1mu#2}}
\def\cH{\closure{H}}

\newcommand{\lambdastar}{\lambda_{\star}}

\renewcommand{\rho}{\varrho}
\newcommand{\Hni}{H^{free}_{\mu}}
\newcommand{\thr}{T}
\newcommand{\xiq}{\phi}
\newcommand{\new}[1]{{\color{blue} #1}}

\title{A renormalized Lee Model: resolvent and spectral analysis}

\author[]{Claudio Cacciapuoti}
\address{ DiSAT, Sezione di Matematica, Universit\`a dell'Insubria, via Valleggio 11, I-22100 Como, Italy}
\email{claudio.cacciapuoti@uninsubria.it}

\author[]{Federica Muscolino}
\address{Independent researcher}
\email{federica.muscolino@gmail.com}

\author[]{Diego Noja}
\address{Dipartimento di Matematica e Applicazioni, Universit\`a
 di Milano Bicocca, via R. Cozzi 55, 20126, Milano, Italy}
\email{diego.noja@unimib.it}

\author[]{Andrea Posilicano}
\address{DiSAT, Sezione di Matematica, Universit\`a dell'Insubria, via Valleggio 11, I-22100 Como, Italy}
\email{andrea.posilicano@uninsubria.it}

\begin{document}
\begin{abstract}
We construct and analyze a renormalized three-dimensional massive Lee model; its renormalized Hamiltonian is defined intrinsically through an
abstract construction which explicitly provides the operator domain, action and a Kre\u\i n-like
resolvent formula.  Conservation of the total excitation number reduces the
spectral analysis of the full Hamiltonian to a sector-wise study of the spectral profile of an explicit operator pencil, playing the role of an operator-valued Weyl function.  We construct essential spectrum components in a sector generated by lower excitation sectors and obtain the inclusion in the expected
sector-wise HVZ type formula; the full formula is proved in the first
two sectors and, under explicit conditions, in every sector.  Below the free or one-excitation threshold, the
point spectrum in every higher sector is described by a bounded
self-adjoint nonnegative Birman--Schwinger operator.  In every sector with
at least two excitations we prove absence of sub-threshold spectrum for arbitrary coupling when
the renormalized gap is not smaller than the boson mass, and under an explicit
weak-coupling condition when it is smaller.  In the latter case we also derive
a sufficient condition for the existence of a simple bound state in the
two-excitation sector.
\end{abstract}
\maketitle

\tableofcontents


\section{Introduction}

Quantum models in which particles may be emitted and absorbed require a state
space accommodating different particle numbers.  For identical bosons in
three dimensions this is the bosonic Fock space
\begin{equation*}\label{eq:intro-fock}
  \F_b
  =\bigoplus_{n=0}^{\infty}L_b^2(\RE^{3n}),
  \qquad L_b^2(\RE^0)=\CO.
\end{equation*}
The purpose of this paper is to construct and study a particularly explicit
ultraviolet-singular model of this kind: a nonrelativistic version of the Lee model introduced in
\cite{Lee54}.  A fixed source has two internal states and is coupled to a
field of massive nonrelativistic bosons.  A transition from the upper to the
lower source state creates one boson, while the reverse transition annihilates
one.  The Hilbert space is
\begin{equation*}\label{eq:intro-Hilbert}
  \F_b\oplus\F_b\simeq\CO^2\otimes\F_b.
\end{equation*}
The one-boson dispersion relation and the free field Hamiltonian are
\begin{equation*}\label{eq:intro-dispersion}
  \omega(k)=|k|^2+m,
  \qquad m>0,
  \qquad H^\circ=\mathrm d\Gamma(\omega).
\end{equation*}
If $\mu$ denotes the energy-gap parameter of the source, $g\in\RE$ is
the coupling constant, and $v$ is the form factor, the formal Hamiltonian is
\begin{equation}\label{eq:intro-formal-H}
  H_{\mathrm{formal}}
  =
  \begin{pmatrix}
    H^\circ & g\, a^*(v)\\
    g\, a(v) & H^\circ+\mu
  \end{pmatrix},    \qquad \mu>0\,,\quad g\neq 0\,.
\end{equation}
Changing the matrix structure of the interaction leads to generalized
spin--boson models.  The interaction in \eqref{eq:intro-formal-H} is their
rotating-wave, or $2$-nilpotent, specialization; see
\cite{Lonigro2022,Lampart2025,FHM26} and the references therein.

We consider the constant form factor $v\equiv1$.  In position space this
describes an interaction concentrated at the source, whereas in momentum
space it couples all momenta with equal strength.  Since $1\notin L^2(\RE^3)$,
the creation term in \eqref{eq:intro-formal-H} is not an ordinary densely
defined Fock-space operator, and the displayed sum is only formal.

In this paper, the singular Hamiltonian is defined directly, rather than as the
outcome of a cutoff procedure.  We first specialize the abstract resolvent
construction for operators of the formal type $H+A^*+A$ developed in
\cite{MPAG1,MPAG2}.  This yields a self-adjoint, bounded-from-below operator
$H_\mu$ with an explicit domain and action; see Theorem~\ref{teo} and its
specialization in Section~\ref{sec:model-sector-notation}.  Its Kre\u\i n-type
resolvent formula is given in Theorem~\ref{Res-Lee}.  Related direct and cutoff constructions for singular rotating-wave
and spin--boson Hamiltonians have been obtained in
\cite{Lonigro2022,Lampart2025}. Only in
Appendix~\ref{app:reg} do we return to regularized Hamiltonians and prove that
the intrinsically defined $H_\mu$ is also obtained, after the appropriate
energy subtraction, as their norm-resolvent limit; see
Theorem~\ref{th:convergence}. This makes contact with the cited literature.  
In particular, the present abstract construction is arranged so that the reduced
operator needed for the spectral analysis in the subsequent sections appears explicitly from the
outset. A fundamental property of the model is that, although the boson number is not conserved, the sum of the boson number and
the indicator of the upper source state (namely, $0$ or $1$) is conserved.  We call it the total
excitation number.  Consequently,
\begin{equation*}\label{eq:intro-sector-decomp}
  \F_b\oplus\F_b
  =\bigoplus_{n=0}^{\infty}\IS^{[n]},
  \qquad
  H_\mu=\bigoplus_{n=0}^{\infty}H_\mu^{[n]},
\end{equation*}
where
\begin{equation*}\label{eq:intro-sector-space}
  \IS^{[0]}=\CO\oplus0,
  \qquad
  \IS^{[n]}
  =L_b^2(\RE^{3n})\oplus L_b^2(\RE^{3(n-1)}),
  \quad n\geq1.
\end{equation*}
A vector in the $n$-th sector therefore describes either the lower source
state with $n$ bosons or the upper source state with $n-1$ bosons.

For $z$ below the free $n$-boson threshold $nm$, elimination of the first
component gives an explicit reduced operator
\begin{equation*}\label{eq:intro-principal}
  M_z^{(n-1)}
  \quad\text{on}\quad
  L_b^2(\RE^{3(n-1)}).
\end{equation*}
It is the sum, with the sign convention used below, of a multiplication part
containing the finite renormalized self-energy and an integral part describing
boson exchange.  A factorization by boundedly invertible triangular operators allows to
transfer invertibility, kernels and Weyl sequences between
$H_\mu^{[n]}-\lambda$ and $M_\lambda^{(n-1)}$; see
Corollary~\ref{appcert:cor:M-reduction}.  In particular, for $\lambda<nm$,
\begin{equation*}\label{eq:intro-M-equation}
  \lambda\in\sigma_p(H_\mu^{[n]})
  \quad\Longleftrightarrow\quad
  0\in\sigma_p(M_\lambda^{(n-1)}),
\end{equation*}
with equality of multiplicities, and the corresponding equivalence holds for
the Weyl essential spectrum.  Reduced operators with an analogous role in
the Lee model bound state problem were introduced formally in \cite{Raj} and further
studied in \cite{ATT,JTU23}.

We emphasize that the dependence on the bare gap parameter $\mu$ and on a suitable subtraction point $\nu<m$
enters all the final formulae through the renormalized gap
\begin{equation*}\label{eq:intro-mug}
  \mu_g=\mu-2\pi^2g^2\sqrt{m-\nu}.
\end{equation*}
Thus $\nu$ specifies the renormalization convention: changing it while
describing the same renormalized Hamiltonian requires changing $\mu$ so that
$\mu_g$ remains fixed.  If $\mu_g<m$, we further set
\begin{equation}\label{eq:intro-lambda1}
  \lambda_1
  =m-\left(\sqrt{\pi^4g^4+m-\mu_g}-\pi^2g^2\right)^2<m.
\end{equation}

We first summarize the results on the essential spectrum.
Theorem~\ref{thm:free-essential-spectrum} states that, for every $n\geq1$,
\begin{equation*}\label{eq:intro-free-branch}
  [nm,+\infty)\subset\sigma_{\mathrm{ess}}(H_\mu^{[n]}).
\end{equation*}
More generally, Theorem~\ref{thm:lower-sector-inclusion} shows that, if $1\leq r<n$ and
$\zeta\in\sigma(H_\mu^{[r]})$ with $\zeta<rm$, then
\begin{equation}\label{eq:intro-general-branch}
  [\zeta+(n-r)m,+\infty)
  \subset\sigma_{\mathrm{ess}}(H_\mu^{[n]}).
\end{equation}
The proof constructs singular Weyl sequences by adjoining $n-r$ bosons that
escape to spatial infinity to an approximate spectral state of the lower
sector.  Notice that since the source is fixed, no recoil term occurs.  Now, defining
\begin{equation*}
  E_0:=0,
  \qquad E_r:=\inf\sigma(H_\mu^{[r]}),\quad r\geq1,
\end{equation*}
and
\begin{equation*}\label{eq:intro-taun}
  \tau_n:=\min_{0\leq r<n}\{E_r+(n-r)m\},
\end{equation*}
Corollary~\ref{cor:recursive-essential-threshold} gives
\begin{equation*}
  [\tau_n,+\infty)\subset\sigma_{\mathrm{ess}}(H_\mu^{[n]}).
\end{equation*}
This is the inclusion in the expected sector wise HVZ formula adapted to the present model.  The
reverse inclusion, and so the full HVZ theorem, is here proved for $n=1$ and $n=2$ in
Corollaries~\ref{cor:essential-spectrum-sector-one}
and~\ref{cor:essential-spectrum-sector-two}, respectively. For $n\geq3$,
it also follows in the absence regimes of Theorem~\ref{ub-n}, but remains
open in general outside those regimes.

The vacuum is identified as a simple
eigenvector in the sector $n=0$ (see Subsection \ref{subsec:vacuum-sector}).  The
one-excitation sector, which is nothing but a singular Friedrichs--Lee model, is
completely explicit.  Corollary~\ref{cor:essential-spectrum-sector-one} gives
\begin{equation*}
  \sigma_{\mathrm{ess}}(H_\mu^{[1]})=[m,+\infty).
\end{equation*}
Theorem~\ref{thm:n1} shows that, if $\mu_g\geq m$, there is no spectrum
below $m$, whereas, if $\mu_g<m$, the only spectral point below $m$ is the
simple eigenvalue $\lambda_1$ in \eqref{eq:intro-lambda1}.
Proposition~\ref{prop:resonances-sector-one} determines explicitly the poles
of the analytically continued one sector resolvent, including the threshold case.  These
one excitation results are included for completeness and as the first
application of the reduction; the spectral theory of singular
Friedrichs--Lee Hamiltonians is treated, for example, in \cite{FLL21}.

The two excitation sector is the first genuinely nontrivial bound state
problem.  Corollary~\ref{cor:essential-spectrum-sector-two} shows that its
essential threshold is
\begin{equation*}\label{eq:intro-two-threshold1}
  T_2=m+E_1
  =
  \begin{cases}
    2m,&\mu_g\geq m,\\
    m+\lambda_1,&\mu_g<m,
  \end{cases}
\end{equation*}
and
\begin{equation*}\label{eq:intro-two-essential}
  \sigma_{\mathrm{ess}}(H_\mu^{[2]})=[T_2,+\infty).
\end{equation*}

More generally, for $n\geq2$ and below the free or one excitation threshold
\begin{equation}\label{eq:intro-BS-threshold}
  \thr_n=
  \begin{cases}
    nm,&\mu_g\geq m,\\
    (n-1)m+\lambda_1,&\mu_g<m,
  \end{cases}
\end{equation}
Lemma~\ref{sp} shows that the equation $M_\lambda^{(n-1)}v=0$ is equivalent to
\begin{equation*}\label{eq:intro-BS}
  B_{n,\lambda}f=f,
\end{equation*}
where $B_{n,\lambda}$ is a bounded self-adjoint nonnegative operator that
plays the role of a Birman--Schwinger operator; see
Lemmas~\ref{BS} and~\ref{symm}.  Its continuity and monotonicity in
$\lambda$ are proved in Lemma~\ref{mon}, and explicit norm estimates are
given in Theorem~\ref{ub-n}.  For $n=2$, the threshold in
\eqref{eq:intro-BS-threshold} is precisely the bottom
$T_2=m+E_1$ of the essential spectrum by
Corollary~\ref{cor:essential-spectrum-sector-two}; moreover,
$B_{2,\lambda}$ is compact and, if $g\neq0$, its largest eigenvalue is simple
and isolated (see Lemma~\ref{n-sp}).  For $n\geq3$, however, $\thr_n$ need not
be the bottom of the full essential spectrum, because a spectral point of a
proper intermediate sector may generate a lower half-line contribution to the essential spectrum, through
\eqref{eq:intro-general-branch}.

The estimates give the following explicit conclusions for every $n\geq2$.
Theorem~\ref{ub-n} proves that, at arbitrary coupling,
\begin{equation*}\label{eq:intro-n2-no-bound-mug-large}
  \mu_g\geq m
  \quad\Longrightarrow\quad
  \sigma(H_\mu^{[n]})\cap(-\infty,nm)=\varnothing.
\end{equation*}
If $\mu_g<m$, Theorem~\ref{ub-n} gives
\begin{equation}\label{eq:intro-n2-absence}
  \frac{\pi^2g^2}{\sqrt{m-\lambda_1}}\leq1
  \quad\Longrightarrow\quad
  \sigma(H_\mu^{[n]})\cap(-\infty,\thr_n)=\varnothing,
\end{equation}
whereas, for $n=2$, Theorem~\ref{n=2eigenvalues} shows that
\begin{equation}\label{eq:intro-n2-existence}
  \frac{\pi^2g^2}{\sqrt{m-\lambda_1}}
  >\frac{\pi}{4-\pi}
\end{equation}
implies the existence of a simple isolated eigenvalue below
$m+\lambda_1$.  The gap between the sufficient conditions
\eqref{eq:intro-n2-absence} and \eqref{eq:intro-n2-existence}, as well as the
possible existence of further two-sector eigenvalues, are open questions.

When $n\geq3$, $B_{n,\lambda}$ is generally noncompact.  Therefore
$\|B_{n,\lambda}\|=1$ does not by itself imply that $1$ is an eigenvalue,
and the norm crossing the value $1$ alone does not establish binding (see
Remark~\ref{rem:caveat}).
The norm bounds nevertheless exclude spectrum in every sector, despite
the absence of compactness. In either absence regime stated above,
Theorem~\ref{ub-n} gives, for every $n\geq2$,
\begin{equation*}\label{eq:intro-small-coupling-full}
  \sigma(H_\mu^{[n]})
  =\sigma_{\mathrm{ess}}(H_\mu^{[n]})
  =[\thr_n,+\infty).
\end{equation*}

We collect the results on the sector Hamiltonians $H_\mu^{[n]}$ in the following theorem.
\begin{theorem}[Sector spectra]\label{thm:sector-spectra-summary}
\leavevmode
\begin{itemize}
\item[$\bullet\; n=0$]
\begin{equation*}  \sigma(H_\mu^{[0]})=\{0\},\end{equation*}
where $0$ is a simple eigenvalue. 
\item[$\bullet\; n=1$]
\begin{equation*}
  \sigma_{\mathrm{ess}}(H_\mu^{[1]})=[m,+\infty);
\qquad 
  \sigma_d(H_\mu^{[1]})=\left\{\begin{aligned}&\varnothing\quad &  \text{if $\mu_g\geq m$}\\
    &\{\lambda_1\}&\text{if $\mu_g< m$}
    \end{aligned}\right.
\end{equation*}
where $\lambda_1$ is a simple eigenvalue. 
\item[$\bullet\; n=2$]
\begin{equation*} \sigma_{\mathrm{ess}}(H_\mu^{[2]})=[T_2,+\infty);\end{equation*}
\begin{equation*}
  \sigma_d(H_\mu^{[2]})= \varnothing\quad  \text{if $\mu_g\geq m$ or $\mu_g< m$ and $ \frac{\pi^2g^2}{\sqrt{m-\lambda_1}}\leq1$};
\end{equation*}
if 
\begin{equation}\label{condition_for_lambda_2}
\text{$\mu_g< m$ and $\frac{\pi^2g^2}{\sqrt{m-\lambda_1}}
  >\frac{\pi}{4-\pi}$}
  \end{equation}
  then $\sigma_d(H_\mu^{[2]})$ contains at least one simple  eigenvalue $\lambda_2 < T_2 = m+\lambda_1$. 
\item[$\bullet\; n\geq 3$]
\begin{equation}\label{inclusion_ess_ngeq3}
\sigma_{\mathrm{ess}}(H_\mu^{[n]}) \supset [T_n,+\infty);
\end{equation}
 if
\begin{equation}\label{condition_ess_ngeq3}
\text{$
\mu_{g}\geq m$ or 
$\mu_{g}<m$ and $\frac{\pi^{2}g^{2}}{\sqrt{m-\lambda_{1}}}\leq1$}\end{equation}
then
\begin{equation}\label{spectrum_ngeq3}
\sigma(H_\mu^{[n]})=\sigma_{\mathrm{ess}}(H_\mu^{[n]})=[T_n,+\infty);
\end{equation}
furthermore, if the condition in Eq. \eqref{condition_for_lambda_2} holds true then
\begin{equation}\label{spectrum_final_ngeq3}
 \sigma_{\mathrm{ess}}(H_\mu^{[n]}) \supset [\lambda_2 + (n-2)m,+\infty)  \supset [T_n,+\infty) .
\end{equation}
\end{itemize}
\end{theorem}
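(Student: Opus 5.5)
This summary theorem is assembled sector by sector from the results announced in the introduction, all of which are proved later in the paper. No new analysis is needed beyond checking consistency of thresholds.

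\textbf{Sectors $n=0$ and $n=1$.} For $n=0$ we use the explicit description of Subsection~\ref{subsec:vacuum-sector}. There $\IS^{[0]}=\CO\oplus 0$ is one-dimensional and $H_\mu^{[0]}$ acts as multiplication by $0$. Hence its spectrum is $\{0\}$, a simple eigenvalue.

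For $n=1$ we proceed in two steps.
\begin{itemize}
\item Corollary~\ref{cor:essential-spectrum-sector-one} gives $\sigma_{\mathrm{ess}}(H_\mu^{[1]})=[m,+\infty)$.
\item Theorem~\ref{thm:n1} gives the spectrum below $m$, via the scalar reduction $M^{(0)}_\lambda$ and Corollary~\ref{appcert:cor:M-reduction}. If $\mu_g\geq m$ it is empty; if $\mu_g<m$ it consists of the simple eigenvalue $\lambda_1$.
\end{itemize}
Since $\lambda_1<m$ lies below the essential spectrum, it is isolated of finite multiplicity, so it belongs to $\sigma_d$.

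\textbf{Sector $n=2$.} Corollary~\ref{cor:essential-spectrum-sector-two} yields $\sigma_{\mathrm{ess}}(H_\mu^{[2]})=[T_2,+\infty)$. Moreover $T_2=\thr_2$ in both regimes, with $E_1=m$ or $E_1=\lambda_1$ from the $n=1$ case.

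In the absence regimes, Theorem~\ref{ub-n} says there is no spectrum below $\thr_2=T_2$. Discrete eigenvalues cannot lie inside the essential spectrum, so $\sigma_d=\varnothing$.

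Under \eqref{condition_for_lambda_2}, Theorem~\ref{n=2eigenvalues} provides a simple eigenvalue $\lambda_2<m+\lambda_1=T_2$. Lying below the essential spectrum, it is discrete.

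\textbf{Sectors $n\geq3$.} Inclusion \eqref{inclusion_ess_ngeq3} is where the only real bookkeeping lies: one must check that $T_n$ coincides with, or dominates, the recursive threshold $\tau_n$. Corollary~\ref{cor:recursive-essential-threshold} gives $[\tau_n,+\infty)\subset\sigma_{\mathrm{ess}}$, so it suffices to show $\tau_n\leq T_n$. This follows from the two choices $r=0$ and $r=1$ in the definition of $\tau_n$.
\begin{itemize}
\item The choice $r=0$ gives $\tau_n\le nm$.
\item When $\mu_g<m$, the choice $r=1$ gives $\tau_n\le E_1+(n-1)m=\lambda_1+(n-1)m$.
\end{itemize}
Both values match $\thr_n$ in the respective regime. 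Equivalently, one can apply Theorem~\ref{thm:free-essential-spectrum}, or Theorem~\ref{thm:lower-sector-inclusion} with $r=1$ and $\zeta=\lambda_1$.

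Under \eqref{condition_ess_ngeq3}, Theorem~\ref{ub-n} gives $\sigma(H_\mu^{[n]})\cap(-\infty,\thr_n)=\varnothing$. Combined with the inclusion just proved, this gives
\[
[T_n,+\infty)\subset\sigma_{\mathrm{ess}}\subset\sigma\subset[T_n,+\infty),
\]
which is \eqref{spectrum_ngeq3}.

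Finally, under \eqref{condition_for_lambda_2}, apply Theorem~\ref{thm:lower-sector-inclusion} with $r=2$ and $\zeta=\lambda_2$. This is admissible because $\lambda_2\in\sigma(H_\mu^{[2]})$ and $\lambda_2<T_2=m+\lambda_1<2m$. It yields $[\lambda_2+(n-2)m,+\infty)\subset\sigma_{\mathrm{ess}}(H_\mu^{[n]})$.

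The second inclusion in \eqref{spectrum_final_ngeq3} follows from
\[
\lambda_2+(n-2)m<\lambda_1+(n-1)m=T_n,
\]
where we used $\mu_g<m$ in this regime.
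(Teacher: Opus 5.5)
Your proposal is correct and follows essentially the same route as the paper's proof, which also assembles the statement sector by sector from Subsection~\ref{subsec:vacuum-sector}, Corollaries~\ref{cor:essential-spectrum-sector-one} and~\ref{cor:essential-spectrum-sector-two}, Theorems~\ref{thm:n1}, \ref{ub-n} and~\ref{n=2eigenvalues}, and Theorem~\ref{thm:lower-sector-inclusion} with $r=2$, $\zeta=\lambda_2$. The differences are cosmetic. You pass the $n\geq3$ inclusion through Corollary~\ref{cor:recursive-essential-threshold} via $\tau_n\leq T_n$, whereas the paper cites Theorem~\ref{thm:free-essential-spectrum} and \eqref{eq:one-sector-branch} directly, as you also note. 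You also verify explicitly that $\lambda_2+(n-2)m<\lambda_1+(n-1)m=T_n$, which the paper leaves implicit.
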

\begin{proof}
For $n=0$, see Subsection \ref{subsec:vacuum-sector}.

For $n=1$, see Corollary \ref{cor:essential-spectrum-sector-one} for the essential spectrum and Theorem \ref{thm:n1} for the discrete spectrum.

For $n=2$, see Corollary \ref{cor:essential-spectrum-sector-two} for the essential spectrum and Theorem \ref{ub-n} for the absence of spectrum below $T_2$. Theorem \ref{n=2eigenvalues} shows the existence of $\lambda_2$ under the condition in Eq. \eqref{condition_for_lambda_2}. 

For $n\geq3$, the inclusion in Eq. \eqref{inclusion_ess_ngeq3} is given  in Theorem \ref{thm:free-essential-spectrum}  for $\mu_g\geq m$  and in Corollary \ref{cor:essential-spectrum-sector-one}, Eq. \eqref{eq:one-sector-branch}, for $\mu_g< m$. Theorem \ref{ub-n} shows that, under the condition in Eq. \eqref{condition_ess_ngeq3}, $\sigma(H^{[n]}_{\mu})\subset [\thr_n,+\infty)$, hence the opposite inclusion $\sigma_{\mathrm{ess}}(H_\mu^{[n]}) \subset  [T_n,+\infty)$ also holds, giving in turn the identities in Eq. \eqref{spectrum_ngeq3}. Finally, the statement in Eq. \eqref{spectrum_final_ngeq3} follows from the existence of the eigenvalue $\lambda_2$ and from Theorem \ref{thm:lower-sector-inclusion}, with $\zeta = \lambda_2$ and $r=2$.
\end{proof}

We now briefly place these results in the literature.  Low-excitation sectors of
the Lee model were studied through scattering amplitudes, pole equations and
three-body integral equations in the older physics-oriented literature
\cite{Muta65,Pag65,Pag66,Fuda82}; all-sector
algebraic and scattering reductions appear in \cite{Fivel70,LZ70}, while
\cite{Weder74} gives an operator-theoretic analysis with a
dilation-analytic ultraviolet cutoff.  The reduced resolvent viewpoint is proposed in \cite{Raj} and partially
developed in \cite{ATT,JTU23}.  We also recall that there is extensive work on
essential spectra and bound states for other particle--field Hamiltonians;
representative references include
\cite{BFS99,DG99,DJ01,Moller05,MR14,Dam20,DH22,DM20a,DM20b}, with broader
perspectives in \cite{Falconi25,HHS21,Hin22thesis,FHM26}.  In particular,
Dam--M\o ller obtain excited-state results for massive regularized
spin--boson type models, and Ibrogimov proves finiteness of the discrete
spectrum for the regular spin--boson Hamiltonian truncated at two photons
\cite{Ibrogimov2018}.  The latter results and the present spectral results
are more thoroughly compared in Remark~\ref{rem:Ibrogimov-comparison}.  

We end the Introduction by collecting several questions that remain open.
The first is the reverse inclusion in the expected
formula
\begin{equation*}\label{eq:intro-full-HVZ}
  \sigma_{\mathrm{ess}}(H_\mu^{[n]})=[\tau_n,+\infty)
\end{equation*}
for $n\geq3$ outside the absence regimes of Theorem~\ref{ub-n}.  A second is the complete count of the two-sector discrete
spectrum and the determination of the exact critical coupling between
\eqref{eq:intro-n2-absence} and \eqref{eq:intro-n2-existence}.  In higher
sectors, a reduction in the spirit of Faddeev or Amado--Lovelace equations
may isolate the noncompact part and leave a compact remainder, as in
few-body theory; see \cite{Fuda82}.  Such a framework may also be useful for the analysis of the
scattering theory of the model, not touched upon here, and resonances in higher sectors.

The ground state of the full Hamiltonian raises a further question,
requiring a comparison of the spectral bottoms of all excitation sectors.
Remark~\ref{rem:ground-state} identifies it in some explicit parameter regimes
and reduces the remaining problem to finitely many sectors. Its existence
and the minimizing excitation sector remain to be determined for general
parameters.

The paper is organized as follows.  Section~\ref{s:prelude} presents the
abstract resolvent construction and spectral reduction.  In
Section~\ref{sec:model-sector-notation} it is specialized to the singular Lee
Hamiltonian and to fixed excitation sectors.  Section~\ref{sec:essential-spectrum}
constructs the essential-spectrum inclusions, while
Section~\ref{sec:point-discrete-spectrum} treats the point and
discrete spectra for the low and higher sectors.  Appendix~\ref{app:reg} proves norm-resolvent convergence
of the energy-renormalized cutoff Hamiltonians to the intrinsically defined
operator.

\section{Notation and definitions.}
\begin{itemize}
\item $\NA:=\{1,2,\ldots\}$ and
$\NA_0:=\{0,1,2,\ldots\}$;
\item $L_b^2(\RE^{3n})$ denotes the closed subspace of
$L^2(\RE^{3n})$ consisting of boson-symmetric functions, and
$L_b^2(\RE^0):=\CO$;
\item inner products $\langle\cdot,\cdot\rangle_X$ are
conjugate-linear in the first argument and linear in the second, and
$\|\cdot\|_X$ denotes the corresponding norm; subscripts are omitted when
the underlying space is clear, and $u_j\rightharpoonup u$ denotes weak
convergence;
\item $\dom(L)$, $\ker(L)$ and $\ran(L)$ denote,
respectively, the domain, kernel and range of the linear operator $L$;
\item $\varrho(L)$ and $\sigma(L)$ denote the resolvent set and the spectrum of $L$;
\item $\sigma_{p}(L)$, $\sigma_{c}(L)$,
$\sigma_{d}(L)$ and $\sigma_{\mathrm{ess}}(L)$ denote the point,
continuous, discrete and Weyl essential spectra.  For a self-adjoint
operator, $\sigma_d(L)$ consists of its isolated eigenvalues of finite
multiplicity and
$\sigma_{\mathrm{ess}}(L)=\sigma(L)\setminus\sigma_d(L)$;
\item a Weyl sequence for a self-adjoint operator $L$ at
$\lambda\in\RE$ is a sequence $\{u_j\}\subset\dom(L)$ such that
$\|u_j\|=1$ and $(L-\lambda)u_j\to0$; it is called a singular Weyl sequence
if, in addition, $u_j\rightharpoonup0$;
\item $L|_V$ denotes the restriction of $L$ to the subspace
$V\subset\dom(L)$;
\item $\B(X,Y)$ denotes the space of bounded linear operators
from the Banach space $X$ to the Banach space $Y$, and
$\B(X):=\B(X,X)$;
\item $\|\cdot\|_{X,Y}$ denotes the norm in $\B(X,Y)$;
\item $I\Subset J$ means that the closure of $I$ is compact
and contained in $J$;
\item A linear operator $S$ is said to be $H$-small whenever $\dom(S)\supseteq\dom(H)$ and there exist $ c\in[0,1)$ and $b\in\RE$ such that $\|S\psi\|\le c\,\|H\psi\|+b\,\|\psi\|$ for any $\psi\in\dom(H)$;
\item $S_{\Lambda}$ is said to be uniformly $H_{\Lambda}$-small whenever each $S_{\Lambda}$ is $H_{\Lambda}$-small with $\Lambda$-independent constants $c$ and $b$;
\item $C_a$, $C_{a,b}$, and so on, denote generic positive
constants depending only on the indicated parameters; their values may
change from line to line.  
\end{itemize}
\section{Prelude: the abstract framework\label{s:prelude}}
\noindent 
In the first part of this Section we recall a way to build, through resolvents, self-adjoint operators belonging to a class which includes various renormalizable models in Quantum Field Theory. For more details, references and proofs we refer to \cite{MPAG1} and \cite{MPAG2}; in particular, Theorem \ref{real} corresponds to Theorem 3.1 in \cite{MPAG2}. Subsequently, in the second part, we apply the results to a renormalized abstract Lee model and relate its spectral profile 
to whether or not zero belongs to the different spectral components of an operator pencil which plays the role of an operator-valued Weyl function (see Theorem \ref{teo-sp}). 
\subsection{Building a resolvent}\label{building}
Let \begin{equation*}H :\dom(H)\subseteq\F\to\F\end{equation*} be a bounded-from-below self-adjoint operator in the Hilbert space $\F$ equipped with the scalar product $\langle\cdot,\cdot\rangle$ and corresponding norm $\|\cdot\|$. 
\begin{definition}[Scale of Hilbert spaces] \label{d:SHS}
We denote by $\H ^{s}$, $s\in [-1,1]$,
 with $\H^{0}\equiv\F$ and $\H^{1} \equiv\dom(H)$, the decreasing scale of Hilbert spaces given by the completion of $\dom(H)$ endowed with the scalar product
$$
\langle \psi_{1},\psi_{2}\rangle_{s}:=\langle(H^{2}+1)^{s/2}\psi_{1},(H^{2}+1)^{s/2}\psi_{2}\rangle\,.
$$
\end{definition}
One has
\begin{equation*}\label{incl}
\H^{s}\hookrightarrow\F\hookrightarrow\H^{-s}\,,\qquad 0<s\le 1\,, 
\end{equation*}
with dense inclusions. By such dense inclusions, the scalar product on $\F$ induces the dual pairings (conjugate-linear with respect to the first variable)
$\langle\cdot,\cdot\rangle_{\pm s,\mp s}$ between the dual couples
$(\H^{\mp s},\H^{\pm s})$. In the following, adjoints are taken with respect to such dualities and we use the abbreviated notation
$$
\H\equiv\H^{1} \,,\qquad \H^{*}\equiv\H^{-1} \,.
$$
Let
\begin{equation*}\label{ann}  
 A:{\H } \to \F\,,
\end{equation*} 
be a bounded linear map; for any $z\in  \varrho(H)$ we define the bounded operator 
\begin{equation*}\label{Gz}
G_{z} : \F\to\F \,,\qquad G_{z}:=(  A R_{{\bar z} })^{*}\,,
\end{equation*}
where
$$
R_{z}:\F\to{\H }\,,\qquad R_{z}:=(-H +z )^{-1}\,.
$$
Notice that $R_{z}$  extends to a bounded map $R_{z}\!:\!\H^{*} \to\F$ with bounded inverse
$$(-\cH+z): \F\to\H^{*} \,,
$$ where $\cH\in \B(\F,\H^{*})$ denotes the closure of the densely-defined, bounded operator $H:\H \subseteq\F\to\H^{*}$ (we abuse notation and use the same symbol when viewing $H$ as an operator in larger ambient spaces). By introducing the adjoint 
\begin{equation*}\label{A*}
A^{*}:\F\to \H^{*} \,,
\end{equation*}
one has
\begin{equation*}\label{RA*}
G_{z}=R_{z}A^{*}
\end{equation*}
and
\be\label{HG}
(-\cH+z)G_{z}=A^{*}\,.
\ee
By the first resolvent identity, one gets
\begin{equation*}\label{RG}
(z-w)R_{w}G_{z}=G_{w}-G_{z}=(z-w)R_{z}G_{w}\,,
\end{equation*}
and so  
\begin{equation*}\label{RG1}
A(G_{w}-G_{z})=(z-w)G^{*}_{\bar w}G_{z}=(z-w)G^{*}_{\bar z}G_{w}\in\B(\F)\,.
\end{equation*}
Given $\lambdastar<\inf\sigma (H)$, in the following we set $R:=R_{\lambdastar}$ and $ G:=G_{\lambdastar}$.
\begin{theorem}\label{real} Let  $H:\H \subseteq\F\to\F$ be self-adjoint and  bounded-from-below and let the symmetric operator $S$ be $H$-small; let $A$ belong to $\B(\H^ {s},\F)$ for some $s\in(0,1)$ and suppose that 
\be\label{H3}
\ran(G)\cap\H =\{0\}\,.
\ee
Then, 
$$
H_{S}:\S\subseteq\F\to\F\,,\qquad H_{S}:=\cH+A^{*}+A_{S}\,,
$$
$$
\S :=\{\psi\in\F:\psi^{\circ}:=\psi-G\psi\in\H \}\,,
$$
$$
A_{S}:\S \subseteq\F\to\F\,,\qquad A_{S}
:=\big(A-(1-G^{*})S\big)(1-G)\,,
$$
is a bounded-from-below self-adjoint operator with resolvent given by\be\label{krf}
(-H_{S}+z)^{-1}=R_{z}+\begin{bmatrix}G_{z}&R_{z}\end{bmatrix}
\begin{bmatrix} T_{S}+A(G-G_{z})&1-G^{*}_{\bar z}\\
1-G_{z}&-R_{z}
\end{bmatrix}^{-1}\begin{bmatrix}G_{\bar z}^{*}\\R_{z}\end{bmatrix}, \quad z\in \varrho(H)\cap\varrho(H_{S})\,,
\ee
where $T_{S}:=(1-G)^{*}{S}(1-G)$ and the block operator matrix inverse exists in $\B(\F\oplus\H,\F\oplus\H^{*})$.
\end{theorem}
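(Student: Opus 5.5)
The plan is to avoid working with \eqref{krf} head-on. I will first obtain a factorized form of $H_S$ from which symmetry, self-adjointness and semiboundedness can be read off. Only afterwards will I recover \eqref{krf} as an identity between bounded operators.

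\emph{Step 1: $H_S$ is well defined, and a factorization.} Let $\psi\in\S$ and put $\psi^\circ=(1-G)\psi\in\H$. Relation \eqref{HG} with $z=\lambdastar$ gives $\cH G\psi=\lambdastar G\psi-A^*\psi$. Hence $\cH\psi+A^*\psi=H\psi^\circ+\lambdastar G\psi\in\F$, and $A_S\psi=A\psi^\circ-(1-G^*)S\psi^\circ\in\F$, since $S$ is $H$-small. Next I use $G^*=AR$ on $\F$, extended to $\H^*\to\F$ by duality; this makes sense because $A\in\B(\H^s,\F)$ and $R:\H^*\to\H^{1-s}\supset$ is fine as $s<1$. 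It gives $G^*(H-\lambdastar)\psi^\circ=-A\psi^\circ$. Collecting terms,
\begin{equation*}
(H_S-\lambdastar)\psi=(1-G^*)(H-\lambdastar-S)(1-G)\psi,\qquad \psi\in\S .
\end{equation*}
With this formula, symmetry of $H_S$ on $\S$ follows from symmetry of $H-S$ on $\H$. Here $H-S$ is self-adjoint and bounded below on $\H$ by Kato--Rellich. Moreover, $\S=(1-G)^{-1}(\H)$.

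\emph{Step 2: injectivity of $1-G$ and density of $\S$.} If $(1-G)\psi=0$, then $\psi=G\psi\in\ran G$. Since $\ran G\subset\H^{1-s}$, bootstrapping with $\psi=G\psi$ pushes $\psi$ up the scale until it lies in $\H$, so $\psi\in\ran(G)\cap\H=\{0\}$ by \eqref{H3}. The same hypothesis \eqref{H3} shows that $\S\supseteq\H\cap\ker(\cdots)$ is dense. Concretely, for $\phi\perp\S$ one tests against elements $\psi^\circ+G\psi$ and uses the density of $\H$. This step shows that the factorization is not degenerate.

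\emph{Step 3: resolvent formula and self-adjointness.} For $z\in\varrho(H)$, I solve $(-H_S+z)\psi=f$ by seeking $\psi=R_zf+G_z\xi+R_z\eta$, and then impose $\psi\in\S$ together with the equation. Writing $G-G_z=(z-\lambdastar)R_zG$ and using $A(G-G_z)=(\lambdastar-z)G^*_{\bar z}G$, the unknowns $(\xi,\eta)\in\F\oplus\H^*$ satisfy precisely the block system whose matrix is the one in \eqref{krf}. This yields \eqref{krf} wherever the block matrix is invertible. The candidate resolvent $R_S(z)$ satisfies $R_S(z)^*=R_S(\bar z)$, because the block matrix is Hermitian with respect to the $\F\oplus\H$ / $\F\oplus\H^*$ pairing when $z\mapsto\bar z$. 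It also maps into $\S$ and is a right inverse of $-H_S+z$. Combined with the symmetry from Step 1, this shows that $H_S$ is self-adjoint. Lower boundedness then follows from the factorization: the form $\langle\psi,(H_S-\lambdastar)\psi\rangle$ equals the form of $H-\lambdastar-S$ evaluated at $(1-G)\psi$, and the bound is transferred using that $AG$ is bounded. This is where the relative smallness $c<1$ enters.

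\emph{Main obstacle.} I expect the main difficulty to be the invertibility of the block operator matrix in $\B(\F\oplus\H,\F\oplus\H^*)$ for all $z\in\varrho(H)\cap\varrho(H_S)$. My first approach is a Schur complement with respect to the invertible corner $-R_z:\H^*\to\F$. Actually it is cleaner to invert the $(2,2)$ block. This reduces the problem to invertibility of a bounded Weyl-type operator on $\F$. Its injectivity comes from \eqref{H3}, arguing as in Step 2, and its surjectivity from self-adjointness of $H_S$ once $z$ is nonreal or sufficiently negative. The statement for general $z\in\varrho(H_S)$ then follows by expressing the inverse through $(-H_S+z)^{-1}$ and analytic continuation.
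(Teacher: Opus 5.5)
Your Step 1 identity is correct and is the right object: for $\psi\in\S$ one has $(H_S-\lambdastar)\psi=(1-G^*)(H-\lambdastar-S)(1-G)\psi$. This yields symmetry and the bound $\langle\psi,(H_S-\lambdastar)\psi\rangle\ge -C\|1-G\|^2\|\psi\|^2$. (Only $G^*=AR$ on $\F$ is needed there, because $(H-\lambdastar)\psi^\circ\in\F$. Note also that $R$ maps $\H^*$ onto $\F$, not into $\H^{1-s}$, and $AR$ extends only to $\H^{s-1}$.)

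Step 2, however, is false. The bootstrap cannot work, since $A^*$ gains no regularity: $G\psi$ is never more regular than the elements of $\ran(RA^*)$, however regular $\psi$ is. In fact $1-G$ need not be injective under the hypotheses of the theorem. Take $\F=L^2(\RE^3)$, $H=-\Delta$, $S=0$, $s\in(3/4,1)$ and $A\psi=\psi(0)\,\phi$ with $\phi\in L^2(\RE^3)$. Then $Gf=\langle\phi,f\rangle\,R\delta_0$, and \eqref{H3} holds because $R\delta_0\notin\H$. But if $\langle\phi,R\delta_0\rangle=1$, then $GR\delta_0=R\delta_0$, and your own identity shows that $\lambdastar$ is an eigenvalue of $H_S$. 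So the factorization at $\lambdastar$ can be degenerate. Even injectivity would not suffice here: you would need $1-G$ to be boundedly invertible. The density argument fails too, since $\S\cap\H=\{0\}$ whenever $\ran(A)$ is dense (see the remark following the theorem).

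The decisive gap is self-adjointness. Step 3 needs the block matrix to be invertible at one real $z$, or at a pair $z,\bar z$, in order to get $\ran(-H_S+z)=\F$. The only argument you give derives this surjectivity from the self-adjointness of $H_S$, which is what is being proved. Moreover, the Schur complement with respect to $-R_z$ is not a bounded Weyl-type operator on $\F$. Using \eqref{HG} at $z$, one checks that on $\S$
\[
T_S+A(G-G_z)+(1-G^*_{\bar z})(-\cH+z)(1-G_z)=-H_S+z,
\]
so this reduction just returns the original question.

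The missing ingredient is a base point at which the congruence is nondegenerate; the paper's sketch uses exactly this, via \cite[Lemma 2.4]{MPAG2} and the factorization of $\mathbb T_\lambda$.
\begin{itemize}
\item Since $s<1$, $\|G_\lambda\|\le\|A\|_{\H^s,\F}\|R_\lambda\|_{\F,\H^s}\to0$ as $\lambda\to-\infty$, so $1-G_\lambda$ is boundedly invertible.
\item Since $G-G_\lambda\in\B(\F,\H)$, the set $\S=(1-G_\lambda)^{-1}(\H)$ is dense.
\item Repeating your computation at $\lambda$ gives $H_S-\lambda=(1-G_\lambda)^*(H-S_\lambda-\lambda)(1-G_\lambda)$, with $S_\lambda-S$ bounded and symmetric.
\item Kato--Rellich, together with the fact that congruence by a bounded, boundedly invertible operator preserves self-adjointness, then gives a self-adjoint, bounded-below $H_S$.
\end{itemize}
Only after that can your ansatz be run backwards to build the block inverse from $(-H_S+z)^{-1}$.

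The paper, following \cite{MPAG2}, proceeds differently. It shows that the right-hand side of \eqref{krf} is a pseudo-resolvent, injective by \eqref{H3}, with $\widehat R_z^*=\widehat R_{\bar z}$. It then identifies the generator of this pseudo-resolvent with $H_S$.
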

\begin{remark} Here, a few words of explanation of the results given in the previous Theorem. \par The right hand side of \eqref{krf} defines a pseudo-resolvent $\widehat R_{z}$ which, by hypothesis \eqref{H3}, is injective; hence, it is the resolvent of a closed operator. Such an operator is self-adjoint by $\widehat R_{z}^{*}=\widehat R_{\bar z}$. Then, one shows that $-\widehat R_{z}^{-1}+z$ coincides with $H_{S}$ (for the details, see the proof of \cite[Theorem 3.1]{MPAG2}).\par
By \eqref{HG}, one gets $\cH+A^{*}=(\cH-\lambdastar)(1-G)+\lambdastar$, which implies that the operator $(\cH+A^{*})|\S$ is $\F$-valued. Hence, $H_{S}$ is $\F$-valued as well.\par
$H_{S}$ is bounded-from-below since $S$ is $H$-small, $\|S\psi\|\le c\|H\psi\|+b\|\psi\|$. Indeed, by \eqref{krf} evaluated at $z=\lambda$ and
by \cite[relations (3.10) and (3.11)]{MPAG2}, $\lambda\in \RE\cap\varrho(H_{S})\cap\varrho(H)$ whenever 
$$
{\mathbb T}_{\lambda}:=\begin{bmatrix} T_{S_{\lambda}}&1-G_{\lambda}^{*}\\
1-G_{\lambda}&-R_{\lambda}
\end{bmatrix}= \begin{bmatrix} 1-G_{\lambda}^{*}&0\\
0&1
\end{bmatrix}
{\mathbb S}_{\lambda}
\begin{bmatrix} 1-G_{\lambda}&0\\
0&1
\end{bmatrix},\qquad {\mathbb S}_{\lambda}:=\begin{bmatrix} S_{\lambda}&1\\
1&-R_{\lambda}
\end{bmatrix},$$
has a bounded inverse; here, $S_{\lambda}$ is an explicit symmetric operator such that $S- S_{\lambda}\in\B(\F)$. By \cite[Lemma 2.4]{MPAG2}, $1-G_{\lambda}$ and $1-G_{\lambda}^{*}$ have bounded inverses whenever $\lambda$ is sufficiently below $\sigma(H)$. 
Since the second Schur complement of  ${\mathbb S}_{\lambda}$ is $-(H-S_{\lambda})+\lambda$,
${\mathbb S}_{\lambda}$ has a bounded inverse whenever $(-(H-S_{\lambda})+\lambda)^{-1}\in\B(\F)$. Since  $S_{\lambda}$ is $H$-small, $\|S_{\lambda}\psi\|\le c\|H\psi\|+b_{\lambda}\|\psi\|$, with $b_{\lambda}$ growing sub-linearly as $|\lambda|\nearrow+\infty$ (see the proof of \cite[Theorem 3.1]{MPAG2}), 
${\mathbb T}_{\lambda}$ has a bounded inverse whenever $\lambda$ is sufficiently below $\sigma(H)$ (see \cite[Remark 2.12]{MPAG2}).
\end{remark}
\begin{remark} Since $R\in\B(\H^{s-1},\H^ {s})$, one has $AR\in \B(\H^ {s-1},\F)$ whenever $A\in\B(\H^ {s},\F)$. Then, by duality, $(AR)^{*}=G\in\B(\F,\H^ {1-s})$ and so
$$
\S\subseteq\H^ {1-s}\,.
$$   
Moreover, by \eqref{H3},
$$
\S\cap\H =\ker(G)=\ran(AR)^{\perp}\,.
$$
Hence, since $R:\F\to\H $ is a continuous bijection,
$$
\S\cap\H =\{0\}\quad\Leftrightarrow\quad \text{$\ran(A)$ is dense.}
$$ 
\end{remark}
\subsection{The abstract Lee model}\label{s:abstractlee}
Here, we apply the results of the previous subsection to the case 
\begin{equation*} \label{HAALambda}
\F=\F_{1}^{\circ}\oplus \F_{2}^{\circ}\,,\quad H=H_{1}^{\circ}\oplus H_{2}^{\circ}
\equiv\begin{bmatrix}H_{1}^{\circ}&0\\
0&H_{2}^{\circ}\end{bmatrix}\,,\quad A:=\begin{bmatrix}0&0\\
a&0\end{bmatrix}.
\end{equation*}
Here,  both $H_{1}^{\circ}$ and $ H_{2}^{\circ}$ are bounded from below self-adjoint operators, $\lambda_{k,\star}<\inf\sigma(H_{k}^{\circ})$, and $a\in\B(\H^{\circ}_{1},\F^{\circ}_{2})$. The superscript ${\,}^{\circ}$ indicates objects defined as in the previous section but regarding $H_{k}^{\circ}$ and $a$; in particular, we define 
\begin{equation*}
G_{z}^{\circ}:=(a(-H_{1}^{\circ}+\bar z)^{-1})^{*}=R_{1,z}^{\circ}a^*\in\B(\F_{2}^{\circ},\F^{\circ}_{1})\,,\qquad 
z\in\varrho(H^{\circ}_{1})\,,
\end{equation*}  
and use the shorthand notation $G^{\circ}\equiv G_{\lambda_{1,\star}}^{\circ}$.

By the first resolvent identity, there follows
\begin{equation}\label{utile}
a(G_{w}^{\circ}-G_{z}^{\circ}) = (z-w)aR_{1,w}^{\circ}G_{z}^{\circ}=
 (z-w)aR_{1,z}^{\circ}G_{w}^{\circ}\in\B(\F_{2}^{\circ})\,,\qquad 
z,w\in\varrho(H^{\circ}_{1})\,.
\end{equation}

\par 
By the above definitions, setting $R^{\circ}_{k,z}:=(-H^{\circ}_{k}+z)^{-1}$ with $z\in\varrho(H^{\circ}_{k})$, one gets, for any $z\in\varrho(H^{\circ}_{1})$,
\begin{equation*}
AR_{z}=\begin{bmatrix}0&0\\
aR_{1,z}^{\circ}&0\end{bmatrix}\,,\qquad G_{z}=\begin{bmatrix}0&G_{z}^{\circ}\\
0&0\end{bmatrix}\qquad
1-G_{z}=\begin{bmatrix}1&-G_{z}^{\circ}\\
0&1\end{bmatrix}.
\end{equation*}
Notice that, even if $AR_{z}$ should a priori be defined only for any $z\in \varrho(H_{1}^{\circ}\oplus H_{2}^{\circ})=\varrho(H_{1}^{\circ})\cap\varrho( H_{2}^{\circ})$ it is indeed defined for any $z\in \varrho(H_{1}^{\circ})$. Consequently, in contrast with the definition in Subsection \ref{building}, here we define $G:=G_{\lambda_{1,\star}}$.
One has 
\begin{equation*}
\ran(G)\cap(\H_{1}^{\circ}\oplus \H_{2}^{\circ})=\{0\}\quad\Leftrightarrow\quad\ran(G^{\circ})\cap\H_{1}^{\circ}=\{0\}
\end{equation*}
and, by \cite[Lemma 2.5]{MPAG1}, 
\be\label{dense}
\text{$\ker(a)$ is dense in $\F_{1}^{\circ}$}\quad\Rightarrow\quad\ran(G^{\circ})\cap\H_{1}^{\circ} =\{0\}\,.
\ee
Notice that,  for any $z\in\varrho(H^{\circ}_{1})$,
\be\label{inv}
(1-G_{z})^{-1}=\begin{bmatrix}1&-G_{z}^{\circ}\\
0&1\end{bmatrix}^{-1}=\begin{bmatrix}1&G_{z}^{\circ}\\
0&1\end{bmatrix},\qquad 
(1-G_{z}^{*})^{-1}=\begin{bmatrix}1&0\\-{G_{z}^{\circ}}^{*}&1\end{bmatrix}^{-1}=
\begin{bmatrix}1&0\\{G_{z}^{\circ}}^{*}&1\end{bmatrix}.
\ee
Then, Theorem \ref{real} gives
\begin{theorem}\label{teo} Let $S_{2}^{\circ}$ be a symmetric, $H_{2}^{\circ}$-small operator
and let $a\in\B(\H_{1}^{\circ,s},\F_{2}^{\circ})$ for some $0<s<1$ such that  $\ker(a)$ is dense in $\F_{1}^{\circ}$. Then,   
\begin{equation}\label{HS1}
H_{S_{2}^{\circ}}:\S^{\circ}\subseteq\F_{1}^{\circ}\oplus\F_{2}^{\circ}\to\F_{1}^{\circ}\oplus\F_{2}^{\circ}\,,
\end{equation}
\begin{align}\S^{\circ}:=&\{\psi_{1}\oplus\psi_{2}\in\F_{1}^{\circ}\oplus\H_{2}^{\circ}:\psi_{1}-G^{\circ}\psi_{2}\in\H_{1}^{\circ}\} \label{HS2} \\
=&\{\psi_{1}\oplus\psi_{2}\in\F_{1}^{\circ}\oplus\H_{2}^{\circ}:\overline {H_{1}^{\circ}}\psi_{1}+a^{*}\psi_{2}\in\F_{1}^{\circ}\}
\,,\nonumber
\end{align}
\begin{equation}\label{HS4}
H_{S_{2}^{\circ}}(\psi_{1}\oplus\psi_{2}):=(\,\overline{H_{1}^{\circ}}\psi_{1}+a^{*}\psi_{2})\oplus((H_{2}^{\circ}-S_{2}^{\circ})\psi_{2}+ a(\psi_{1}-G^{\circ}\psi_{2}))
\end{equation}
is a bounded-from-below self-adjoint operator.
\end{theorem}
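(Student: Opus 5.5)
We obtain Theorem~\ref{teo} by specializing Theorem~\ref{real} to $H=H_1^\circ\oplus H_2^\circ$, $A=\begin{bmatrix}0&0\\ a&0\end{bmatrix}$ and $S:=0\oplus S_2^\circ$. Three things have to be checked: the hypotheses of Theorem~\ref{real}, the identification of $\S$ with $\S^\circ$ in both descriptions \eqref{HS2}, and the identification of the action of $H_S=\cH+A^*+A_S$ with \eqref{HS4}.

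\emph{Hypotheses.} The scale for $H$ is $\H^{s}=\H_1^{\circ,s}\oplus\H_2^{\circ,s}$, so $A\in\B(\H^s,\F)$ follows at once from $a\in\B(\H_1^{\circ,s},\F_2^\circ)$. The operator $S=0\oplus S_2^\circ$ is symmetric. It is $H$-small because $\|S_2^\circ\psi_2\|\le c\|H_2^\circ\psi_2\|+b\|\psi_2\|\le c\|H\psi\|+b\|\psi\|$. For condition \eqref{H3}, we use the equivalence stated just before \eqref{dense}, that $\ran(G)\cap\H=\{0\}$ iff $\ran(G^\circ)\cap\H_1^\circ=\{0\}$; the latter holds by \eqref{dense}, since $\ker(a)$ is dense. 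One point needs a brief justification. In this subsection $G$ is defined as $G_{\lambda_{1,\star}}$ rather than at a point below $\sigma(H)$. The proof of Theorem~\ref{real} only uses $\lambda_\star<\inf\sigma(H_1^\circ)$ for the objects involving $A$, because $AR_z$ is defined on all of $\varrho(H_1^\circ)$. If this shortcut is not acceptable, we instead take $\lambda_\star<\min(\lambda_{1,\star},\inf\sigma(H_2^\circ))$. Then $G_{\lambda_\star}-G_{\lambda_{1,\star}}$ maps into $\H$ by the resolvent identity, so the domain $\S$ does not change. The extra bounded term $A(G_{\lambda_\star}-G_{\lambda_{1,\star}})$ that this produces sits in the $(2,2)$ entry; it is bounded and symmetric and can be absorbed into $S_2^\circ$, which preserves $H$-smallness.

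\emph{Domain.} Since $G\psi=G^\circ\psi_2\oplus0$, the condition $\psi-G\psi\in\H$ says exactly that $\psi_2\in\H_2^\circ$ and $\psi_1-G^\circ\psi_2\in\H_1^\circ$. This is the first description in \eqref{HS2}. For the second description, we apply \eqref{HG} in the component form $(-\overline{H_1^\circ}+\lambda_{1,\star})G^\circ=a^*$, which gives
\[
\overline{H_1^\circ}\psi_1+a^*\psi_2=\overline{H_1^\circ}(\psi_1-G^\circ\psi_2)+\lambda_{1,\star}G^\circ\psi_2 .
\]
Since $G^\circ\psi_2\in\F_1^\circ$ and $\overline{H_1^\circ}$ maps $\F_1^\circ$ bijectively onto $\H_1^{\circ*}$ with $\H_1^\circ$ as the preimage of $\F_1^\circ$, the left side lies in $\F_1^\circ$ iff $\psi_1-G^\circ\psi_2\in\H_1^\circ$.

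\emph{Action.} We have $A^*=\begin{bmatrix}0&a^*\\0&0\end{bmatrix}$, so $\cH\psi+A^*\psi=(\overline{H_1^\circ}\psi_1+a^*\psi_2)\oplus H_2^\circ\psi_2$. Next, $(1-G)\psi=(\psi_1-G^\circ\psi_2)\oplus\psi_2$, and therefore $A(1-G)\psi=0\oplus a(\psi_1-G^\circ\psi_2)$. For the last term, $S(1-G)\psi=0\oplus S_2^\circ\psi_2$. Using \eqref{inv}, $1-G^*=\begin{bmatrix}1&0\\-{G^\circ}^*&1\end{bmatrix}$, and this matrix sends $0\oplus S_2^\circ\psi_2$ to $0\oplus S_2^\circ\psi_2$. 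Summing the three contributions gives \eqref{HS4}.

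The only delicate step is the bookkeeping of the point at which $G$ is evaluated, handled as above; everything else is direct substitution. Self-adjointness and boundedness from below are then inherited from Theorem~\ref{real}.
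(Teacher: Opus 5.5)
Your argument is correct and is essentially the paper's. Your fallback is exactly what the paper does: apply Theorem~\ref{real} at $\lambdastar=\min\{\lambda_{1,\star},\lambda_{2,\star}\}$, note that $G^\circ-G^\circ_{\lambdastar}$ maps into $\H_1^\circ$ so the domain is unchanged, and absorb the bounded symmetric term $a(G^\circ-G^\circ_{\lambdastar})$ into $S_2^\circ$. The unverifiable ``shortcut'' is therefore not needed, and your checks of the hypotheses, of both descriptions of $\S^\circ$, and of the action are routine verifications that the paper leaves implicit.

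One small slip: $\overline{H_1^\circ}$ need not map $\F_1^\circ$ bijectively onto $\H_1^{\circ*}$; this fails when $0\in\sigma(H_1^\circ)$, as with the Fock vacuum. What you actually use still holds. If $\phi\in\F_1^\circ$ and $\overline{H_1^\circ}\phi\in\F_1^\circ$, then $\phi\in\H_1^\circ$, because $(\overline{H_1^\circ}-\lambda_{1,\star})^{-1}$ maps $\F_1^\circ$ into $\H_1^\circ$.
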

\begin{proof} The only apparent discrepancy with Theorem \ref{real} is our representation of $\S^{\circ}$ 
in terms of $G^{\circ}=G^{\circ}_{\lambda_{1,\star}}$ while, according to the setting in Subsection \ref{building}, we should use $G_{\star}:=G_{\lambdastar}$, where $\lambdastar=\min\{\lambda_{1,\star},\lambda_{2,\star}\}$. However, by the resolvent identity, $G^{\circ}-G^{\circ}_{\star}=(\lambdastar-\lambda_{1,\star})R^{\circ}_{1,\lambdastar}G^{\circ} \in \B(\F^{\circ}_{2},\H^{\circ}_{1})$ and so this change does not modify $\S^{\circ}$. Furthermore, by
$$
(H_{2}^{\circ}-S_{2}^{\circ})\psi_{2}+ a(\psi_{1}-G^{\circ}\psi_{2})=
(H_{2}^{\circ}-S_{2}^{\circ}-a(G^{\circ}-G^{\circ}_{\star}))\psi_{2}+ a(\psi_{1}-G_{\star}^{\circ}\psi_{2})\,,
$$ 
then it suffices to use Theorem \ref{real} with 
\begin{equation*}\label{tilde-S}
S:=\begin{bmatrix}0&0\\
0&\widetilde S_{2}^{\circ}\end{bmatrix},\qquad \widetilde S_{2}^{\circ}:=S_{2}^{\circ}+a(G^{\circ}-G^{\circ}_{\star})=S_{2}^{\circ}+(\lambdastar-\lambda_{1,\star})G_{\star}^{\circ*}G^{\circ}\,. 
\end{equation*}
\end{proof}
\begin{remark} By $\ran(G^{\circ})\cap\H_{1}^{\circ}=\{0\}$, which is a consequence of the denseness of $\ker (a)$ (see \eqref{dense}), one has $\S^{\circ}\cap (\H_{1}^{\circ}\oplus\H_{2}^{\circ})=\S^{\circ}\cap\dom\bigl(H_{1}^{\circ}\oplus H_{2}^{\circ}\bigr)=\H_{1}^{\circ}\oplus(\H_{2}^{\circ}\cap\ker(G^\circ))$.  In particular, the last factor is $\{0\}$ whenever $G^\circ$ is injective.
\end{remark}
\begin{remark}\label{rem} By 
$$
\S^{\circ}=\{\psi_{1}\oplus\psi_{2}\in\F_{1}^{\circ}\oplus\H_{2}^{\circ}:\psi_{1}=\varphi_{1}+G^{\circ}\psi_{2}\,,\ \varphi_{1}\in\H_{1}^{\circ}\}\,,
$$
and by $$\overline{H_{1}^{\circ}}G^{\circ}=
-(-\overline{H_{1}^{\circ}}+\lambda_{1,\star})G^{\circ}+\lambda_{1,\star} G^{\circ}=-a^{*}+
\lambda_{1,\star} G^{\circ}\,,
$$
one obtains the alternative representation
\begin{align*}
H_{S_{2}^{\circ}}(\psi_{1}\oplus\psi_{2})=&(\,H_{1}^{\circ}\varphi_{1}+\overline{H_{1}^{\circ}}G^{\circ}\psi_{2}+a^{*}\psi_{2})\oplus((H_{2}^{\circ}-S_{2}^{\circ})\psi_{2}+ a\varphi_{1})\\
=&(\,H_{1}^{\circ}\varphi_{1}+\lambda_{1,\star} G^{\circ}\psi_{2})\oplus((H_{2}^{\circ}-S_{2}^{\circ})\psi_{2}+ a\,\varphi_{1})\,.
\end{align*}
\end{remark}
\subsection{The resolvent of the Lee model} By the Schur complement, elaborating on the general Kre\u\i n-type resolvent formula \eqref{krf}, one gets the resolvent for the Lee Hamiltonian $H_{S_{2}^{\circ}}$ defined in \eqref{HS1}-\eqref{HS4} (compare, in the case $\F_{1}^{\circ}=\F_{2}^{\circ}$ and $H_{1}^{\circ}=H_{2}^{\circ}$, with \cite[equation (85)]{Raj}). Here, we give a more direct derivation.
\begin{theorem}\label{Res-Lee} For any $z\in\varrho(H_{1}^{\circ})\cap\varrho(H_{S_{2}^{\circ}})$, there holds
\be\label{Raj}
(-H_{S_{2}^{\circ}}+z)^{-1}=\begin{bmatrix}R_{1,z}^{\circ}+G^{\circ}_{z}M_{z}^{-1}{G^{\circ}_{\bar z}}^{*}&G^{\circ}_{z}M_{z}^{-1}\\M_{z}^{-1}{G^{\circ}_{\bar z}}^{*}&M_{z}^{-1}\end{bmatrix}\,,
\ee
where
\be\label{Mz}
M_{z}\equiv M^\circ_{z} :=-H_{2}^{\circ}+S_{2}^{\circ}+a(G^{\circ}-G^{\circ}_{z})+z=-H_{2}^{\circ}+S_{2}^{\circ}+(z-\lambda_{1,\star})\,{G^{\circ}}^{*}G^{\circ}_{z}+z\,,
\ee

\begin{equation*}
\dom(M_{z}) := \H_{2}^{\circ}.
\end{equation*}

\end{theorem}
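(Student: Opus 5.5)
The plan is to solve the resolvent equation directly rather than going through the block formula \eqref{krf}. Fix $z\in\varrho(H_{1}^{\circ})\cap\varrho(H_{S_{2}^{\circ}})$ and $f_1\oplus f_2\in\F_{1}^{\circ}\oplus\F_{2}^{\circ}$. We look for $\psi_1\oplus\psi_2\in\S^{\circ}$ with $(-H_{S_{2}^{\circ}}+z)(\psi_1\oplus\psi_2)=f_1\oplus f_2$. By \eqref{HS4} this is the pair of equations
\begin{align*}
(-\overline{H_{1}^{\circ}}+z)\psi_1-a^{*}\psi_2&=f_1,\\
(-H_{2}^{\circ}+S_{2}^{\circ}+z)\psi_2-a(\psi_1-G^{\circ}\psi_2)&=f_2.
\end{align*}

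\emph{Step 1: solve the first equation for $\psi_1$.} The relation $(-\overline{H_{1}^{\circ}}+z)G^{\circ}_z=a^{*}$ (the analogue of \eqref{HG}), together with invertibility of $(-\overline{H_{1}^{\circ}}+z):\F_1^\circ\to\H_1^{\circ*}$, gives
$$\psi_1=R^{\circ}_{1,z}f_1+G^{\circ}_z\psi_2 .$$
Then
$$\psi_1-G^{\circ}\psi_2=R^{\circ}_{1,z}f_1+(G^{\circ}_z-G^{\circ})\psi_2 .$$
By the resolvent identity, $G^{\circ}_z-G^{\circ}\in\B(\F_2^\circ,\H_1^\circ)$. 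So whenever $\psi_2\in\H_2^\circ$, the pair lies in $\S^{\circ}$ automatically.

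\emph{Step 2: substitute into the second equation.} This yields
$$M_z\psi_2-aR^{\circ}_{1,z}f_1=f_2,\qquad M_z=-H_2^\circ+S_2^\circ+a(G^\circ-G^\circ_z)+z .$$
Since $aR^{\circ}_{1,z}=(G^{\circ}_{\bar z})^{*}$, the equation becomes
$$M_z\psi_2=f_2+{G^{\circ}_{\bar z}}^{*}f_1 .$$
The second expression for $M_z$ in \eqref{Mz} follows from \eqref{utile} with $w=\lambda_{1,\star}$, using $aR^\circ_{1,\lambda_{1,\star}}=G^{\circ*}$.

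\emph{Step 3: invertibility of $M_z$.} This is the main point: we must show that $M_z:\H_2^\circ\to\F_2^\circ$ is bijective for $z\in\varrho(H_{S_{2}^{\circ}})$, so that the formula makes sense.

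For injectivity, suppose $M_z\psi_2=0$ with $\psi_2\in\H_2^\circ$. Then $\psi_1:=G^\circ_z\psi_2$ gives $\psi_1\oplus\psi_2\in\S^\circ$, and by Step 2 it lies in $\ker(-H_{S_2^\circ}+z)=\{0\}$. Hence $\psi_2=0$.

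For surjectivity, given $f_2$, set $\psi:=(-H_{S_2^\circ}+z)^{-1}(0\oplus f_2)\in\S^\circ$. Its components satisfy the two equations above with $f_1=0$. Steps 1–2 then force $\psi_2\in\H_2^\circ$ (by definition of $\S^\circ$) and $M_z\psi_2=f_2$.

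Consequently $M_z^{-1}$ exists as a map $\F_2^\circ\to\H_2^\circ$. Its boundedness in $\B(\F_2^\circ)$ follows from the closed graph theorem, or directly from the identity $M_z^{-1}f_2=[(-H_{S_2^\circ}+z)^{-1}(0\oplus f_2)]_2$.

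\emph{Step 4: assemble the resolvent.} We have
$$\psi_2=M_z^{-1}\bigl({G^{\circ}_{\bar z}}^{*}f_1+f_2\bigr),\qquad \psi_1=R^\circ_{1,z}f_1+G^\circ_z\psi_2 .$$
Reading off the block entries gives exactly \eqref{Raj}, with uniqueness guaranteed by injectivity of $-H_{S_2^\circ}+z$.

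The only delicate points are Step 3 and checking that all manipulations take place in the correct spaces. In particular, $a$ extends from $\H_1^\circ$ only in the combination $a(\psi_1-G^\circ\psi_2)$ with argument in $\H_1^\circ$, which is why we keep the form $a(G^\circ-G^\circ_z)$ throughout.
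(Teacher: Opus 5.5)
Your proof is correct and is essentially the paper's argument. Your substitution $\psi_1=R^{\circ}_{1,z}f_1+G^{\circ}_z\psi_2$ is the paper's triangular change of variables $\mathcal T_z$, and your elimination is the factorization $\mathcal T_{\bar z}^{*}(-H_{S_2^\circ}+z)\mathcal T_z=\mathrm{diag}(-H_1^\circ+z,\,M_z)$ written out row by row. The paper's factorization also gives the converse $0\in\varrho(M_z)\Rightarrow z\in\varrho(H_{S_2^\circ})$, whereas your Step 3 checks more explicitly the bijectivity of $M_z$, which is the direction the statement needs.
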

\begin{proof}
Fix $z\in\varrho(H_{1}^{\circ})$ and set

\begin{equation*}
 \mathcal T_z:=
 \begin{pmatrix}1&G_z^\circ\\0&1\end{pmatrix}.
\end{equation*}
$\mathcal T_z$ is bounded and boundedly invertible on
$\F_1^\circ\oplus\F_2^\circ$. Furthermore, since resolvent identity implies that $G_z^\circ-G^\circ$ maps
$\F_2^\circ$ into $\H_1^\circ$, $\mathcal T_z$ maps
$\H_1^\circ\oplus\H_2^\circ$ bijectively onto $\S^\circ$.

 For
$\varphi\in\H_1^\circ$ and $\xiq
\in\H_2^\circ$, the domain and action
in \eqref{HS2}--\eqref{HS4} give
\begin{equation*}
 (-H_{S_2^\circ}+z)
 \bigl((\varphi+G_z^\circ \xiq
 )\oplus \xiq
 \bigr)
 =
 (-H_1^\circ+z)\varphi
 \oplus\bigl(-a\varphi+M_z\xiq
 \bigr).
\end{equation*}
Since $
G_{\bar z}^{\circ*}(-H_1^\circ+z)\varphi =
aR_{1,z}^\circ(-H_1^\circ+z)\varphi=a\varphi$, this yields the
factorization
\begin{equation}\label{eq:abstract-Lee-factorization}
\mathcal T_{\bar z}^*
 (-H_{S_2^\circ}+z)\mathcal T_z
 =
 \begin{pmatrix}
  -H_1^\circ+z&0\\
  0&M_z
 \end{pmatrix}
\end{equation}
on $\H_1^\circ\oplus\H_2^\circ$.  Because the first diagonal entry
is boundedly invertible, \eqref{eq:abstract-Lee-factorization} shows that
$z\in\varrho(H_{S_2^\circ})$ if and only if $0\in\varrho(M_z)$.
Under the hypotheses of the theorem, taking inverses gives
\begin{equation}\label{eq:abstract-spectral-factorization_NEW}
 (-H_{S_2^\circ}+z)^{-1}
 =
 \mathcal T_z
 \begin{pmatrix}R_{1,z}^\circ&0\\0&M_z^{-1}\end{pmatrix}
 \mathcal T_{\bar z}^*
\end{equation}
Finally, $aR_{1,z}^\circ=(G_{\bar z}^\circ)^*$, and multiplication
of the three block matrices gives \eqref{Raj}.
\end{proof}
\begin{remark}\label{r:3.9} Since $S_{2}^{\circ}$ is $H_{2}^{\circ}$-small and by \eqref{utile}, $M_z$ is well defined on $\H_{2}^{\circ}$. Furthermore, by  
\eqref{Mz}, there follows, for any $z,w\in \varrho(H_{1}^{\circ})$, 
\begin{equation*}
M_{z}-M_{w}=z-w+a(G^{\circ}_{w}-G^{\circ}_{z})=(z-w)\big(1+G^{\circ*}_{\bar w}G^{\circ}_{z}\big)
\,.
\end{equation*}
Since $\|G_{\bar\zeta}^{\circ}\|=\|G_{\zeta}^{\circ}\|$ by the spectral calculus and $\varrho(H_{1}^{\circ})\ni\zeta\mapsto G^{\circ}_{\zeta}$ is analytic (hence continuous) $\B(\F_{2}^{\circ},\F_{1}^{\circ})$-valued, for any compact $K\subset \varrho(H_{1}^{\circ})$ and for any $z,w\in K$ one has 
\begin{equation*}
M_{z}-M_{w}\in\B(\F_{2}^{\circ})\qquad\text{and}\qquad\|M_{z}-M_{w}\|\le |z-w|\,\left(1+\sup_{\zeta\in K}\|G^{\circ}_{\zeta}\|^{2}\right)\,.
\end{equation*}
\end{remark}
\subsection{Spectral reduction\label{ss:ee}} The next theorem shows how the spectrum of the abstract Lee model  relates to the  spectrum of $M_{z}$.
\begin{theorem}\label{teo-sp}
Let $H_{S_{2}^{\circ}}$ be the bounded-from-below self-adjoint operator in
Theorem \ref{teo}, and let $M_z$ be defined by \eqref{Mz}.  For every
$\lambda\in\RE\cap\varrho(H_1^\circ)$,
\begin{equation}\label{iff-rho}
 \lambda\in\varrho(H_{S_2^\circ})
 \quad\Longleftrightarrow\quad
 0\in\varrho(M_\lambda),
\end{equation}
and
\begin{equation}\label{iff-sp}
 \lambda\in\sigma_{\natural}(H_{S_2^\circ})
 \quad\Longleftrightarrow\quad
 0\in\sigma_{\natural}(M_\lambda),
 \qquad \natural=p,c,d,\mathrm{ess}.
\end{equation}
Both
residual spectra are empty, since the two operators are self-adjoint.

Moreover,
\begin{equation}\label{Ml}
 \ker(H_{S_2^\circ}-\lambda)
 =
 \left\{G_\lambda^\circ \xiq
 \oplus \xiq
 :\xiq
 \in\ker(M_\lambda)\right\},
\end{equation}
and the map $\xiq
\mapsto G_\lambda^\circ \xiq
\oplus \xiq
$ is a linear isomorphism
between the two kernels.  In particular,
\begin{equation}\label{dimkerid}
 \dim\ker(H_{S_2^\circ}-\lambda)=\dim\ker(M_\lambda).
\end{equation}

Ordinary and singular Weyl sequences can be transferred in both directions.
More precisely, if $\{\Psi_j\}$ is such a Weyl sequence for
$H_{S_2^\circ}$ at $\lambda$ and
\begin{equation*}
 (\varphi_j,\xiq
 _j):=( \mathcal T_\lambda
 )^{-1}\Psi_j,
\end{equation*}
then $\varphi_j\to0$, $\liminf_j\|\xiq
_j\|>0$, and
$\{\xiq
_j/\|\xiq
_j\|\}$ is a Weyl sequence of the same kind for $M_\lambda$ at
zero.  Conversely, if $\{\xiq
_j\}$ is an ordinary or singular Weyl sequence for
$M_\lambda$ at zero, then
\begin{equation*}
 \left\{
 \frac{G_\lambda^\circ \xiq
 _j\oplus \xiq
 _j}
 {\|G_\lambda^\circ \xiq
 _j\oplus \xiq
 _j\|}
 \right\}_j
\end{equation*}
is a Weyl sequence of the same kind for $H_{S_2^\circ}$ at $\lambda$.
\end{theorem}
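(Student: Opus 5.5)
The whole argument rests on the factorization \eqref{eq:abstract-Lee-factorization} from the proof of Theorem~\ref{Res-Lee}, taken at real $z=\lambda\in\varrho(H_1^\circ)$. In that case $\mathcal T_{\bar\lambda}^*=\mathcal T_\lambda^*$. Both $\mathcal T_\lambda$ and $\mathcal T_\lambda^*$ are bounded and boundedly invertible on $\F_1^\circ\oplus\F_2^\circ$. Moreover, $\mathcal T_\lambda$ maps $\H_1^\circ\oplus\H_2^\circ$ bijectively onto $\S^\circ$. This yields
\[
-H_{S_2^\circ}+\lambda=(\mathcal T_\lambda^*)^{-1}\,\bigl((-H_1^\circ+\lambda)\oplus M_\lambda\bigr)\,\mathcal T_\lambda^{-1},
\]
and $(-H_1^\circ+\lambda)$ is boundedly invertible from $\H_1^\circ$ onto $\F_1^\circ$.

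\emph{Resolvent set and kernels.} Statement \eqref{iff-rho} follows from the displayed identity, as already observed in the proof of Theorem~\ref{Res-Lee}. For \eqref{Ml}, note that $\Psi\in\ker(H_{S_2^\circ}-\lambda)$ if and only if $\mathcal T_\lambda^{-1}\Psi=(\varphi,\xiq)$ satisfies $(-H_1^\circ+\lambda)\varphi=0$ and $M_\lambda\xiq=0$. The first condition forces $\varphi=0$. Hence $\Psi=G^\circ_\lambda\xiq\oplus\xiq$, and the restriction of $\mathcal T_\lambda$ gives the isomorphism of kernels together with \eqref{dimkerid}. The case $\natural=p$ is then immediate.

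\emph{Weyl sequences.} This is the core step. Let $\{\Psi_j\}$ be a Weyl sequence with $\|\Psi_j\|=1$ and $(H_{S_2^\circ}-\lambda)\Psi_j\to0$, and write $(\varphi_j,\xiq_j)=\mathcal T_\lambda^{-1}\Psi_j$. Applying $\mathcal T_\lambda^*$ gives $(-H_1^\circ+\lambda)\varphi_j\to0$ and $M_\lambda\xiq_j\to0$. Bounded invertibility of $-H_1^\circ+\lambda$ then gives $\varphi_j\to0$. Since $\|\Psi_j\|=1$ and $\mathcal T_\lambda$ is bounded, we get $\liminf\|\xiq_j\|>0$ (otherwise $\Psi_j=\mathcal T_\lambda(\varphi_j,\xiq_j)\to0$ along a subsequence). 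Normalizing gives a Weyl sequence for $M_\lambda$ at $0$. Weak convergence is preserved because $\mathcal T_\lambda^{-1}$ is bounded, so $\xiq_j\rightharpoonup0$, and dividing by norms bounded below keeps the limit $0$. Conversely, given $\xiq_j$, the vector $\mathcal T_\lambda(0,\xiq_j)=G_\lambda^\circ\xiq_j\oplus\xiq_j$ has norm $\ge\|\xiq_j\|=1$. Its image under $-H_{S_2^\circ}+\lambda$ equals $(\mathcal T_\lambda^*)^{-1}(0\oplus M_\lambda\xiq_j)\to0$, and weak convergence again transfers because $\mathcal T_\lambda$ is bounded. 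Since both operators are self-adjoint (for $M_\lambda$, symmetry comes from $\lambda$ real and $M_\lambda^*=M_{\bar\lambda}$, using $0\in\varrho(M_\lambda)$ for suitable $\lambda$), Weyl's criterion gives $\natural=\mathrm{ess}$.

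\emph{Remaining components.} The case $\natural=d$ is obtained as the complement of $\mathrm{ess}$ within $\sigma$, and $\natural=c$ as the complement of $p$ within $\sigma$; residual spectra are empty by self-adjointness. The step I expect to need the most care is justifying that $M_\lambda$ on $\H_2^\circ$ is self-adjoint, since Weyl's criterion requires this. I would get it from \eqref{Raj}: $M_z^{-1}$ is the $(2,2)$ block of a self-adjoint resolvent, so $(M_z^{-1})^*=M_{\bar z}^{-1}$. Combined with Remark~\ref{r:3.9}, which says that $M_\lambda-M_z$ is bounded and symmetric for real $\lambda,z$, this shows that every $M_\lambda$ is a bounded symmetric perturbation of a self-adjoint operator, and hence is self-adjoint.
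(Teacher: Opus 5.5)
Your proposal is correct and follows essentially the same route as the paper: the triangular factorization at $z=\lambda$ gives the resolvent, kernel and point/continuous equivalences, and Weyl sequences are transferred through $\mathcal T_\lambda$ exactly as you describe. The one difference is how you show $M_\lambda$ is self-adjoint. The paper gets this directly from Kato--Rellich ($S_2^\circ$ is $H_2^\circ$-small and $a(G^\circ-G^\circ_\lambda)$ is bounded and symmetric for real $\lambda$), which is shorter than your detour through the $(2,2)$ block of the resolvent at a real point below $\inf\sigma(H_{S_2^\circ})$, although your argument is also valid.
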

\begin{proof}
For fixed $\lambda\in\RE\cap\varrho(H_1^\circ)$, $M_\lambda$ is
self-adjoint on $\H_2^\circ$: indeed, $S_2^\circ$ is $H_2^\circ$-small,
whereas $a(G^\circ-G_\lambda^\circ)$ is bounded and symmetric.  For every
$\lambda\in\RE\cap\varrho(H_1^\circ)$ the bounded invertibility of $\mathcal T_\lambda$ and \eqref{eq:abstract-Lee-factorization} with $z=\lambda$ give 
  \eqref{iff-rho},
\eqref{Ml}, and the equivalence of injectivity, density of the range, and
closedness of the range.  These properties yield equivalence between the point, continuous,
and full spectrum.  The kernel isomorphism gives
\eqref{dimkerid}.

It remains to justify explicitly the assertions concerning the
essential spectrum and Weyl sequences.  If $\{\xiq
_j\}$ is a normalized
singular Weyl sequence for $M_\lambda$ at zero, then
$G_\lambda^\circ \xiq
_j\rightharpoonup0$, and
\begin{equation*}
 1\leq\|G_\lambda^\circ \xiq
 _j\oplus \xiq
 _j\|
 \leq\sqrt{1+\|G_\lambda^\circ\|^2}.
\end{equation*}
The action formula above with $\varphi=0$ therefore proves that the normalized
lift is a singular Weyl sequence for $H_{S_2^\circ}$.

Conversely, let $\{\Psi_j\}$ be a normalized singular Weyl sequence for
$H_{S_2^\circ}$ and write
$(\varphi_j,\xiq
_j)=(\mathcal T_\lambda
)^{-1}\Psi_j$.
The boundedness of $\mathcal T_\lambda$ and, again, \eqref{eq:abstract-Lee-factorization} with $z=\lambda$ 
imply
\begin{equation*}
 (H_1^\circ-\lambda)\varphi_j\longrightarrow0,
 \qquad
 M_\lambda \xiq
 _j\longrightarrow0.
\end{equation*}
Hence $\varphi_j\to0$.  Moreover,
$(\varphi_j,\xiq
_j)\rightharpoonup0$ because
$(\mathcal T_\lambda
)^{-1}$ is bounded.  If a subsequence of $\xiq
_j$
converged to zero in norm, then
$\Psi_j=\mathcal T_\lambda
(\varphi_j,\xiq
_j)$ would converge to zero along
that subsequence, contradicting $\|\Psi_j\|=1$.  Thus
$\liminf_j\|\xiq
_j\|>0$, and normalization gives a singular Weyl sequence for
$M_\lambda$ at zero.  The same argument without weak convergence allows to treat
ordinary Weyl sequences.  Weyl's criterion now gives the essential-spectrum
equivalence in \eqref{iff-sp}; together with the point-spectrum and
multiplicity statements, it also gives the discrete-spectrum equivalence.
\end{proof}
\begin{remark}\label{sss:vacuum} Suppose there exists $\Omega_{1}^{\circ}\in \H_{1}^{\circ}$, $\|\Omega_{1}^{\circ}\|=1$,   such that $a\,\Omega_{1}^{\circ}=H_{1}^{\circ}\Omega_{1}^{\circ}=0$. $\Omega_{1}^{\circ}$ plays the role of the vacuum in $\F_{1}^{\circ}$.  Then, $\Omega_{1}^{\circ}\oplus 0\in\S^{\circ}$ and $$H_{S^{\circ}_{2}}(\Omega_{1}^{\circ}\oplus 0)=0\oplus0\,.
$$ 
Hence, $0\in \sigma(H_{S^{\circ}_{2}})\cap \sigma(H_{1}^{\circ})$ and this result can not be obtained by solving \eqref{Ml}, i.e, is not contained in Theorem \ref{teo-sp}. \par 
\end{remark}
\begin{remark}
Suppose there exists $\psi_{2}^{\circ}\in \H_{2}^{\circ}\backslash\{0\}$   such that $H_{2}^{\circ}\psi_{2}^{\circ}=0$. Further suppose that 
$$
S_{2}^{\circ}\psi_{2}^{\circ}=s^{\circ}\psi_{2}^{\circ}\,,\qquad 
a(G^{\circ}-G^{\circ}_{\lambda})\psi_{2}^{\circ}=f^{\circ}(\lambda)\psi_{2}^{\circ}\,,
$$
where $s^{\circ}\in\RE$ and $f^{\circ}:\RE\cap\varrho(H_{1}^{\circ})\to\RE$. Then, $\lambda\in\RE\cap\varrho(H_{1}^{\circ})$ is an eigenvalue of $H_{S^{\circ}_{2}}$, with corresponding eigenvector $G_{\lambda}^{\circ}\psi_{2}^{\circ}\oplus\psi_{2}^{\circ}$, if and only if $\lambda$ solves the equation
\begin{equation*}\label{evEq}
\lambda+s^{\circ}+f^{\circ}(\lambda)=0\,.
\end{equation*}
\end{remark}

\section{The renormalized Lee model} 
\label{sec:model-sector-notation}

Here we specialize the results in Subsection \ref{s:abstractlee} to the case in which $\F^{\circ}_{1}=\F^{\circ}_{2}=\F^{\circ}$, so that $\F=\F^{\circ}\oplus \F^{\circ}\simeq\CO^{2}\otimes\F^{\circ}$, where $\F^{\circ}$ is the bosonic Fock space for the single-particle Hilbert space $L^{2}(\RE^{3})$, $H^{\circ}_{1}=H^{\circ}_{2}=H^{\circ}$, where $H^{\circ}$ is the free field Hamiltonian and $a$ is the standard annihilation operator.
\par
We use lower case letters for points in $\RE^3$, so that $k,k_1, \dots , k_r\in \RE^3$; while capital letters denote points in $\RE^{3r}$, $r\geq 2$, e.g.,  $K= (k_1,\dots,k_r)\in \RE^{3r}$.  For $\psi^{(r)}: \RE^{3r}\to \CO$  a measurable function, we use the notation $\psi^{(r)}(K)  = \psi^{(r)}(k_1,\dots, k_r)$. When no confusion can arise, we omit the superscript $(r)$ and simply write $\psi(K)$. 

For $K=(k_1,\ldots,k_r)$ and $j\in\{1,\ldots,r\}$, we write
\begin{equation*}
  \hat K_j
  :=(k_1,\ldots,k_{j-1},k_{j+1},\ldots,k_r)\in \RE^{3(r-1)}
\end{equation*}
for the tuple obtained by deleting the $j$-th variable, and, for $p\in\RE^3$,
\begin{equation*}
  \hat K_{j,p}
  :=(k_1,\ldots,k_{j-1},p,k_{j+1},\ldots,k_r)\in \RE^{3r}
\end{equation*}
for the tuple obtained by replacing the $j$-th variable with $p$. $\hat K_{j,p} = p$ for $r=1$.

We denote by $L_b^2(\RE^{3r})\subseteq L^2(\RE^{3r})$ the subspace of functions which enjoy  bosonic symmetry. For $r=0$,  we set $L^2_b(\RE^{0}) =L^2(\RE^{0}) = \CO$. 

\subsection{Fock space and free Hamiltonian}
We set $\F^{\circ} = \F_b$, where $\F_b$ denotes the bosonic Fock space
\begin{equation*}
  \F_b:=\bigoplus_{n=0}^{\infty}L_b^2(\RE^{3n}).
\end{equation*}  Given $\psi\in \F_b$ we denote by $\psi^{(n)} = \psi^{(n)}(K)$ its  component corresponding to the $n$-boson state, with the understanding that $\psi^{(0)} \in \CO$. We
write $\Omega^\circ= 1_\CO\oplus_{n=1}^\infty 0_{L_b^2(\RE^{3n})}$ for the Fock vacuum.  The one-boson
dispersion relation is
\begin{equation*}
  \omega(k):=|k|^2+m,
  \qquad m>0,
\end{equation*}
and the free boson Hamiltonian is
\begin{equation*}
  H^\circ:=\mathrm d\Gamma(\omega):\dom(H^\circ)\subseteq\F_{b}\to\F_{b},
\end{equation*}
\begin{equation*}
\dom(H^\circ):= \bigoplus_{n=0}^\infty L_b^2(\RE^{3n}, ((|K|^2 + n m)^2+1) dK),
\end{equation*}
The operator $H^\circ$ is self-adjoint, non-negative, and its spectrum is given by 
\begin{equation*}
\sigma(H^\circ) = \{0\} \cup  [m,+\infty), \; \sigma_{ess}(H^\circ) =\sigma_{ac}(H^\circ) =  [m,+\infty).  
\end{equation*}
$0$ is a simple eigenvalue with eigenvector $\Omega^\circ$.

The  restriction of $H^\circ$ to the $n$-boson subspace will be denoted by
$H^{\circ,(n)}$.  Thus, for $n\geq1$,
\begin{equation}\label{eq:free-sector-action}
  (H^{\circ,(n)}\varphi)(K)
  =\bigl(|K|^2+nm\bigr)\varphi(K),
  \qquad K=(k_1,\ldots,k_n)\in\RE^{3n},
\end{equation}
with maximal domain
\begin{equation}\label{eq:free-sector-domain}
  \dom(H^{\circ,(n)})
  =L_b^2(\RE^{3n}, (|K|^2 + n m)^2 dK).
\end{equation}
On $L_b^2(\RE^0)=\CO$ we set $H^{\circ,(0)}=0$.  

Correspondingly, we have a scale of Hilbert spaces $\H^{s}_b$, $s\in [-1,1]$, defined according to Definition \ref{d:SHS}; these are given by 
\begin{equation*}\label{Hsb}
\H^{s}_b:= \bigoplus_{n=0}^\infty  \H^{(n),s}_b\qquad \H^{(n),s}_b:=  L_b^2(\RE^{3n}, ((|K|^2 + n m)^2+1)^s dK)\,.
\end{equation*}
In what follows we set $\H^{(n)}_b := \H^{(n),1}_b$.

For all  $z\in \rho(H^{\circ})$, we denote by  $R_z^\circ$ the resolvent of $H^\circ$: $R_z^\circ = (-H^\circ +z)^{-1}$. We point out the explicit formula    
\begin{equation*}
(R_z^\circ \psi)^{(n)}(K)  = \frac{\psi^{(n)}(K)}{-|K|^2 - nm + z }\,, \qquad n\geq0 \,,
\end{equation*}
where it is understood that for $n=0$ the variable $K$ is absent.  There holds 
\begin{equation}\label{RzBound}
R_z^\circ \in \B(\H^s_b,\H^{s+1}_b) \qquad s\in[-1,0].
\end{equation}
\subsection{The non-interacting Hamiltonian}
For a given parameter $\mu>0$, representing the energy gap between the ``spin up state'' and the ``spin down state'', we define the non-interacting Hamiltonian  by 
\begin{equation*}
\Hni:=H^{\circ}\oplus (H^{\circ}+\mu) \qquad \dom(\Hni) := \dom(H^{\circ}) \oplus \dom(H^{\circ}).
\end{equation*}
$\Hni$ is self-adjoint, lower bounded and its spectrum is given by 
\begin{equation*}
\sigma(\Hni) =\left \{ \begin{aligned}
& \{0\} \cup \{\mu\} \cup  [m,+\infty) \qquad && 0<\mu <m \\
& \{0\} \cup [m,+\infty)  && \mu\geq m.
 \end{aligned} \right.
\end{equation*}
We remark that $\sigma_{ess}(\Hni) =   [m,+\infty)$, and, if $0<\mu <m$,  $\sigma_{ess}(\Hni) = \sigma_{ac}(\Hni)$;  $0$ and $\mu$ are eigenvalues of $\Hni$ with eigenvectors $\Omega^{\circ}\oplus 0$ and $0\oplus \Omega^{\circ}$ respectively; if $\mu \geq m$ the eigenvalue $\mu$, with eigenvector $0\oplus \Omega^{\circ}$, is embedded in the essential spectrum. 
\subsection{The Renormalized Lee Hamiltonian $H_{\mu}$}
Given the coupling constant $g\in \RE$, we introduce the annihilator  
\begin{equation}\label{an0}
a: \H_b \to \F_b   
\end{equation}
defined by
\begin{equation}\label{an}
(a \psi)^{(n)}(K) := g \sqrt{n+1}  \int_{\RE^3} \psi^{(n+1)}(K,p) \, dp \qquad K\in \RE^{3n}, \; n\geq 0 \,;
\end{equation}

\begin{remark}\label{r:dense-kernel} For any $\phi\in L^{2}(\RE^{3})$, let us denote by $e(\phi)$ the exponential (or coherent) vector in $\F_{b}$ defined by 
\begin{equation*}
e(\phi)=\{ e(\phi)^{(n)}\}_{n=0}^{\infty}\,,\qquad e(\phi)^{(0)}:=1\,,\quad e(\phi)^{(n)}:=\frac1{\sqrt{n!}}\,\underbrace{\phi\otimes\dots\otimes\phi}_{n\text{-times}}\,,\ n\in\NA\,.
\end{equation*}
For any $\phi\in L^{2}(\RE^{3},(|k|^{2}+m)^{2}dk)$, one has
\begin{equation*}
a(e(\phi))=g \int_{\RE^{3}}\phi(p)\,dp\ e(\phi)
\end{equation*}
and so $\ker(a)$ contains the subspace $\mathscr E(V)\subset\F_{b}$ generated by $\{e(\phi)\,,\,\phi\in V\}$, where 
\begin{equation*}
V:=\left\{\phi\in L^{2}(\RE^{3},(|k|^{2}+m)^{2}dk): \int_{\RE^{3}}\phi(p)\,dp=0\right\}\,.
\end{equation*}
Notice that $V$ is dense in $L^{2}(\RE^{3})$ since it contains the set of Fourier transforms of the Schwartz functions vanishing at the origin; therefore, by \cite[Theorem 5.34(ii)]{Arai}, $\mathscr E(V)$ is  dense in $\F_b$ and  $\ker(a)$ is dense as well. \par
\end{remark}
\begin{remark}\label{r:lonigro}
For every $s\in(3/4,1]$, the annihilation map satisfies
\begin{equation*}
 a\in\B(\H_b^s,\F_b),
 \qquad
 a^*\in\B(\F_b,\H_b^{-s}).
\end{equation*}
The first estimate is \cite[Proposition~3.4]{Lonigro2022}, applied to the
constant form factor with the exponent $2s>3/2$; the second follows by
duality.  Together with Remark~\ref{r:dense-kernel}, this verifies the
assumptions on $a$ in Theorem~\ref{teo}.
\end{remark}
By Eq.~\eqref{RzBound} and Remark~\ref{r:lonigro},
$aR_z^\circ\in\B(\H^s_b,\F_b)$ for every
$s\in(-1/4,0]$ and $z\in \rho(H^{\circ})$. The operator
$G^\circ_z := R^\circ_z a^*:\F_b  \to \F_b$, $z \in \rho(H^{\circ})$,
belongs to $\B(\F_b, \H^s_b)$ for every $s \in [0,1/4)$.

We fix a subtraction point
\begin{equation*}
  \nu\in\RE\cap\rho(H^\circ),
\end{equation*}
where necessarily $\nu<m$, and an auxiliary point $\lambda_\star<0$.  The parameter
$\lambda_\star$ is used only in the description of the domain of the renormalized Hamiltonian; the resulting
Hamiltonian is independent of its choice. 

To proceed,  we introduce  the operator $S^{\circ}$ defined by
\begin{equation}\label{Scirc}
S^{\circ} :\H_b\subseteq\F_{b} \to \F_b \,,\qquad
S^{\circ} := S^{\circ,d} + S^{\circ,od}\,,
\end{equation}
where
\begin{align}\label{Scircdn}
    (S^{\circ,d}\psi)^{(n)}(K)  =&
    \left( -\mu +g^2 \int_{\RE^3} \frac{(-|K|^2-nm+\lambdastar -\nu)}{(-|p|^2 - m + \nu)(-|p|^2-|K|^2 - (n+1)m + \lambdastar ) }\ dp 
\right)\psi^{(n)}(K)\nonumber
\\
=&\left(-\mu + 2\pi^2 g^2(\sqrt{m-\nu}-\sqrt{|K|^2+(n+1)m-\lambdastar }\,)\right) \psi^{(n)}(K) \,, \quad n\geq0\,,
    \end{align}
where it is understood that for $n=0$ the variable $K$ is absent; and
\begin{equation}\label{Scircodn}
(S^{\circ,od}\psi)^{(n)}(K) 
= -g^2 \sum_{j=1}^n   \int_{\RE^3} \frac{\psi^{(n)}(\hat K_j,p) }{ -|p|^2-|K|^2 - (n+1)m + \lambdastar  }
  \, dp\qquad n\geq 1,
\end{equation} 
with $(S^{\circ,od}\psi)^{(0)} = 0$. 
\begin{lemma}\label{lem:S-construction-estimates}
Let $s\in(3/4,1]$ and set $\alpha_s:=s-1/4\in(1/2,3/4]$.
The operators in \eqref{Scircdn} and \eqref{Scircodn}, initially
defined on finite particle vectors with smooth and compactly supported 
components, extend to
\begin{equation*}
 S^{\circ,d}\in\B(\H_b^{1/2},\F_b),
 \qquad
 S^{\circ,od}\in\B(\H_b^{\alpha_s},\F_b).
\end{equation*}
Their restrictions to $\H_b=\dom(H^{\circ})$ are symmetric and consequently,
$S^\circ=S^{\circ,d}+S^{\circ,od}$ is well defined and symmetric on
$\H_b$.\\ Moreover $S^\circ$ is infinitesimally $H^\circ$-small: for every
$\varepsilon\in(0,1)$ there is $b_\varepsilon\in\RE$ such that
\begin{equation}\label{inf-small}
 \|S^\circ\psi\|
 \leq \varepsilon\|H^\circ\psi\|+b_\varepsilon\|\psi\|,
 \qquad \psi\in\H_b.
\end{equation}
\end{lemma}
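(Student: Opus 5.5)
The plan is to work sector by sector, since both $S^{\circ,d}$ and $S^{\circ,od}$ preserve the boson number. Every estimate is then reduced to a uniform-in-$n$ bound on $L^2_b(\RE^{3n})$, weighted by $(|K|^2+nm+1)^{\beta}$, which is comparable to the $n$-th weight of $\H_b^{\beta}$.

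\emph{Diagonal part.} The second line of \eqref{Scircdn} is obtained from the first by the elementary integral
\[
\int_{\RE^3}\Bigl(\frac{1}{|p|^2+a}-\frac1{|p|^2+b}\Bigr)dp=2\pi^2(\sqrt b-\sqrt a),
\]
with $a=m-\nu$ and $b=|K|^2+(n+1)m-\lambdastar$, after a partial-fraction decomposition. The multiplier is therefore real and satisfies $|\cdot|\le C(1+\sqrt{|K|^2+nm})$ with $C$ independent of $n$, because $\lambdastar<0$ and $m>0$. This gives $S^{\circ,d}\in\B(\H^{1/2}_b,\F_b)$, and symmetry on $\H_b$ is immediate since $S^{\circ,d}$ is a real multiplication operator.

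\emph{Off-diagonal part.} This is the main step. Fix $n$ and write $D(K)=|K|^2+(n+1)m-\lambdastar$. By bosonic symmetry $\|\sum_j T_j\psi\|\le n\|T_1\psi\|$ naively, but this loses a factor $n$. Instead I would estimate the quadratic form
\[
\langle \phi,S^{\circ,od}\psi\rangle=-g^2\sum_j\iint\frac{\overline{\phi(K)}\,\psi(\hat K_j,p)}{|p|^2+|K|^2+(n+1)m-\lambdastar}\,dp\,dK,
\]
which is visibly symmetric in $(\phi,\psi)$ because the kernel is symmetric under $k_j\leftrightarrow p$. This gives symmetry on smooth finite-particle vectors, and then on $\H_b$ by density once boundedness is known. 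For boundedness I use a Schur test with weights. Set $x=k_j$ and $y=p$, with the remaining variables $\hat K_j$ frozen and $c=|\hat K_j|^2+(n+1)m-\lambdastar$. The kernel $(|x|^2+|y|^2+c)^{-1}(|y|^2+c)^{-\alpha/2}$ defines a bounded operator $L^2(\RE^3)\to L^2(\RE^3)$ uniformly in $c$ when $\alpha>1/2$. This follows by scaling $x,y\mapsto \sqrt c\,x,\sqrt c\,y$, which makes the norm $c^{(3-2-\alpha)/2-\cdots}$-homogeneous, combined with a Schur test with weight $|x|^{-3/2}$-type. This is the standard estimate underlying \cite[Proposition~3.4]{Lonigro2022}, and I would invoke it in the form: $\|T_j\psi\|\le C\,\|(|\hat K_j|^2+|k_j|^2+nm)^{\alpha/2}\psi\|$ with $\alpha>1/2$, with a loss only in the free variable.

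The sum over $j$ remains. To keep $n$-uniformity I would combine $\sum_j$ with the form and Cauchy--Schwarz, using $\sum_j (\text{weight in }k_j)\lesssim |K|^2+nm$. This is where the exponent $\alpha_s=s-1/4$ and the factor $\sqrt n$ from $a$ appear. Concretely, I expect $S^{\circ,od}$ to factor like $a\,R^\circ a^*$ minus its diagonal term, that is, essentially $\bigl(aR^{\circ}_{\lambdastar}\bigr)a^*$ restricted to the off-diagonal (exchange) part. In that case Remark~\ref{r:lonigro} and \eqref{RzBound} give $S^{\circ,od}\in\B(\H_b^{\alpha_s},\F_b)$ directly, since $a^*:\H_b^{\alpha_s}\to\H_b^{\alpha_s-1}$ behaves after a shift of the scale by $\tfrac34$. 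I would first try this factorization route, because it avoids redoing Schur tests. The main obstacle is controlling the weights uniformly in $n$, and the factorization is how I would try to handle it.

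\emph{Smallness.} Both exponents $1/2$ and $\alpha_s$ are $<1$. Interpolation on each sector gives $(|K|^2+nm+1)^{\beta}\le \varepsilon(|K|^2+nm)+C_{\varepsilon,\beta}$ for $\beta<1$, and this yields \eqref{inf-small} with constants uniform in $n$.
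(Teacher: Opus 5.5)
Your diagonal estimate, the symmetry argument (invariance of the kernel under $k_j\leftrightarrow p$) and the interpolation step giving \eqref{inf-small} are fine and agree with the paper. The gap is the off-diagonal bound $S^{\circ,od}\in\B(\H_b^{\alpha_s},\F_b)$, which is the real content of the lemma and is not established.

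The factorization route you plan to try first cannot work. The full product $aR^\circ_{\lambdastar}a^*$ contains the contraction term $\psi(K)\int_{\RE^3}(|p|^2+|K|^2+(n+1)m-\lambdastar)^{-1}dp$, which diverges. Hence no mapping properties of the three factors on the scale $\H_b^t$ can give a bounded composite, and the exchange part cannot inherit a bound from one. Concretely, the created momentum enters $a^*\psi$ undamped, so $a^*\psi\in\H_b^t$ only for $t<-3/4$. The claim $a^*:\H_b^{\alpha_s}\to\H_b^{\alpha_s-1}$ is therefore false. Remark~\ref{r:lonigro} with \eqref{RzBound} only brings you to $\H_b^{1-s}$ with $1-s<1/4$, well below the order $>3/4$ that $a$ requires.

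Your Schur sketch also has an exponent slip. The kernel $(|x|^2+|y|^2+c)^{-1}(|y|^2+c)^{-\alpha/2}$ is unbounded on $L^2(\RE^3)$ for $\alpha<1$ already at fixed $c$: test it on $\mathbf 1_{\{R\le|y|\le 2R\}}$ and let $R\to\infty$. So the weight must carry more than one power of momentum, $(\cdot)^{\alpha}$ with $\alpha>1/2$, not $(\cdot)^{\alpha/2}$.

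More importantly, you correctly flag uniformity in $n$ as the main obstacle but leave it open, and that is exactly where the proof lies. A per-term bound in the full weight, summed over $j=1,\ldots,n$, costs a factor $n\le E_n(K)/m$, where $E_n(K):=\sum_j\omega(k_j)$. This pushes the exponent above $1$ and destroys $H^\circ$-smallness. The paper proceeds in four steps:
\begin{enumerate}
\item Write $(S^{\circ,od}\psi)^{(n)}=g^2\sum_jX_j$ with $X_j(K)=\int_{\RE^3}\psi^{(n)}(\hat K_j,p)\,(E_n(K)+\omega(p)-\lambdastar)^{-1}dp$.
\item Apply Cauchy--Schwarz with the weights $\omega(k_j)/E_n(K)$, which sum to one, giving $|\sum_jX_j|^2\le\sum_j\frac{E_n(K)}{\omega(k_j)}|X_j|^2$.
\item Apply Cauchy--Schwarz in $p$ against $\omega(p)^{-2s}$, which is integrable precisely because $s>3/4$.
\item Integrate explicitly in $k_j$ with $\hat K_j$ and $p$ frozen:
\[
 \int_{\RE^3}\frac{E_{\hat j}+\omega(k)}{\omega(k)\bigl(E_{\hat j}+\omega(k)+\omega(p)-\lambdastar\bigr)^2}\,dk
 \le\int_{\RE^3}\frac{dk}{\omega(k)\bigl(|k|^2+E_{\hat j}+m+\omega(p)-\lambdastar\bigr)}
 \le\frac{2\pi^2}{\sqrt{E_{\hat j}+\omega(p)}},
\]
where $E_{\hat j}:=E_n(K)-\omega(k_j)$.
\end{enumerate}
The $E^{-1/2}$ gain in the last step comes from the frozen variables in the denominator, and it absorbs the penalty $E_n/\omega(k_j)$ from the weighted Cauchy--Schwarz. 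Relabelling $p$ as $k_j$ and using $\sum_j\omega(k_j)^{2s}\le E_n(K)^{2s}$ gives
\[
\|(S^{\circ,od}\psi)^{(n)}\|^2\le 2\pi^2g^4c_s\int E_n(K)^{2s-1/2}|\psi^{(n)}|^2\,dK,
\qquad c_s:=\int_{\RE^3}\omega(p)^{-2s}\,dp.
\]
This is where $\alpha_s=s-1/4$ comes from. Without some such mechanism, your argument yields neither the stated mapping property nor \eqref{inf-small}.
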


\begin{remark}We postpone the proof of Lemma \ref{lem:S-construction-estimates} to comment on the definition of the renormalized Lee Hamiltonian $H_\mu$.

$S^{\circ}$ depends on the parameters $\nu\in\RE\cap\rho(H^{\circ})$, $\mu>0$ and $\lambdastar <0$, even though we omit such dependence from the notation. 

Remarks~\ref{r:dense-kernel} and~\ref{r:lonigro}, together with Lemma~\ref{lem:S-construction-estimates}, verify the hypotheses of Theorem~\ref{teo}.  Hence, posing $\F^{\circ}_{1}=\F^{\circ}_{2}= \F_b$, $H^{\circ}_{1}=H^{\circ}_{2}=H^{\circ}$, $a$ as in Eqs.~\eqref{an0}--\eqref{an}, and $S_{2}^{\circ}=S^{\circ}$, the operator 
\begin{equation}
H_\mu\equiv H_{S^{\circ}}
\end{equation}
defined according to Eqs.~\eqref{HS1}--\eqref{HS4} is self-adjoint and bounded from below; its resolvent is provided in Theorem~\ref{Res-Lee}.  As made explicit in \eqref{eq:mu-g-definition}, the dependence on $\mu$ and on the subtraction point $\nu$ enters through the renormalized gap $\mu_g$; changing $\nu$ while keeping the same Hamiltonian therefore requires the corresponding change of $\mu$.  The dependence on $g$ is suppressed in the notation.  By contrast, $H_\mu$ is independent of the auxiliary point $\lambdastar$, which is used only to characterize its domain and action.
\end{remark}

\begin{proof}[Proof of Lemma \ref{lem:S-construction-estimates}]
Write
\begin{equation*}
 \omega(k):=|k|^2+m,
 \qquad
 E_n(K):=\sum_{j=1}^n\omega(k_j)=|K|^2+nm,
\end{equation*}
and let $\mathscr C_{\rm fin}$ be the algebraic direct sum of the symmetric
smooth and compactly supported functions.  For $s>3/4$ the constant
\begin{equation*}
 c_s:=\int_{\RE^3}\omega(p)^{-2s}\,dp
\end{equation*}
is finite.

For the diagonal term, its real-valued multiplier in \eqref{Scircdn} has
absolute value bounded by $C_0+C_1E_n(K)^{1/2}$, with constants independent
of $n$.  Hence
\begin{equation*}
 \|S^{\circ,d}\psi\|
 \leq C\bigl(\|(H^\circ)^{1/2}\psi\|+\|\psi\|\bigr).
\end{equation*}

It remains to estimate the off-diagonal term uniformly in the boson number.
For $n\geq1$, rewrite \eqref{Scircodn} as
\begin{equation*}
 (S^{\circ,od}\psi)^{(n)}(K)
 =g^2\sum_{j=1}^n X_j(K),
 \qquad
 X_j(K):=\int_{\RE^3}
 \frac{\psi^{(n)}(\hat K_j,p)}{E_n(K)+\omega(p)-\lambdastar}\,dp.
\end{equation*}
The numbers $w_j(K):=\omega(k_j)/E_n(K)$ are positive and sum to one.
Thus a weighted Cauchy--Schwarz inequality yields
\begin{equation*}
 \left|\sum_{j=1}^nX_j(K)\right|^2
 \leq\sum_{j=1}^n\frac{E_n(K)}{\omega(k_j)}|X_j(K)|^2.
\end{equation*}
A second Cauchy--Schwarz inequality, now in $p$, gives
\begin{equation*}
 |X_j(K)|^2
 \leq c_s\int_{\RE^3}
 \frac{\omega(p)^{2s}|\psi^{(n)}(\hat K_j,p)|^2}
 {\bigl(E_n(K)+\omega(p)-\lambdastar\bigr)^2}\,dp.
\end{equation*}
Put $E_{\hat j}:=\sum_{\ell\neq j}\omega(k_\ell)$.  Writing
$K=(\hat K_j,k_j)$, so that $dK=d\hat K_j\,dk_j$ and
$E_n(K)=E_{\hat j}+\omega(k_j)$, combining the two preceding inequalities,
integrating over $K$, and applying Tonelli's theorem give
\begin{align*}
 \|(S^{\circ,od}\psi)^{(n)}\|^2
 &\leq g^4c_s\sum_{j=1}^n
 \int_{\RE^{3(n-1)}}d\hat K_j\int_{\RE^3}dp\,
 \omega(p)^{2s}|\psi^{(n)}(\hat K_j,p)|^2
 I_j(\hat K_j,p),
\end{align*}
where, for fixed $\hat K_j$ and $p$, only the variable $k_j$ is integrated
in
\begin{equation*}
 I_j(\hat K_j,p):=
 \int_{\RE^3}
 \frac{E_{\hat j}+\omega(k_j)}
 {\omega(k_j)\bigl(E_{\hat j}+\omega(k_j)+\omega(p)-\lambdastar\bigr)^2}
 \,dk_j.
\end{equation*}
Since $\omega(p)-\lambdastar>0$, relabelling the integration variable $k_j$
as $k$ gives
\begin{equation*}
 I_j(\hat K_j,p)
 \leq J_j(\hat K_j,p):=
 \int_{\RE^3}
 \frac{dk}{(|k|^2+m)
 \bigl(|k|^2+E_{\hat j}+m+\omega(p)-\lambdastar\bigr)}.
\end{equation*}
For the fixed variables $\hat K_j$ and $p$, set
\begin{equation*}
 A:=E_{\hat j}+m+\omega(p)-\lambdastar>m.
\end{equation*}
Passing to spherical coordinates and using the elementary partial fraction identity
\begin{equation*}
 \frac{r^2}{(r^2+m)(r^2+A)}
 =\frac{1}{A-m}
 \left(
 \frac{A}{r^2+A}
 -\frac{m}{r^2+m}
 \right),
\end{equation*}
we obtain
\begin{align*}
 J_j(\hat K_j,p)
 &=4\pi\int_0^\infty
 \frac{r^2}{(r^2+m)(r^2+A)}\,dr \\
 &=\frac{4\pi}{A-m}
 \left(
 A\frac{\pi}{2\sqrt A}
 -m\frac{\pi}{2\sqrt m}
 \right) \\
 &=\frac{2\pi^2(\sqrt A-\sqrt m)}
 {A-m}
 =\frac{2\pi^2}{\sqrt m+\sqrt A}.
\end{align*}
Here we used
\begin{equation*}
 \int_0^\infty\frac{dr}{r^2+a}=\frac{\pi}{2\sqrt a},
 \qquad a>0.
\end{equation*}
Since $m-\lambdastar>0$, it follows that
\begin{equation*}
 I_j(\hat K_j,p)
 \leq \frac{2\pi^2}
 {\sqrt m+\sqrt{E_{\hat j}+m+\omega(p)-\lambdastar}}
 \leq \frac{2\pi^2}{\sqrt{E_{\hat j}+\omega(p)}}.
\end{equation*}
Substituting this bound in the preceding norm
estimate, relabelling $p$ as the $j$th momentum variable, and using bosonic
symmetry, we obtain
\begin{align*}
 \|(S^{\circ,od}\psi)^{(n)}\|^2
 &\leq 2\pi^2g^4c_s
 \int_{\RE^{3n}}
 \frac{\sum_{j=1}^n\omega(k_j)^{2s}}
 {E_n(K)^{1/2}}|\psi^{(n)}(K)|^2\,dK \\
 &\leq 2\pi^2g^4c_s
 \int_{\RE^{3n}}E_n(K)^{2s-1/2}
 |\psi^{(n)}(K)|^2\,dK.
\end{align*}
Since $2s-1/2=2\alpha_s$, summation over $n$ gives
\begin{equation*}
 \|S^{\circ,od}\psi\|
 \leq (2\pi^2g^4c_s)^{1/2}
 \|(H^\circ)^{\alpha_s}\psi\|.
\end{equation*}
This proves the asserted extension of $S^{\circ,od}$.

The diagonal term is symmetric because its multiplier is real.  For the
off-diagonal term, Fubini's theorem gives on $\mathscr C_{\rm fin}$
\begin{equation*}
 \langle\phi,S^{\circ,od}\psi\rangle
 =g^2\sum_{n\geq1}\sum_{j=1}^n
 \int
 \frac{\overline{\phi^{(n)}(\hat K_j,k_j)}
 \psi^{(n)}(\hat K_j,p)}
 {E_{\hat j}+\omega(k_j)+\omega(p)-\lambdastar}
 \,d\hat K_j\,dk_j\,dp.
\end{equation*}
The denominator is invariant under $k_j\leftrightarrow p$, so the last
expression equals $\langle S^{\circ,od}\phi,\psi\rangle$.  The bounds just
proved and density extend this identity to $\H_b$.

Finally, for every $\beta\in(0,1)$ and every $\varepsilon>0$, the elementary
scalar estimate $x^{2\beta}\leq\varepsilon^2x^2+C_{\varepsilon,\beta}$ and
the spectral calculus imply, after adjusting the constants,
\begin{equation*}
 \|(H^\circ)^\beta\psi\|
 \leq\varepsilon\|H^\circ\psi\|+C'_{\varepsilon,\beta}\|\psi\|.
\end{equation*}
Applying this with $\beta=1/2$ and $\beta=\alpha_s<1$ to the two bounds above
proves \eqref{inf-small}.
\end{proof}
\begin{remark}\label{r:lampart-estimate}
The infinitesimal $H^\circ$-bound in \eqref{inf-small} can also be obtained
from \cite[Proposition~3.1 and the proof of Corollary~3.2]{Lampart2025}.  In
the notation of that paper one takes the constant form factor $F=g$, which
belongs to $\mathfrak{b}_q$ for every $q\in(3/2,2)$.  There the relevant estimate
is not a separate statement for the operator $S^\circ$:
it is used in the proof of the corollary and is embedded in the authors'
framework of interior-boundary conditions.  We retain the direct proof
above because it gives the diagonal and off-diagonal estimates in the
present notation and avoids introducing that additional framework.
\end{remark}

For later convenience, for any $z\in\varrho(H^{\circ})$, we define the operator (see Theorem \ref{Res-Lee})
\begin{equation}\label{Mz1}
\dom(M_z) := \dom(H^{\circ}).
\end{equation}
\begin{equation}\label{Mz2}
M_{z}\equiv M^{\circ}_{z}:=-H^{\circ}+S^{\circ}+a(G^{\circ}-G^{\circ}_{z})+z=-H^{\circ}+S^{\circ}+(z-\lambda_{\star})\,{G^{\circ}}^{*}G^{\circ}_{z}+z\,,
\end{equation}

\subsection{Decomposition in subspaces with fixed number of excitations\label{ss:decomposition}}
Here we exploit the conservation of the number of quanta by $H_{\mu}$. 

\subsubsection{Operators in subspaces with fixed number of bosons}
We start by introducing the operators in spaces with fixed numbers of bosons. In this discussion we use $\ell\in\NA_0$  as a dummy boson-number index.  The field Hamiltonian $H^{\circ,(\ell)}$  in the $\ell$-boson subspace was introduced in Eqs. \eqref{eq:free-sector-action} and \eqref{eq:free-sector-domain}. 
 
The restriction of the annihilation map $a$ from the
$(\ell+1)$-boson subspace to the $\ell$-boson subspace is denoted by $a^{(\ell)}$:
\begin{equation*}\label{eq:sector-annihilation1}
  a^{(\ell)} : \dom(H^{\circ,(\ell+1)}) \to L^2(\RE^{3\ell})
\end{equation*}
\begin{equation*}\label{eq:sector-annihilation2}
  (a^{(\ell)}\varphi)(K)
  :=g\sqrt{\ell+1}\int_{\RE^3}\varphi(K,p)\,dp;
\end{equation*}
$a^{(\ell)} \in \B(\dom(H^{\circ,(\ell+1)}),L^2(\RE^{3\ell}))$. The corresponding singular creation map is denoted
by $a^{*,(\ell+1)}$, formally $a^{*,(\ell+1)} = (a^{(\ell)})^*$, and its action is given by 
\begin{equation*}\label{eq:sector-creation}
  (a^{*,(\ell+1)}\xiq
  )(K)
  :=\frac{g}{\sqrt{\ell+1}}
  \sum_{j=1}^{\ell+1}\xiq
  (\hat K_j).
\end{equation*}
Although $a^{*,(\ell+1)}\xiq
$ need not belong to $L_b^2(\RE^{3(\ell+1)})$, its
composition with the free resolvent is bounded.

For
\begin{equation*}
  z\in\rho(H^{\circ,(\ell+1)})
  =\CO\setminus[(\ell+1)m,+\infty),
\end{equation*}
set
\begin{equation*}
  R_z^{\circ,(\ell+1)}:=(-H^{\circ,(\ell+1)}+z)^{-1}
  \quad \text{and}\quad 
  G_z^{\circ,(\ell+1)}
  :=R_z^{\circ,(\ell+1)}a^{*,(\ell+1)},
\end{equation*}
with $G^{\circ,(\ell+1)}:=G_{\lambdastar}^{\circ,(\ell+1)}$.
The explicit action of $G_z^{\circ,(\ell+1)}$ is
\begin{equation}\label{appcert:eq:G-lambda-explicit}
  (G_z^{\circ,(\ell+1)}\xiq
  )(K)
  =\frac{g}{\sqrt{\ell+1}}
  \frac{\displaystyle\sum_{j=1}^{\ell+1}\xiq
  (\hat K_j)}
  {-|K|^2-(\ell+1)m+z}.
\end{equation}
For every such $z$,
\begin{equation*}\label{appcert:eq:G-bounded-map}
  G_z^{\circ,(\ell+1)}
  \in\mathcal B\bigl(L_b^2(\RE^{3\ell}),L_b^2(\RE^{3(\ell+1)})\bigr).
\end{equation*}
Indeed, if $z=\lambda<(\ell+1)m$, then
\begin{equation*}
  \|G_\lambda^{\circ,(\ell+1)}\xiq
  \|_2^2
  \leq
  \frac{(\ell+1)\pi^2|g|^2}
  {\sqrt{(\ell+1)m-\lambda}}\,\|\xiq
  \|_2^2.
\end{equation*}
For non-real $z$, boundedness follows from the resolvent identity
\begin{equation*}
 G_z^{\circ,(\ell+1)}
 =\bigl(1+(\lambda-z)R_z^{\circ,(\ell+1)}\bigr)
 G_\lambda^{\circ,(\ell+1)},
 \qquad \lambda<(\ell+1)m.
\end{equation*}

We now write down  the  restrictions of the operator $S^\circ$ from Eqs. \eqref{Scirc} -- \eqref{Scircodn}.  The square root below is taken with the branch which is
positive on $(0,+\infty)$:
\begin{equation*}\label{eq:S-diagonal-sector}
\begin{aligned}
  (S^{\circ,d,(\ell)}\xiq
  )(K)
  :={}&\left[-\mu+2\pi^2g^2
  \left(\sqrt{m-\nu}
  -\sqrt{|K|^2+(\ell+1)m-\lambdastar}\right)\right]\xiq
  (K),
\end{aligned}
\end{equation*}
where for $\ell=0$ the variable $K$ is absent; for $\ell\geq1$, set
\begin{equation}\label{appcert:eq:offdiagonal-sector-definition}
  (S^{\circ,od,(\ell)}\xiq
  )(K)
  :=g^2\sum_{j=1}^{\ell}\int_{\RE^3}
  \frac{\xiq
  (\hat K_{j,p})}
  {|p|^2+|K|^2+(\ell+1)m-\lambdastar}\,dp,
\end{equation}
and set $S^{\circ,od,(0)}:=0$.  Finally,
\begin{equation*}\label{eq:S-sector-decomposition}
  S^{\circ,(\ell)}
  :=S^{\circ,d,(\ell)}+S^{\circ,od,(\ell)}.
\end{equation*}

To conclude we write the sector restrictions of the operator $M_z$ from Eqs. \eqref{Mz1} -- \eqref{Mz2}:
for $z\in\rho(H^{\circ,(\ell+1)})=\CO\setminus[(\ell+1)m,+\infty)$, 
\begin{equation*}\label{eq:M-sector-definition-construction2}
  \dom(M_z^{(\ell)})=\dom(H^{\circ,(\ell)}).
\end{equation*}
\begin{align}  M_z^{(\ell)}\equiv   M_z^{\circ,(\ell)}
  :=&-H^{\circ,(\ell )}+S^{\circ,(\ell)}+a^{(\ell)}\bigl(G^{\circ,(\ell+1)}-G_z^{\circ,(\ell+1)}\bigl)+z\label{eq:M-sector-definition-construction1}
 \\
  =&-H^{\circ,(\ell)}+S^{\circ,(\ell)}+(z-\lambdastar)\,(G^{\circ,(\ell+1)})^*G^{\circ,(\ell+1)}_{z}+z\,. \nonumber
\end{align}

It is convenient to isolate and write in explicit form  its diagonal and off-diagonal  part in the $(n-1)$ boson sector.  Define the maximal
multiplication operator $M_z^{d,(n-1)}$ by the multiplier
\begin{equation}\label{appcert:eq:diagonal-sector-definition}
  m_z^{(n-1)}(K)
  :=|K|^2+(n-1)m-z+\mu_g
  +2\pi^2g^2\sqrt{|K|^2+nm-z},\qquad K\in\RE^{3(n-1)}
\end{equation}
and the operator $M_z^{od,(n-1)}$ by
\begin{equation*}\label{Mod}
  (M_z^{od,(n-1)}\xiq
  )(K)
  :=g^2\sum_{j=1}^{n-1}\int_{\RE^3}
  \frac{\xiq
  (\hat K_{j,p})}
  {|p|^2+|K|^2+n m-z}\,dp, \qquad K\in\RE^{3(n-1)}.
\end{equation*}
We point out that $M_z^{od,(n-1)}$ is obtained from $S^{\circ,od,(n-1)}$ by replacing $\lambdastar$ with $z$.

Then
\begin{equation*}\label{eq:M-D-S-relation}
  M_z^{(n-1)}
  =-M_z^{d,(n-1)}+M_z^{od,(n-1)}.
\end{equation*}
\subsubsection{Lee model in the $n$-excitation subspace and spectral reduction}

For fixed
total excitation number, define
\begin{equation*}\label{eq:sector-spaces}
  \IS^{[0]}:=\CO\oplus0,
  \qquad
  \IS^{[n]}
  :=L_b^2(\RE^{3n})\oplus L_b^2(\RE^{3(n-1)}),
  \qquad n\geq1.
\end{equation*}

For later use, let
\begin{equation*}
  \mathcal H_+^{[n]}:=L_b^2(\RE^{3n}),
  \qquad
  \mathcal H_-^{[n]}:=L_b^2(\RE^{3(n-1)}),
  \qquad n\geq1,
\end{equation*}
so that $\IS^{[n]}=\mathcal H_+^{[n]}\oplus\mathcal H_-^{[n]}$.
The signs are only labels for the two entries of the block decomposition.

Then
\begin{equation*}\label{eq:full-sector-decomposition}
  \F_b\oplus\F_b
  =\bigoplus_{n=0}^{\infty}\IS^{[n]}.
  \end{equation*}
  The explicit domain and action of $H_\mu$ preserve every subspace $\IS^{[n]}$.  Consequently, such a decomposition allows us to write $H_{\mu}$ as the direct sum
  \begin{equation*}\label{Hmufacotrized}
H_\mu  = \bigoplus_{n=0}^\infty H^{[n]}_\mu    \,,\qquad
H^{[n]}_\mu: \dom(H^{[n]}_\mu) \subset \IS^{[n]} \to \IS^{[n]}
\end{equation*}
with domain
\begin{equation*}\label{eq:full-direct-sum-domain}
 \dom(H_\mu)
 =\left\{(\Psi^{[n]})_{n\geq0}\in\bigoplus_{n=0}^{\infty}\IS^{[n]}:\
 \Psi^{[n]}\in\dom(H_\mu^{[n]}),\quad
 \sum_{n=0}^{\infty}\|H_\mu^{[n]}\Psi^{[n]}\|^2<\infty\right\}.
\end{equation*}
where, for $n\geq1$,
\begin{equation}\label{eq:sector-domain-basepoint}
\begin{aligned}
  \dom(H_\mu^{[n]}):
  =\Bigl\{&(\varphi+G^{\circ,(n)}\xiq
  )\oplus \xiq
  :\
  &\varphi\in\dom(H^{\circ,(n)}),\quad
  \xiq
  \in\dom(H^{\circ,(n-1)})\Bigr\},
\end{aligned}
\end{equation}
\begin{equation}\label{eq:sector-action-basepoint}
  H_\mu^{[n]}\bigl((\varphi+G^{\circ,(n)}\xiq
  )\oplus \xiq
  \bigr)
  =\bigl(H^{\circ,(n)}\varphi
  +\lambda_\star G^{\circ,(n)}\xiq
  \bigr) \oplus\bigl((H^{\circ,(n-1)}-S^{\circ,(n-1)})\xiq
  +a^{(n-1)}\varphi\bigr);
\end{equation}
and for $n=0$, $H_\mu^{[0]}=0$ on $\IS^{[0]}=\CO\oplus0$.
 
 The resolvent identity implies that
$G_z^{\circ,(n)}-G_{\lambda_\star}^{\circ,(n)}$ maps
$\mathcal H_-^{[n]}$ into $\dom(H^{\circ,(n)})$.  Consequently, the domain \eqref{eq:sector-domain-basepoint} can be equivalently parametrized at any real $\lambda<nm$ as
\begin{equation}\label{appcert:eq:lambda-domain-param}
\begin{aligned}
  \dom(H_\mu^{[n]})
  =\Bigl\{&(\varphi+G_\lambda^{\circ,(n)}\xiq
  )\oplus \xiq
  :\
  &\varphi\in\dom(H^{\circ,(n)}),\quad
  \xiq
  \in\dom(M_\lambda^{(n-1)})\Bigr\}.
\end{aligned}
\end{equation}
In this parametrization,
\begin{equation}\label{appcert:eq:sector-action-parametrized}
(H_\mu^{[n]}-\lambda)
  \bigl((\varphi+G_\lambda^{\circ,(n)}\xiq
  )\oplus \xiq
  \bigr)=(H^{\circ,(n)}-\lambda)\varphi
  \oplus\bigl(a^{(n-1)}\varphi-M_\lambda^{(n-1)}\xiq
  \bigr).
\end{equation}
In particular,
\begin{equation*}\label{eq:lifted-M-action}
  (H_\mu^{[n]}-\lambda)
  (G_\lambda^{\circ,(n)}\xiq
  \oplus \xiq
  )
  =0\oplus(-M_\lambda^{(n-1)}\xiq
  ).
\end{equation*}
We conclude this section with the following corollary to Theorems \ref{teo}, \ref{Res-Lee}, and \ref{teo-sp}.
\begin{corollary}\label{appcert:cor:M-reduction}
Let $n\geq1$ and let $\lambda<nm$.  Then the following assertions hold.
\begin{enumerate}[label=\textup{(\roman*)}]
\item The bounded triangular operator
\begin{equation*}\label{appcert:eq:T-lambda-definition}
 \mathcal T_\lambda^{[n]}
 :=
 \begin{pmatrix}
  1&G_\lambda^{\circ,(n)}\\
  0&1
 \end{pmatrix}
\end{equation*}
on $\mathcal H_+^{[n]}\oplus\mathcal H_-^{[n]}$ restricts to a bounded
isomorphism
\begin{equation*}
 \mathcal T_\lambda^{[n]}
:
 \dom(H^{\circ,(n)})\oplus\dom(M_\lambda^{(n-1)})
 \longrightarrow\dom(H_\mu^{[n]})
\end{equation*}
when the domains are equipped with their graph norms.  Its inverse is the
restriction of
\begin{equation*}
\mathcal T_\lambda^{[n]-1}
 =
 \begin{pmatrix}
  1&-G_\lambda^{\circ,(n)}\\
  0&1
 \end{pmatrix}.
\end{equation*}
Moreover,
\begin{equation}\label{appcert:eq:M-factorization}
 \mathcal T_\lambda^{[n]*}
 (H_\mu^{[n]}-\lambda)
\mathcal T_\lambda^{[n]}
 =
 \begin{pmatrix}
  H^{\circ,(n)}-\lambda&0\\
  0&-M_\lambda^{(n-1)}
 \end{pmatrix}.
\end{equation}

\item\label{item:spcor1} One has
\begin{equation*}\label{appcert:eq:rho-spectrum-equivalence}
\begin{aligned}
 \lambda\in\rho(H_\mu^{[n]})
 &\quad\Longleftrightarrow\quad
 0\in\rho(M_\lambda^{(n-1)}),\\
 \lambda\in\sigma(H_\mu^{[n]})
 &\quad\Longleftrightarrow\quad
 0\in\sigma(M_\lambda^{(n-1)}).
\end{aligned}
\end{equation*}
More precisely,
\begin{equation*}\label{iff-sp-[n]}
 \lambda\in\sigma_{\natural}(H_\mu^{[n]})
 \quad\Longleftrightarrow\quad
 0\in\sigma_{\natural}(M_\lambda^{(n-1)}),
 \qquad \natural=p,c,d,\mathrm{ess}.
\end{equation*}
Both
residual spectra are empty.  If the equivalent resolvent conditions hold,
then
\begin{equation}\label{appcert:eq:inverse-factorization}
 (H_\mu^{[n]}-\lambda)^{-1}
 =
\mathcal T_\lambda^{[n]}
 \begin{pmatrix}
  (H^{\circ,(n)}-\lambda)^{-1}&0\\
  0&-(M_\lambda^{(n-1)})^{-1}
 \end{pmatrix}
\mathcal T_\lambda^{[n]*}
.\end{equation}
Equivalently, for every
$z\in\rho(H_\mu^{[n]})\cap\bigl(\CO\setminus[nm,+\infty)\bigr)$,
\begin{equation}\label{appcert:eq:krein-M-resolvent}
(-H_{\mu}^{[n]}+z)^{-1}
=
\begin{pmatrix}
 R^{\circ,(n)}_{z}
 +G^{\circ,(n)}_{z}(M^{(n-1)}_{z})^{-1}
  (G^{\circ,(n)}_{\bar z})^*
 &
 G^{\circ,(n)}_{z}(M^{(n-1)}_{z})^{-1}
 \\
 (M^{(n-1)}_{z})^{-1}(G^{\circ,(n)}_{\bar z})^*
 &
 (M^{(n-1)}_{z})^{-1}
\end{pmatrix}.
\end{equation}

\item\label{item:spcor2} The map
\begin{equation*}\label{appcert:eq:eigenspace-isomorphism}
 \ker(M_\lambda^{(n-1)})\ni \xiq
 \longmapsto G_\lambda^{\circ,(n)}\xiq
 \oplus \xiq
 \in\ker(H_\mu^{[n]}-\lambda)
\end{equation*}
is a linear isomorphism.  In particular,
\begin{equation*}\label{appcert:eq:multiplicity-equality}
 \dim\ker(H_\mu^{[n]}-\lambda)
 =\dim\ker(M_\lambda^{(n-1)}).
\end{equation*}

\item\label{item:spcor3} Ordinary and singular Weyl sequences can be
transferred in both directions.  If $\{\Psi_j\}$ is such a Weyl sequence for
$H_\mu^{[n]}$ at $\lambda$ and
\begin{equation*}
 (\varphi_j,\xiq
 _j):=\mathcal T_\lambda^{[n]-1}
 \Psi_j,
\end{equation*}
then $\varphi_j\to0$, $\liminf_j\|\xiq
_j\|>0$, and
$\{\xiq
_j/\|\xiq
_j\|\}$ is a Weyl sequence of the same kind for
$M_\lambda^{(n-1)}$ at zero.  Conversely, if $\{\xiq
_j\}$ is an ordinary or
singular Weyl sequence for $M_\lambda^{(n-1)}$ at zero, then
\begin{equation*}\label{appcert:eq:lifted-singular-weyl-sequence}
 \left\{
 \frac{G_\lambda^{\circ,(n)}\xiq
 _j\oplus \xiq
 _j}
 {\|G_\lambda^{\circ,(n)}\xiq
 _j\oplus \xiq
 _j\|}
 \right\}_j
\end{equation*}
is a Weyl sequence of the same kind for $H_\mu^{[n]}$ at $\lambda$.
\end{enumerate}
\end{corollary}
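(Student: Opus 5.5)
The plan is to obtain Corollary~\ref{appcert:cor:M-reduction} as the restriction of Theorems~\ref{Res-Lee} and~\ref{teo-sp} to the invariant subspace $\IS^{[n]}$. The key observation is that all the ingredients respect the excitation number. Thus $H^\circ$, $a$, $G^\circ_z$, $S^\circ$ and $M_z$ map the $\ell$-boson components appropriately, and the abstract objects of Subsection~\ref{s:abstractlee} decompose as direct sums: $G^\circ_z$ decomposes into the $G^{\circ,(n)}_z$, $M_z$ into the $M^{(n-1)}_z$, and $\mathcal T_z$ into the $\mathcal T^{[n]}_z$.

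One subtlety is that Theorem~\ref{teo-sp} requires $\lambda\in\varrho(H^\circ)$, and this fails for $\lambda\in(0,nm)$ in the full Fock space. I will therefore not invoke the global statement. Instead I will redo its short proof sector-wise, which only needs $\lambda\in\varrho(H^{\circ,(n)})=\CO\setminus[nm,+\infty)$. This is exactly the hypothesis $\lambda<nm$ of the corollary. The sector operators $G^{\circ,(n)}_\lambda$ are bounded by the explicit estimate displayed after \eqref{appcert:eq:G-lambda-explicit}.

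For (i), I will show that $G^{\circ,(n)}_\lambda-G^{\circ,(n)}$ maps into $\dom(H^{\circ,(n)})$; this follows from the resolvent identity. It implies that $\mathcal T^{[n]}_\lambda$ maps $\dom(H^{\circ,(n)})\oplus\dom(H^{\circ,(n-1)})$ bijectively onto \eqref{eq:sector-domain-basepoint}, giving \eqref{appcert:eq:lambda-domain-param}. Boundedness in the graph norms comes from the identity \eqref{appcert:eq:sector-action-parametrized}, combined with the relative boundedness of $M^{(n-1)}_\lambda$ with respect to $H^{\circ,(n-1)}$ (from Lemma~\ref{lem:S-construction-estimates} and the boundedness of $(\lambda-\lambda_\star)(G^{\circ,(n)})^*G^{\circ,(n)}_\lambda$). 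The inverse of $\mathcal T^{[n]}_\lambda$ is explicit. For the factorization \eqref{appcert:eq:M-factorization}, I will verify \eqref{appcert:eq:sector-action-parametrized} directly from \eqref{eq:sector-action-basepoint}. This uses $H^{\circ,(n)}G^{\circ,(n)}_\lambda=\lambda G^{\circ,(n)}_\lambda-a^{*,(n)}$ in $\H^*$ together with the identity $a^{(n-1)}(G^{\circ,(n)}-G^{\circ,(n)}_\lambda)=(\lambda-\lambda_\star)(G^{\circ,(n)})^*G^{\circ,(n)}_\lambda$. I then apply $\mathcal T^{[n]*}_\lambda$, using $(G^{\circ,(n)}_\lambda)^*(H^{\circ,(n)}-\lambda)\varphi=-a^{(n-1)}\varphi$, which cancels the off-diagonal term.

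For (ii)–(iv), the argument copies the proof of Theorem~\ref{teo-sp}. Since $\mathcal T^{[n]}_\lambda$ and $\mathcal T^{[n]*}_\lambda$ are bounded isomorphisms and $H^{\circ,(n)}-\lambda$ is boundedly invertible, the factorization transfers the following properties between $H^{[n]}_\mu-\lambda$ and $M^{(n-1)}_\lambda$: injectivity, closed range, dense range, and bounded invertibility. Self-adjointness of $M^{(n-1)}_\lambda$ (a Kato–Rellich argument, since the perturbation is small) then gives the spectral equivalences and the emptiness of the residual spectra. Inverting the factorization yields \eqref{appcert:eq:inverse-factorization}, and multiplying out the block matrices gives \eqref{appcert:eq:krein-M-resolvent}; for complex $z$ the same factorization holds with $\mathcal T_{\bar z}^*$, exactly as in Theorem~\ref{Res-Lee}. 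The kernel isomorphism is read off from the factorization: the first component must vanish. The Weyl sequence transfer follows as in Theorem~\ref{teo-sp}. In one direction, $\varphi_j\to0$ because $(H^{\circ,(n)}-\lambda)\varphi_j\to0$ and $H^{\circ,(n)}-\lambda$ is boundedly invertible. Normalization is possible because the factor $\|G^{\circ,(n)}_\lambda\phi_j\oplus\phi_j\|$ is bounded between $1$ and $\sqrt{1+\|G^{\circ,(n)}_\lambda\|^2}$. Weak convergence is preserved by bounded maps. The discrete and essential statements then follow from Weyl's criterion together with the multiplicity equality.

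The main obstacle is justifying the factorization rigorously on the domains, in particular giving meaning to $H^{\circ,(n)}G_\lambda^{\circ,(n)}\phi$ in the extended scale. I will handle this by working with $\varphi'=\varphi+(G^{\circ,(n)}_\lambda-G^{\circ,(n)})\phi\in\dom(H^{\circ,(n)})$, so that only regular objects appear, and then invoking the identity \eqref{utile} restricted to the sector.
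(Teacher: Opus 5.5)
Your proposal is correct and matches the paper's proof in substance. The paper instantiates the abstract framework of Theorems~\ref{teo}, \ref{Res-Lee} and~\ref{teo-sp} directly with $\F_1^\circ=\mathcal H_+^{[n]}$, $\F_2^\circ=\mathcal H_-^{[n]}$, $H_1^\circ=H^{\circ,(n)}$, $H_2^\circ=H^{\circ,(n-1)}$, $a=a^{(n-1)}$ and $S_2^\circ=S^{\circ,(n-1)}$, so $\lambda<nm$ already places $\lambda\in\varrho(H_1^\circ)$ and the global obstruction you point out never arises (it in fact concerns only $\lambda\in\{0\}\cup[m,nm)$); your sector-wise rerun of the factorization \eqref{eq:abstract-Lee-factorization}, the kernel isomorphism and the Weyl-sequence transfer is exactly what citing those abstract results with these identifications amounts to.
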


\begin{proof}
Apply Theorem~\ref{teo} with the identifications
\begin{equation*}
 \F_1^\circ=\mathcal H_+^{[n]},
 \qquad
 \F_2^\circ=\mathcal H_-^{[n]},
 \qquad
 H_1^\circ=H^{\circ,(n)},
 \qquad
 H_2^\circ=H^{\circ,(n-1)},
\end{equation*}
and take $a=a^{(n-1)}$ and
$S_2^\circ=S^{\circ,(n-1)}$.  Equations
\eqref{eq:sector-domain-basepoint} and
\eqref{eq:sector-action-basepoint} then show that the abstract operator
$H_{S_2^\circ}$ is precisely $H_\mu^{[n]}$, while \eqref{eq:M-sector-definition-construction1}
compared with Eq. \eqref{Mz}, with $\lambda_{1,\star} = \lambda_{\star}$, identifies the reduced operator $M_z$ with $M_z^{(n-1)}$.

Since
\begin{equation*}
 \lambda<nm=\inf\sigma(H^{\circ,(n)}),
\end{equation*}
one has $\lambda\in\rho(H_1^\circ)$, as required in
Theorem~\ref{teo-sp}. Item~\textup{(i)} follows from the proof of Theorem \ref{Res-Lee}.
The factorization~\eqref{appcert:eq:M-factorization} and the bounded
invertibility of $\mathcal T_\lambda^{[n]}$ and its adjoint also give
the claimed equivalence of graph norms. 
Items~\textup{(ii)}--\textup{(iv)}  follow directly from
Theorem~\ref{teo-sp},  except \eqref{appcert:eq:inverse-factorization} that is obtained by taking inverses in
\eqref{appcert:eq:M-factorization} and 
\eqref{appcert:eq:krein-M-resolvent} that follows from 
Theorem~\ref{Res-Lee}.
\end{proof}
  In what follows,  we use the renormalized parameter
\begin{equation}\label{eq:mu-g-definition}
  \mu_g:=\mu-2\pi^2g^2\sqrt{m-\nu}.
\end{equation}
When $\mu_g<m$, we also set
\begin{equation}
  \lambda_1:=m-\left(\sqrt{\pi^4g^4+m-\mu_g}-\pi^2g^2\right)^2 . \label{eq:lambda-one-definition}
\end{equation}
If $g\neq0$, the equivalent expression
\begin{equation*}
 \lambda_1=\mu_g+2\pi^4g^4
 \left(\sqrt{1+\frac{m-\mu_g}{\pi^4g^4}}-1\right)
\end{equation*}
is sometimes useful.  In all cases, $\mu_g\leq\lambda_1<m$, with equality
$\mu_g=\lambda_1$ precisely when $g=0$.  Theorem~\ref{thm:n1} below shows
that $\lambda_1$ is a simple isolated eigenvalue of $H_\mu^{[1]}$.

For real $\lambda<nm$, we shall also use
\begin{equation*}\label{appcert:eq:md}
  m_\lambda(K)
  :=m_\lambda^{(n-1)}(K)
  =|K|^2+(n-1)m-\lambda+\mu_g
  +2\pi^2g^2\sqrt{|K|^2+nm-\lambda}.
\end{equation*}
In particular,
\begin{equation}\label{eq:M-scalar-sector-zero}
  M_z^{(0)}
  =z-\mu_g-2\pi^2g^2\sqrt{m-z}.
\end{equation}
The equation $M_\lambda^{(n-1)}\xiq
=0$ is therefore equivalent to
\begin{equation*}\label{appcert:eq:M-equation}
  \bigl(M_\lambda^{d,(n-1)}-M_\lambda^{od,(n-1)}\bigr)\xiq
  =0.
\end{equation*}
For reference, the off-diagonal term has the quadratic-form representation
\begin{equation*}\label{appcert:eq:sod-form}
  \langle \xiq
  ,M_\lambda^{od,(n-1)}\xiq
  \rangle
  =(n-1)g^2\int_{\RE^{3(n-2)}}d\hat K
  \int_{\RE^3}dk\int_{\RE^3}dp \frac{\overline{\xiq
  (\hat K,k)}\xiq
  (\hat K,p)}
  {|\hat K|^2+|k|^2+|p|^2+nm-\lambda},
\end{equation*}
when $n\geq2$.  

\section[Essential spectrum]{Inclusions in the essential spectrum}
\label{sec:essential-spectrum}
 \subsection{The free part}

\begin{theorem}\label{thm:free-essential-spectrum}
For every $n\geq1$,
\begin{equation}\label{eq:free-essential-inclusion}
  [nm,+\infty)\subset\sigma_{\mathrm{ess}}(H_\mu^{[n]}).
\end{equation}
\end{theorem}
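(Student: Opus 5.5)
We prove \eqref{eq:free-essential-inclusion} by building, for each fixed $\lambda\ge nm$, a singular Weyl sequence for $H_\mu^{[n]}$ at $\lambda$. Weyl's criterion then gives $\lambda\in\sigma_{\mathrm{ess}}(H_\mu^{[n]})$. Corollary~\ref{appcert:cor:M-reduction} is stated only for $\lambda<nm$, so it is not directly available here. We therefore work with the domain representation \eqref{eq:sector-domain-basepoint} and the action \eqref{eq:sector-action-basepoint} at the base point $\lambda_\star$. We take trial vectors with vanishing second component, $\xiq=0$, so $\Psi=\varphi\oplus0$ with $\varphi\in\dom(H^{\circ,(n)})$. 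By \eqref{eq:sector-action-basepoint},
\begin{equation*}
(H_\mu^{[n]}-\lambda)(\varphi\oplus0)=(H^{\circ,(n)}-\lambda)\varphi\oplus a^{(n-1)}\varphi .
\end{equation*}
We need $\varphi_j$, normalized and weakly null, with $\|(|K|^2+nm-\lambda)\varphi_j\|\to0$ and $a^{(n-1)}\varphi_j\to0$.

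First we construct the $\varphi_j$. Write $\lambda=nm+\kappa^2$ with $\kappa\ge0$ and choose momenta $k^*_1,\dots,k^*_n$ with $\sum_i|k^*_i|^2=\kappa^2$; for instance $k^*_1=(\kappa,0,0)$ and the others zero. Pick a fixed nonnegative $\chi\in C_c^\infty(\RE^3)$ with $\|\chi\|_2=1$. Set $\chi_{j,i}(k)=j^{3/2}\chi(j(k-k^*_i))$, a profile concentrating at $k^*_i$ at scale $1/j$. Multiply it by a modulation $e^{-ik\cdot x_j}$ with $|x_j|\to\infty$, which translates the boson far away in position space. Let $\varphi_j$ be the normalized symmetrization of $\chi_{j,1}\otimes\cdots\otimes\chi_{j,n}$. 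The multiplier $|K|^2+nm-\lambda$ is $O(1/j)$ on the support, which gives $(H^{\circ,(n)}-\lambda)\varphi_j\to0$. The modulation, or alternatively the shrinking support, gives $\varphi_j\rightharpoonup0$.

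The main obstacle is the condition $a^{(n-1)}\varphi_j\to0$. The annihilator integrates one momentum against the constant form factor, and $\int\chi_{j,i}\,dp=j^{-3/2}\int\chi$, so the raw integral is $O(j^{-3/2})$. The remaining factors have $L^2$ norm $1$, hence $\|a^{(n-1)}\varphi_j\|\lesssim |g|\sqrt n\, j^{-3/2}\to0$ already without modulation. To be safe we can also impose $\int \chi_{j,i}\,dp=0$ exactly, using profiles in the dense set $V$ of Remark~\ref{r:dense-kernel}. Such profiles can be chosen with $\chi$ having zero mean, and then $a^{(n-1)}\varphi_j=0$ identically.

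Two routine checks remain. If several $k^*_i$ coincide, the symmetrized norm does not degenerate; this holds because we use the same normalized profile, or slightly separated supports. The case $\lambda=nm$, that is $\kappa=0$, also needs care. It is handled by the shrinking supports, or because $\sigma_{\mathrm{ess}}$ is closed. With these checks, $\Psi_j=\varphi_j\oplus0$ is a singular Weyl sequence, and since $\lambda\ge nm$ was arbitrary, this proves \eqref{eq:free-essential-inclusion}.
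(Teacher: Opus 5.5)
Your proposal is correct and follows essentially the paper's argument: trial vectors $\Phi\oplus0$ (i.e.\ $\xiq=0$ in \eqref{eq:sector-domain-basepoint}) built from concentrating momentum packets, with $(H^{\circ,(n)}-\lambda)\Phi\to0$ and weak convergence to zero both coming from the shrinking supports, and $\|a^{(n-1)}\Phi\|=O(j^{-3/2})$. The paper centers all $n$ packets at the same point $q_\lambda$ with $n|q_\lambda|^2=\lambda-nm$, so the product is already symmetric and normalized; this makes your symmetrization, modulation and zero-mean refinements unnecessary. Note that the zero-mean option is incompatible with your choice $\chi\geq0$, though nothing in your argument depends on it.
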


\begin{proof}
Fix $\lambda\geq nm$.  Choose $q_\lambda\in\mathbb R^3$ such that
\begin{equation*}
  n|q_\lambda|^2=\lambda-nm.
\end{equation*}
Let $\eta\in C_c^\infty(\mathbb R^3)$ satisfy $\|\eta\|_2=1$, and set
\begin{equation*}
  \eta_\ell(k):=\ell^{3/2}\eta\bigl(\ell(k-q_\lambda)\bigr),
\end{equation*}
\begin{equation*}
  \Phi_\ell^{(n)}(k_1,\ldots,k_n)
  :=\prod_{j=1}^n\eta_\ell(k_j).
\end{equation*}
Then $\Phi_\ell^{(n)}\in C_c^\infty(\mathbb R^{3n})\cap
L_b^2(\mathbb R^{3n})$ and $\|\Phi_\ell^{(n)}\|_2=1$.  Moreover,
\begin{equation}\label{eq:free-packets-weak}
  \Phi_\ell^{(n)}\rightharpoonup0.
\end{equation}
Indeed, the measure of $\supp\Phi_\ell^{(n)}$ tends to zero, and for every
$F\in L^2(\mathbb R^{3n})$ one has
\begin{equation*}
  |\langle F,\Phi_\ell^{(n)}\rangle|
  \leq
  \|F\|_{L^2(\supp\Phi_\ell^{(n)})}
  \|\Phi_\ell^{(n)}\|_2
  \longrightarrow0.
\end{equation*}

Put
\begin{equation*}
  \Psi_\ell:=\Phi_\ell^{(n)}\oplus0\in\mathcal H_+^{[n]}\oplus
  \mathcal H_-^{[n]}.
\end{equation*}
The fixed basepoint domain parametrization \eqref{eq:sector-domain-basepoint}, with
$\new{\xiq}
=0$, shows that $\Psi_\ell\in\dom(H_\mu^{[n]})$.  The action of the
Hamiltonian on these vectors is
\begin{equation*}
  (H_\mu^{[n]}-\lambda)\Psi_\ell
  =
  \bigl((H^{\circ,(n)}-\lambda)\Phi_\ell^{(n)}\bigr)
  \oplus a^{(n-1)}\Phi_\ell^{(n)}.
\end{equation*}
Since the support of each one-particle factor shrinks to $q_\lambda$,
\begin{equation*}
  \|(H^{\circ,(n)}-\lambda)\Phi_\ell^{(n)}\|_2
  \leq
  \sup_{K\in\supp\Phi_\ell^{(n)}}
  \left||K|^2+nm-\lambda\right|
  \longrightarrow0.
\end{equation*}
Furthermore,
\begin{equation*}
  a^{(n-1)}\Phi_\ell^{(n)}
  =g\sqrt n
  \left(\int_{\mathbb R^3}\eta_\ell(p)\,dp\right)
  \prod_{j=1}^{n-1}\eta_\ell(k_j),
\end{equation*}
and hence
\begin{equation*}
\begin{aligned}
  \|a^{(n-1)}\Phi_\ell^{(n)}\|_2
  &=|g|\sqrt n
  \left|\int_{\mathbb R^3}\eta_\ell(p)\,dp\right| \\
  &=|g|\sqrt n\,\ell^{-3/2}
  \left|\int_{\mathbb R^3}\eta(p)\,dp\right|
  \longrightarrow0.
\end{aligned}
\end{equation*}
Together with \eqref{eq:free-packets-weak}, this proves that
$\{\Psi_\ell\}$ is a singular Weyl sequence for $H_\mu^{[n]}$ at
$\lambda$.  Since $\lambda\geq nm$ was arbitrary,
\eqref{eq:free-essential-inclusion} follows.
\end{proof}

\subsection{Inclusions generated by lower sectors}
The main result of this subsection is the following theorem. 

\begin{theorem}\label{thm:lower-sector-inclusion}
Let $n\geq2$, let $1\leq r<n$, and let
\begin{equation}\label{eq:lower-sector-spectral-point}
  \zeta\in\sigma(H_\mu^{[r]}),
  \qquad
  \zeta<rm.
\end{equation}
Then
\begin{equation}\label{eq:lower-sector-essential-inclusion}
  [\zeta+(n-r)m,+\infty)
  \subset\sigma_{\mathrm{ess}}(H_\mu^{[n]}).
\end{equation}
\end{theorem}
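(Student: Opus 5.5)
Fix $\lambda\geq\zeta+(n-r)m$ and write $\lambda=\zeta+E$ with $E\geq(n-r)m$. We construct a singular Weyl sequence for $H_\mu^{[n]}$ at $\lambda$. The plan is to tensor an approximate eigenstate of the lower sector with $n-r$ free bosons, and then to show that the interaction terms coupling the two groups vanish in the limit. The most convenient place to do this is the reduced picture, using Corollary~\ref{appcert:cor:M-reduction} both at level $r$ and at level $n$.

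Since $\zeta<rm$ and $\zeta\in\sigma(H_\mu^{[r]})$, the corollary gives $0\in\sigma(M_\zeta^{(r-1)})$. Hence there is a normalized Weyl sequence $\{\chi_j\}\subset\dom(H^{\circ,(r-1)})$ with $M_\zeta^{(r-1)}\chi_j\to0$. By a density argument we may take each $\chi_j$ smooth and compactly supported in momentum space (when $r=1$ it is the constant $1$). Next choose $q$ with $(n-r)|q|^2=E-(n-r)m$ and a packet $\eta_\ell$ as in the proof of Theorem~\ref{thm:free-essential-spectrum}. Multiply it by a spatial translation phase $e^{-i k\cdot x_\ell}$, or use different spreads, so that the free bosons are localized in momentum near $q$. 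Then set
\begin{equation*}
\xiq_{j,\ell}:=\mathcal S\Bigl(\chi_j\otimes\eta_\ell^{\otimes(n-r)}\Bigr)\in L_b^2(\RE^{3(n-1)}),
\end{equation*}
where $\mathcal S$ is the normalized symmetrization. We then aim to show that $M_\lambda^{(n-1)}\xiq_{j,\ell}\approx0$ after letting $\ell\to\infty$ and then $j\to\infty$ along a diagonal subsequence. Weak convergence to zero comes from the shrinking support, exactly as in Theorem~\ref{thm:free-essential-spectrum}. The lift in Corollary~\ref{appcert:cor:M-reduction}(iv) then yields a singular Weyl sequence for $H_\mu^{[n]}$, provided $\lambda<nm$.

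The case $\lambda\geq nm$ is already covered by Theorem~\ref{thm:free-essential-spectrum}. The case $\lambda<nm$ is automatic only if we can pass to $M_\lambda^{(n-1)}$; otherwise we should work directly with $H_\mu^{[n]}$. To do that, lift $\Psi_j=G_\zeta^{\circ,(r)}\chi_j\oplus\chi_j$ and take $\Phi_{j,\ell}:=\mathcal S(\Psi_j\otimes\eta_\ell^{\otimes(n-r)})$ componentwise. It is cleaner to argue at the level of $H_\mu^{[n]}$ directly, because the domain condition is local in the singular variable. We check that $\Phi_{j,\ell}\in\dom(H_\mu^{[n]})$ using the parametrization \eqref{appcert:eq:lambda-domain-param} with $\lambda$ replaced by a suitable shifted point. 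The key algebraic identity is that, on the component where all $n-r$ extra bosons sit on the packets, the kernel of $G^{\circ,(n)}_{z}$ restricted to that tensor structure agrees with $G^{\circ,(r)}_{z-\sum\omega(k_i)}$. Since $\omega(k_i)\approx\omega(q)$ on the support, this gives
\begin{equation*}
(H_\mu^{[n]}-\lambda)\Phi_{j,\ell}\approx\mathcal S\bigl(((H_\mu^{[r]}-\zeta)\Psi_j)\otimes\eta_\ell^{\otimes(n-r)}\bigr)+\text{cross terms}.
\end{equation*}

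The main obstacle is controlling the cross terms, and there are two kinds. The first comes from the annihilator $a^{(n-1)}$ and the off-diagonal parts $M^{od}$ or $S^{\circ,od}$ acting on a packet variable. Here the integral $\int\eta_\ell(p)\,dp=O(\ell^{-3/2})$ kills the $a$-terms. For the off-diagonal exchange integrals, a Schur or Cauchy--Schwarz bound like the one in Lemma~\ref{lem:S-construction-estimates}, combined with the shrinking support, should give $O(\ell^{-3/2})$ in norm as well. The second kind comes from the mismatch in the diagonal multiplier: $\sqrt{|K|^2+nm-\lambda}$ versus the frozen $\sqrt{|K'|^2+rm-\zeta}$ evaluated with $|k_i|^2\approx|q|^2$. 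This error is bounded by the support width of the packets and tends to $0$ for fixed $j$, which is where the compact support of $\chi_j$ is used. Finally, the symmetrization overlap terms vanish as $\ell\to\infty$, so $\|\Phi_{j,\ell}\|$ is bounded below. A diagonal choice $\ell=\ell(j)$ then produces the singular Weyl sequence, and Weyl's criterion gives \eqref{eq:lower-sector-essential-inclusion}.
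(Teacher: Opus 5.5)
Your reduced-operator argument is correct and has the same architecture as the paper's proof. You restrict to $\lambda\in[\zeta+(n-r)m,nm)$, tensor a Weyl sequence of $M^{(r-1)}_\zeta$ with $n-r$ momentum packets, use the exact identity \eqref{eq:cluster-reconstruction-general} with $\lambda_Q=\lambda-(n-r)m-|Q|^2$, and return to $H_\mu^{[n]}$ through Corollary~\ref{appcert:cor:M-reduction}(iv).

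The real difference is the decoupling mechanism. The paper translates the spectators in position space by $y$ and, for each fixed $\ell$, lets $|y|\to\infty$. The Riemann--Lebesgue lemma then kills both the spectator exchange terms (Lemma~\ref{lem:spectator-decoupling}) and the block-mixing terms of the symmetrization (Lemma~\ref{lem:asymptotic-symmetrization}). You use only the momentum concentration of the packets:
\begin{itemize}
\item Since $\bigl|\int\eta_\ell(p)\,(|p|^2+|X|^2+nm-\lambda)^{-1}dp\bigr|\leq\|\eta_\ell\|_{L^1}\,(|K|^2+|q_a|^2+nm-\lambda)^{-1}$ and $\|\eta_\ell\|_{L^1}=O(\ell^{-3/2})$, integrating out $q_a$ gives $\|M^{od}_{\lambda,r-1+a}\widetilde w\|_2=O(\ell^{-3/2})$, uniformly in the normalized lower-sector vector.
\item With the indices $j,\ell$ kept separate, every block-mixing permutation forces some $K$-variable into the shrinking packet support. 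So those overlaps vanish as $\ell\to\infty$ for fixed $j$.
\item Weak convergence holds uniformly in $j$, via $\|F\mathbf 1_{\{Q\in\supp\eta_\ell^{\otimes(n-r)}\}}\|_2\to0$. Note that for $r\geq2$ the support does not have small measure, so the free-sector argument does not apply verbatim.
\end{itemize}
Your route is shorter and quantitative, and it makes the translation phase you mention unnecessary. The paper's spatial separation is the standard HVZ cluster mechanism; it does not rely on the $L^1$-smallness of the packets and works with a single index shared by the lower-sector Weyl sequence and the packets.

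Two minor corrections. First, the shift $\lambda_Q\neq\zeta$ also enters the off-diagonal terms in the $K$-variables, not only the diagonal multiplier. Remark~\ref{r:5.3} bounds all of $M^{(r-1)}_{\lambda_Q}-M^{(r-1)}_\zeta$ uniformly, so compact support of $\chi_j$ is not needed there. Second, the detour to a direct construction in $H_\mu^{[n]}$ is superfluous, because Corollary~\ref{appcert:cor:M-reduction}(iv) applies for every $\lambda<nm$. If you pursue it anyway, $\Phi_{j,\ell}\in\dom(H_\mu^{[n]})$ requires the $n$-boson component to carry an extra factor $\sqrt{n/r}$ relative to the $(n-1)$-boson component, on top of a common symmetrization; the Fock-space symmetric tensor product provides exactly this. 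Without it, the singular part of the first component equals $\sqrt{r/n}$ times $G^{\circ,(n)}_\lambda$ applied to the second, and the difference is not in $\dom(H^{\circ,(n)})$.
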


We postpone the proof of the theorem to comment on its consequences. 
\begin{corollary}\label{cor:recursive-essential-threshold}
Set
\begin{equation*}
  E_r:=\inf\sigma(H_\mu^{[r]}),
  \qquad r\geq1,
  \qquad
  E_0:=0,
\end{equation*}
and, for $n\geq1$, define
\begin{equation}\label{eq:recursive-threshold}
  \tau_n
  :=\min_{0\leq r<n}
  \bigl\{E_r+(n-r)m\bigr\}.
\end{equation}
Then
\begin{equation}\label{eq:recursive-essential-inclusion}
  [\tau_n,+\infty)
  \subset\sigma_{\mathrm{ess}}(H_\mu^{[n]}).
\end{equation}
\end{corollary}

\begin{proof}
Theorem~\ref{thm:free-essential-spectrum} implies $E_r\leq rm$ for every
$r\geq1$.  If the minimum in \eqref{eq:recursive-threshold} is attained at
$r=0$, then \eqref{eq:recursive-essential-inclusion} is precisely
Theorem~\ref{thm:free-essential-spectrum}.  If it is attained at some
$r\geq1$ and $E_r<rm$, then $E_r\in\sigma(H_\mu^{[r]})$ and
Theorem~\ref{thm:lower-sector-inclusion} applies with $\zeta=E_r$.  Finally,
if $E_r=rm$, the corresponding value in
\eqref{eq:recursive-threshold} is $nm$, and the free inclusion applies again.
\end{proof}

\begin{remark}\label{rem:lower-sector-HVZ-position}
Theorem~\ref{thm:lower-sector-inclusion}, together with the free essential-spectrum inclusion of
Theorem~\ref{thm:free-essential-spectrum}, gives one inclusion in the expected sector-wise HVZ formula: after minimizing over the proper lower
sectors, one obtains the inclusion stated in
Corollary~\ref{cor:recursive-essential-threshold}, equation \eqref{eq:recursive-essential-inclusion}.  The full formula
also requires the reverse inclusion, namely the exclusion of essential spectrum
below the lowest threshold.  That converse is proved below for the first
two sectors in Corollaries~\ref{cor:essential-spectrum-sector-one}
and~\ref{cor:essential-spectrum-sector-two}.
Moreover, Theorem~\ref{ub-n} proves it for every $n\geq2$ when $\mu_g\geq m$,
or when $\mu_g<m$ and $\pi^2g^2/\sqrt{m-\lambda_1}\leq1$.
In these regimes $\tau_n=\thr_n$, with $\thr_n$ defined in
\eqref{eq:intro-BS-threshold}, and
$\sigma_{\mathrm{ess}}(H_\mu^{[n]})=[\tau_n,+\infty)$.
For $n\geq3$ outside these regimes, the reverse inclusion remains open.
\end{remark}

\subsubsection{Proof of Theorem \ref{thm:lower-sector-inclusion}}
The logic of the proof is to construct, in a fixed sector $n$,
singular Weyl sequences by combining an approximate spectral state of a proper
lower sector $r<n$ with $s:=n-r$ additional bosons that are moved to spatial
infinity.  The reduced operator $M_z^{(n-1)}$ in the $n$-th sector acts on $n-1$
bosonic momentum variables.  In the construction below these variables are
split as
\begin{equation*}
  X=(K,Q),
  \qquad
  K=(k_1,\ldots,k_{r-1})\in\mathbb R^{3(r-1)},
  \qquad
  Q=(q_1,\ldots,q_s)\in\mathbb R^{3s},
\end{equation*}
so that $n-1=(r-1)+s$.  The variables in $K$ are the reduced variables inherited
from the lower sector, whereas the variables in $Q$ will be called the
\emph{spectator variables}, or spectator momenta.  This terminology means that,
in the trial sequence, the corresponding additional bosons are translated far
away from the lower-sector configuration.  The related interaction terms then vanish
asymptotically, while their free energy survives in the limiting energy.  Notice that when
$r=1$, the block $K$ is absent and the lower reduced space is simply
$\mathbb C$.

To simplify the exposition, we first recall the dependence of the
lower-sector reduced operator $M_z^{(r-1)}$ on its spectral parameter $z$, and then prove two auxiliary lemmas.  We determine the
normalization and weak limit of the bosonic symmetrization of a lower-sector
state multiplied by translated spectator packets.  Finally, we show that every
off-diagonal term acting on a spectator variable disappears when the spectators
are sent to spatial infinity.

\begin{remark}\label{r:5.3}
By Remark \ref{r:3.9}, for any $r\in\mathbb N$
\begin{equation*}\label{eq:common-domain-M}
  \dom(M_z^{(r-1)})=\mathcal D_{r-1},
  \qquad z\in\CO\backslash[rm,+\infty),
\end{equation*}
with 
\begin{equation*}
  \mathcal D_{r-1}
  :=
  \left\{
  f\in L_b^2(\mathbb R^{3(r-1)}):
  (1+|K|^2)f(K)\in L_b^2(\mathbb R^{3(r-1)})
  \right\} 
\end{equation*}
and $\mathcal D_0=\mathbb C$.  Furthermore, if $I$ is a compact interval contained in
$(-\infty,rm)$,  there exists $C_I>0$ such that
\begin{equation*}\label{eq:M-local-continuity}
  \|(M_z^{(r-1)}-M_w^{(r-1)})f\|_2
  \leq C_I|z-w|\,\|f\|_2,
  \qquad z,w\in I,
\end{equation*}
for every $f\in\mathcal D_{r-1}$.  In particular, the difference
$M_z^{(r-1)}-M_w^{(r-1)}$ extends to a bounded operator on
$L_b^2(\mathbb R^{3(r-1)})$.
\end{remark}

We next discuss the effect of symmetrizing a product in which one group of
variables is translated in position space.
We use the convention $0!=1$. 
\begin{lemma}\label{lem:asymptotic-symmetrization}
Let $r\in\mathbb N$, $s\in\mathbb N$, and put $n:=r+s$.  Let
$f\in L_b^2(\mathbb R^{3(r-1)})$ and $h\in L_b^2(\mathbb R^{3s})$ satisfy
$\|f\|_2=\|h\|_2=1$.  For $y\in\mathbb R^3$, define
\begin{align*}
  h_y(q_1,\ldots,q_s)
  :&=e^{-iy\cdot(q_1+\cdots+q_s)}h(q_1,\ldots,q_s),\\ 
  \widetilde w_y(K,Q):&=f(K)h_y(Q),
\end{align*}
where $K\in\mathbb R^{3(r-1)}$ and $Q\in\mathbb R^{3s}$.  If $P_{n-1}$ denotes
the orthogonal projection onto $L_b^2(\mathbb R^{3(n-1)})$, then
\begin{equation}\label{eq:symmetrization-norm-limit}
  \|P_{n-1}\widetilde w_y\|_2^2
  \longrightarrow
  \frac{(r-1)!s!}{(n-1)!}
  =\binom{n-1}{r-1}^{-1}
  \qquad (|y|\to\infty).
\end{equation}
Moreover,
\begin{equation}\label{eq:symmetrization-weak-limit}
  \frac{P_{n-1}\widetilde w_y}{\|P_{n-1}\widetilde w_y\|_2}
  \rightharpoonup0
  \qquad (|y|\to\infty).
\end{equation}
\end{lemma}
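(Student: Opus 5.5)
We write the symmetrization explicitly and expand the squared norm as a finite sum of overlap integrals. Write $X=(x_1,\ldots,x_{n-1})$. Then
\[
P_{n-1}\widetilde w_y=\frac{1}{(n-1)!}\sum_{\sigma\in S_{n-1}}\widetilde w_y\circ\sigma ,
\]
where $\sigma$ permutes the $n-1$ variables. Because $f$ and $h$ are already symmetric, the sum collapses onto the set $\mathcal C$ of ways to choose which $r-1$ of the $n-1$ variables are fed into $f$; the remaining $s$ go into $h_y$. Thus
\[
P_{n-1}\widetilde w_y=\binom{n-1}{r-1}^{-1}\sum_{A\in\mathcal C} f(X_A)\,h_y(X_{A^c}).
\]
Expanding $\|P_{n-1}\widetilde w_y\|^2$ gives $\binom{n-1}{r-1}^{-2}\sum_{A,B\in\mathcal C}\langle f\otimes h_y\circ A,\, f\otimes h_y\circ B\rangle$. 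The diagonal terms $A=B$ each equal $\|f\|^2\|h\|^2=1$. There are $\binom{n-1}{r-1}$ of them, so together they contribute exactly $\binom{n-1}{r-1}^{-1}$. The claim \eqref{eq:symmetrization-norm-limit} therefore reduces to showing that every off-diagonal overlap with $A\neq B$ tends to zero as $|y|\to\infty$.

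\textbf{Main step: off-diagonal overlaps.} Fix $A\neq B$. Some variable $x$ lies in $B\setminus A$. In the $A$-term, $x$ enters $h_y$ and carries the phase $e^{-iy\cdot x}$. In the $B$-term, $x$ enters $f$ and carries no phase. We handle this by density.
\begin{itemize}
\item First replace $f$ and $h$ by finite sums of product functions of individual variables (tensor products of $C_c^\infty$ functions), not necessarily symmetric. Both sides of the overlap are bounded bilinear expressions in $(f,h)$ with norm at most $\|f\|\|h\|$, uniformly in $y$, so the approximation error is uniform in $y$.
\item For product functions the overlap factorizes into one-variable integrals. Every variable in $A\cap B$ or in $A^c\cap B^c$ carries either no phase or the phase $e^{-iy\cdot x}\overline{e^{-iy\cdot x}}=1$.
\item A variable in $A^c\cap B$ or in $A\cap B^c$ carries a single uncancelled phase $e^{\mp iy\cdot x}$. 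Its one-variable factor is $\int \bar\alpha(x)\beta(x)e^{\mp i y\cdot x}\,dx$ with $\bar\alpha\beta\in L^1$, which tends to $0$ by Riemann--Lebesgue.
\item The other factors are bounded by Cauchy--Schwarz.
\end{itemize}
Hence each off-diagonal overlap tends to zero, which proves \eqref{eq:symmetrization-norm-limit}. This density and Riemann--Lebesgue bookkeeping is the only delicate point; the main care needed is the uniformity in $y$ of the approximation.

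\textbf{Weak limit.} By \eqref{eq:symmetrization-norm-limit} the denominator in \eqref{eq:symmetrization-weak-limit} is bounded below for large $|y|$. It therefore suffices to show $\langle F,P_{n-1}\widetilde w_y\rangle\to0$ for $F$ in a dense set. Since the family is bounded, we may take $F$ to be a product test function. Each term $\langle F, f(X_A)h_y(X_{A^c})\rangle$ contains at least one phase $e^{-iy\cdot x}$, because $s\ge1$. Factorizing after the same density approximation of $f$ and $h$, Riemann--Lebesgue again gives decay to zero, which proves \eqref{eq:symmetrization-weak-limit}.
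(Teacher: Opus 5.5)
Your proposal is correct and follows the paper's proof in substance. Both expand $\|P_{n-1}\widetilde w_y\|_2^2$ into overlaps: the block-preserving terms (your diagonal pairs $A=B$, the paper's $\mathfrak S_{r-1}\times\mathfrak S_s$) give exactly $\binom{n-1}{r-1}^{-1}$, and the remaining overlaps, as well as the pairings with a fixed $F$, vanish by the Riemann--Lebesgue lemma. The one difference is that your density reduction to product functions is unnecessary: the paper applies Riemann--Lebesgue directly on $\mathbb R^{3(n-1)}$ to the $L^1$ function $\overline{f(X_A)h(X_{A^c})}\,f(X_B)h(X_{B^c})$ (resp.\ $\overline{F}\,f(X_A)h(X_{A^c})$), whose Fourier transform is evaluated at a frequency whose norm is a fixed positive multiple of $|y|$.
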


\begin{proof}
Let $\mathfrak S_{n-1}$ be the permutation group and let $U_\pi$ be its
unitary representation on $L^2(\mathbb R^{3(n-1)})$.  Since $P_{n-1}$ is an
orthogonal projection,
\begin{equation}\label{eq:symmetrization-expansion}
  \|P_{n-1}\widetilde w_y\|_2^2
  =\langle\widetilde w_y,P_{n-1}\widetilde w_y\rangle
  =\frac1{(n-1)!}\sum_{\pi\in\mathfrak S_{n-1}}
  \langle\widetilde w_y,U_\pi\widetilde w_y\rangle.
\end{equation}
The function $\widetilde w_y$ is invariant under permutations of the first
$r-1$ variables and under permutations of the last $s$ variables.  Hence every
permutation in the subgroup $\mathfrak S_{r-1}\times\mathfrak S_s$ contributes
one to the sum in \eqref{eq:symmetrization-expansion}.

Consider now a permutation $\pi$ that does not preserve the two blocks.
The scalar product in \eqref{eq:symmetrization-expansion} is the Fourier
transform, evaluated at a vector whose norm tends to infinity with $|y|$, of
the function
\begin{equation*}
  \overline{f(K)h(Q)}\,(U_\pi(f\otimes h))(K,Q).
\end{equation*}
This function belongs to $L^1(\mathbb R^{3(n-1)})$ by the Cauchy--Schwarz
inequality.  The coefficient of the oscillating phase is nonzero precisely
because $\pi$ moves at least one variable from one block to the other.  The
Riemann--Lebesgue lemma therefore gives
\begin{equation*}
  \langle\widetilde w_y,U_\pi\widetilde w_y\rangle
  \longrightarrow0.
\end{equation*}
Exactly $(r-1)!s!$ permutations preserve the two blocks, and
\eqref{eq:symmetrization-norm-limit} follows.

To prove the weak convergence, let $F\in L^2(\mathbb R^{3(n-1)})$.  For every
$\pi\in\mathfrak S_{n-1}$, the scalar product
$\langle F,U_\pi\widetilde w_y\rangle$ is again the Fourier transform at a
frequency whose norm tends to infinity of an $L^1$ function; here the
$L^1$ property follows from the Cauchy--Schwarz inequality.  Hence
\begin{equation*}
  \langle F,P_{n-1}\widetilde w_y\rangle
  =\frac1{(n-1)!}\sum_{\pi\in\mathfrak S_{n-1}}
  \langle F,U_\pi\widetilde w_y\rangle
  \longrightarrow0.
\end{equation*}
The denominator in \eqref{eq:symmetrization-weak-limit} tends to the
strictly positive number $((r-1)!s!/(n-1)!)^{1/2}$, and the conclusion follows.
\end{proof}

For use in the proof below, we also introduce the operators before restriction to
the symmetric subspace.  For $n\geq2$ and $\lambda<nm$, on the full space
$L^2(\mathbb R^{3(n-1)})$, let
\begin{equation*}
  \widetilde M_\lambda^{(n-1)}
  :=-M_\lambda^{d,(n-1)}+\sum_{j=1}^{n-1} M_{\lambda,j}^{od,(n-1)},
\end{equation*}
where, for $X=(x_1,\ldots,x_{n-1})$, we set
\begin{equation*}
  X_{j,p}:=(x_1,\ldots,x_{j-1},p,x_{j+1},\ldots,x_{n-1})
\end{equation*}
and define
\begin{equation}\label{eq:single-off-diagonal-term}
  (M_{\lambda,j}^{od,(n-1)}F)(X)
  :=g^2\int_{\mathbb R^3}
  \frac{F(X_{j,p})}
  {|p|^2+|X|^2+nm-\lambda}\,dp.
\end{equation}
Thus the integration variable replaces the $j$-th coordinate at its original
position. Permutations merely relabel the summands, so
$\widetilde M_\lambda^{(n-1)}$ commutes with every permutation,
and its restriction to $L_b^2(\mathbb R^{3(n-1)})$ is $M_\lambda^{(n-1)}$.

\begin{lemma}\label{lem:spectator-decoupling}
Let $r\in\mathbb N$, $s\in\mathbb N$, and put $n:=r+s$.  Let
$\lambda<nm$, let $v\in L_b^2(\mathbb R^{3(r-1)})$, and let
$\eta\in C_c^\infty(\mathbb R^{3s})\cap L_b^2(\mathbb R^{3s})$.  Define
\begin{equation*}
  \widetilde w_y(K,Q)
  :=v(K)e^{-iy\cdot(q_1+\cdots+q_s)}\eta(Q).
\end{equation*}
Then, for every $a=1,\ldots,s$,
\begin{equation}\label{eq:spectator-decoupling}
  \|M_{\lambda,r-1+a}^{od,(n-1)}\widetilde w_y\|_2
  \longrightarrow0
  \qquad (|y|\to\infty).
\end{equation}
\end{lemma}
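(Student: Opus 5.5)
The plan is to write the operator explicitly and use that the translation phase $e^{-iy\cdot(q_1+\cdots+q_s)}$ does not factor through the kernel. Fix $a$ and set $j:=r-1+a$, so that $x_j=q_a$. By \eqref{eq:single-off-diagonal-term},
\begin{equation*}
 (M_{\lambda,j}^{od,(n-1)}\widetilde w_y)(K,Q)
 =g^2 v(K)\,e^{-iy\cdot\sum_{b\neq a}q_b}
 \int_{\mathbb R^3}\frac{e^{-iy\cdot p}\,\eta(Q_{a,p})}{|p|^2+|K|^2+|Q|^2+nm-\lambda}\,dp .
\end{equation*}
The outer phase has modulus one. The integrand in $p$ contains $|q_a|^2$ in the denominator but not the oscillating factor in $q_a$. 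Hence the norm equals $g^2$ times the $L^2$ norm of
\begin{equation*}
 F_y(K,Q):=v(K)\int_{\mathbb R^3}e^{-iy\cdot p}\,\frac{\eta(Q_{a,p})}{|p|^2+|K|^2+|Q|^2+nm-\lambda}\,dp.
\end{equation*}
This is a Fourier transform in the $p$ variable of a compactly supported integrable function, evaluated at frequency $y$. For each fixed $(K,Q)$ it tends to $0$ as $|y|\to\infty$ by the Riemann--Lebesgue lemma.

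The conclusion then follows by dominated convergence in $(K,Q)$. Set $c:=nm-\lambda>0$, and let $\supp\eta\subset\{|Q|\le R\}$. Then
\begin{equation*}
 |F_y(K,Q)|\le |v(K)|\,\frac{\|\eta\|_\infty |B_R|}{|K|^2+|q_a|^2+c}\,\mathbb 1_{\{|\hat Q_a|\le R\}},
\end{equation*}
where $\hat Q_a$ denotes the spectator variables other than $q_a$.

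The square of this bound is integrable. Integrating in $K$ gives at most $\|v\|_2^2$, since we can simply drop $|K|^2$. The factor $\mathbb 1_{\{|\hat Q_a|\le R\}}$ confines $\hat Q_a$ to a compact set. Finally, $(|q_a|^2+c)^{-2}$ is integrable over $\mathbb R^3$. Dominated convergence therefore gives $\|F_y\|_2\to0$, which is \eqref{eq:spectator-decoupling}.

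The main point to check carefully is the domination step. The variable $q_a$ is integrated out inside the operator, so the output is not compactly supported in $q_a$. Its decay must come from the denominator, and the bound $(|q_a|^2+c)^{-1}$ is square integrable in three dimensions. No regularity of $v$ is needed beyond $L^2$, because the $K$ dependence is only multiplicative and is bounded by $|v(K)|$.
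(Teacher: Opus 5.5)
Your proof is correct and follows essentially the same route as the paper: the inner $p$-integral tends to zero pointwise by the Riemann--Lebesgue lemma, and dominated convergence then applies with the majorant $|v(K)|\,\mathbf 1_C(\widehat Q_a)/(|K|^2+|q_a|^2+nm-\lambda)$, whose square is integrable because $(|q_a|^2+c)^{-2}\in L^1(\mathbb R^3)$. The only cosmetic difference is that you drop $|K|^2$ before integrating in $q_a$, whereas the paper first evaluates the $q_a$-integral exactly as $\pi^2(|K|^2+nm-\lambda)^{-1/2}$ and only then bounds it.
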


\begin{proof}
Fix $a\in\{1,\ldots,s\}$ and write
$\widehat Q_a=(q_1,\ldots,q_{a-1},q_{a+1},\ldots,q_s)$.  Formula
\eqref{eq:single-off-diagonal-term} gives
\begin{equation*}
\begin{aligned}
 &(M_{\lambda,r-1+a}^{od,(n-1)}\widetilde w_y)(K,Q) \\
 &\quad=g^2v(K)e^{-iy\cdot\sum_{b\neq a}q_b}
 \int_{\mathbb R^3}
 \frac{e^{-iy\cdot p}
 \eta(q_1,\ldots,q_{a-1},p,q_{a+1},\ldots,q_s)}
 {|p|^2+|K|^2+|Q|^2+nm-\lambda}\,dp.
\end{aligned}
\end{equation*}
For almost every $(K,Q)$, the integrand is in $L^1(dp)$ because $\eta$ is
compactly supported and $nm-\lambda>0$.  The Riemann--Lebesgue lemma
therefore gives pointwise convergence to zero.

It remains to find an $L^2$ majorant.  There exist compact sets
$C\subset\mathbb R^{3(s-1)}$ and $B\subset\mathbb R^3$, and a constant
$C_\eta>0$, such that
\begin{equation*}
  |\eta(q_1,\ldots,q_{a-1},p,q_{a+1},\ldots,q_s)|
  \leq C_\eta\mathbf1_C(\widehat Q_a)\mathbf1_B(p).
\end{equation*}
Consequently,
\begin{equation*}
  |M_{\lambda,r-1+a}^{od,(n-1)}\widetilde w_y(K,Q)|
  \leq
  C_{\eta,\lambda,g}
  \frac{|v(K)|\mathbf1_C(\widehat Q_a)}
  {|K|^2+|q_a|^2+nm-\lambda}.
\end{equation*}
The square of the right-hand side is integrable, since
\begin{equation*}
\begin{aligned}
 &\int_{\mathbb R^{3(r-1)}}|v(K)|^2\,dK
 \int_{\mathbb R^3}
 \frac{dq_a}{\bigl(|K|^2+|q_a|^2+nm-\lambda\bigr)^2} \\
 &\quad=\pi^2\int_{\mathbb R^{3(r-1)}}
 \frac{|v(K)|^2}
 {\sqrt{|K|^2+nm-\lambda}}\,dK \\
 &\quad\leq
 \frac{\pi^2}{\sqrt{nm-\lambda}}\|v\|_2^2.
\end{aligned}
\end{equation*}
The remaining variables range in the compact set $C$.  Dominated convergence
proves \eqref{eq:spectator-decoupling}.
\end{proof}

\begin{proof}[Proof of Theorem \ref{thm:lower-sector-inclusion}]
Put
\begin{equation*}\label{eq:cluster-index-decomposition}
  s:=n-r,
  \qquad
  n-1=(r-1)+s.
\end{equation*}
It is enough, in view of Theorem~\ref{thm:free-essential-spectrum}, to prove
that
\begin{equation*}
  [\zeta+sm,nm)
  \subset\sigma_{\mathrm{ess}}(H_\mu^{[n]}).
\end{equation*}
Fix $\lambda\in[\zeta+sm,nm)$ and write
\begin{equation*}\label{eq:rho-definition}
  \lambda=\zeta+sm+\rho,
  \qquad
  0\leq\rho<rm-\zeta.
\end{equation*}

Since $\zeta<rm$, Corollary~\ref{appcert:cor:M-reduction}\textup{(ii)} and
\eqref{eq:lower-sector-spectral-point} imply
\begin{equation*}
  0\in\sigma(M_\zeta^{(r-1)}).
\end{equation*}
The operator $M_\zeta^{(r-1)}$ is self-adjoint, so Weyl's criterion provides a
sequence $\{v_\ell\}\subset\dom(M_\zeta^{(r-1)})$ such that
\begin{equation*}\label{eq:lower-sector-weyl-sequence}
  \|v_\ell\|_2=1,
  \qquad
  \|M_\zeta^{(r-1)}v_\ell\|_2\longrightarrow0.
\end{equation*}
No weak convergence of $v_\ell$ is required.

Choose $q_\rho\in\mathbb R^3$ such that
\begin{equation*}
  s|q_\rho|^2=\rho.
\end{equation*}
Let $\chi\in C_c^\infty(\mathbb R^3)$ satisfy $\|\chi\|_2=1$, and define
\begin{equation*}
  \chi_\ell(q):=\ell^{3/2}\chi\bigl(\ell(q-q_\rho)\bigr),
\end{equation*}
\begin{equation*}\label{eq:spectator-packet}
  \eta_\ell(q_1,\ldots,q_s)
  :=\prod_{a=1}^s\chi_\ell(q_a).
\end{equation*}
Then $\eta_\ell\in C_c^\infty(\mathbb R^{3s})\cap
L_b^2(\mathbb R^{3s})$, $\|\eta_\ell\|_2=1$, and
\begin{equation*}\label{eq:spectator-shell}
  \delta_\ell
  :=\sup_{Q\in\supp\eta_\ell}
  \left||Q|^2-\rho\right|
  \longrightarrow0.
\end{equation*}
For $y\in\mathbb R^3$, set
\begin{equation*}
  \eta_{\ell,y}(Q)
  :=e^{-iy\cdot(q_1+\cdots+q_s)}\eta_\ell(Q),
\end{equation*}
\begin{equation*}\label{eq:unsymmetrized-cluster-vector}
  \widetilde w_{\ell,y}(K,Q)
  :=v_\ell(K)\eta_{\ell,y}(Q).
\end{equation*}
Since $v_\ell\in\mathcal D_{r-1}$ and $\eta_\ell$ is smooth and compactly
supported, $\widetilde w_{\ell,y}$ belongs to the natural domain
\begin{equation*}
  \widetilde{\mathcal D}_{n-1}
  :=\left\{F\in L^2(\mathbb R^{3(n-1)}):(1+|X|^2)F\in L^2(\mathbb R^{3(n-1)})\right\}
\end{equation*}
of $\widetilde M_\lambda^{(n-1)}$.  The projection $P_{n-1}$ maps this
domain into $\mathcal D_{n-1}$.

We first estimate the terms acting on the first $r-1$ variables.  For
$Q\in\mathbb R^{3s}$ define
\begin{equation*}\label{eq:shifted-spectral-parameter}
  \lambda_Q
  :=\lambda-sm-|Q|^2
  =\zeta-\bigl(|Q|^2-\rho\bigr).
\end{equation*}
A direct comparison of \eqref{appcert:eq:diagonal-sector-definition} and
\eqref{eq:single-off-diagonal-term}, with the positions of the $K$ and $Q$
variables kept fixed, gives the identity
\begin{equation}\label{eq:cluster-reconstruction-general}
\begin{aligned}
 &\left(-M_\lambda^{d,(n-1)}+
 \sum_{j=1}^{r-1}M_{\lambda,j}^{od,(n-1)}\right)
 \widetilde w_{\ell,y}(K,Q) \\
 &\quad=\eta_{\ell,y}(Q)
 \bigl(M_{\lambda_Q}^{(r-1)}v_\ell\bigr)(K).
\end{aligned}
\end{equation}
For $\ell$ sufficiently large, all $\lambda_Q$ with
$Q\in\supp\eta_\ell$ belong to a fixed compact interval
$I\Subset(-\infty,rm)$.  By Remark \ref{r:5.3}
\begin{equation*}
\begin{aligned}
  \|M_{\lambda_Q}^{(r-1)}v_\ell\|_2
  &\leq
  \|M_\zeta^{(r-1)}v_\ell\|_2
  +\|(M_{\lambda_Q}^{(r-1)}-M_\zeta^{(r-1)})v_\ell\|_2 \\
  &\leq
  \|M_\zeta^{(r-1)}v_\ell\|_2+C_I\delta_\ell.
\end{aligned}
\end{equation*}
Using \eqref{eq:cluster-reconstruction-general} and $\|\eta_\ell\|_2=1$,
one obtains
\begin{equation}\label{eq:cluster-reconstruction-error}
  \sup_{y\in\mathbb R^3}
  \left\|
  \left(-M_\lambda^{d,(n-1)}+
  \sum_{j=1}^{r-1}M_{\lambda,j}^{od,(n-1)}\right)
  \widetilde w_{\ell,y}
  \right\|_2
  \longrightarrow0.
\end{equation}
The identity is first verified on smooth compactly supported functions.  It
extends to $v_\ell\in\mathcal D_{r-1}$ by approximation in the weighted norm
$\|(1+|K|^2)\,\cdot\,\|_2$. The weighted estimates for the off-diagonal terms
apply also before symmetrization, and the remaining factors are smooth and
compactly supported. The dependence on $\lambda_Q$ is uniformly controlled
by Remark~\ref{r:5.3}.

For each fixed $\ell$, Lemma~\ref{lem:spectator-decoupling} gives
\begin{equation}\label{eq:spectator-error-fixed-ell}
  R_\ell(y)
  :=\sum_{a=1}^s
  \|M_{\lambda,r-1+a}^{od,(n-1)}\widetilde w_{\ell,y}\|_2
  \longrightarrow0
  \qquad (|y|\to\infty).
\end{equation}
Moreover, Lemma~\ref{lem:asymptotic-symmetrization} gives, for each fixed
$\ell$,
\begin{equation}\label{eq:symmetrized-product-limit}
  \|P_{n-1}\widetilde w_{\ell,y}\|_2^2
  \longrightarrow
  \frac{(r-1)!s!}{(n-1)!}>0,
\end{equation}
and
\begin{equation}\label{eq:normalized-product-weak}
  \frac{P_{n-1}\widetilde w_{\ell,y}}
  {\|P_{n-1}\widetilde w_{\ell,y}\|_2}
  \rightharpoonup0
  \qquad (|y|\to\infty).
\end{equation}

We now choose one sequence from the two-parameter family.  Let
$\{\varphi_j\}_{j\geq1}$ be a countable dense subset of
$L_b^2(\mathbb R^{3(n-1)})$.  For every $\ell$, choose $y_\ell\in\mathbb R^3$
so large that
\begin{equation*}\label{eq:diagonal-choice-size}
  |y_\ell|\geq\ell,
  \qquad
  \|P_{n-1}\widetilde w_{\ell,y_\ell}\|_2
  \geq\frac12\left(\frac{(r-1)!s!}{(n-1)!}\right)^{1/2},
\end{equation*}
\begin{equation}\label{eq:diagonal-choice-spectator}
  R_\ell(y_\ell)\leq\frac1\ell,
\end{equation}
and
\begin{equation}\label{eq:diagonal-choice-weak}
  \left|
  \left\langle
  \varphi_j,
  \frac{P_{n-1}\widetilde w_{\ell,y_\ell}}
  {\|P_{n-1}\widetilde w_{\ell,y_\ell}\|_2}
  \right\rangle
  \right|
  \leq\frac1\ell,
  \qquad j=1,\ldots,\ell.
\end{equation}
This is possible by \eqref{eq:spectator-error-fixed-ell},
\eqref{eq:symmetrized-product-limit}, and
\eqref{eq:normalized-product-weak}.  Define
\begin{equation}\label{eq:reduced-weyl-sequence}
  w_\ell
  :=\frac{P_{n-1}\widetilde w_{\ell,y_\ell}}
  {\|P_{n-1}\widetilde w_{\ell,y_\ell}\|_2}.
\end{equation}
Then $\|w_\ell\|_2=1$.  For each fixed $j$,
\eqref{eq:diagonal-choice-weak} gives
$\langle\varphi_j,w_\ell\rangle\to0$.  Since the sequence is bounded and
$\{\varphi_j\}$ is dense,
\begin{equation}\label{eq:reduced-weyl-weak}
  w_\ell\rightharpoonup0.
\end{equation}

The operator $\widetilde M_\lambda^{(n-1)}$ commutes with $P_{n-1}$, and the
restriction of $\widetilde M_\lambda^{(n-1)}$ to the symmetric subspace is
$M_\lambda^{(n-1)}$.  Therefore,
\begin{equation*}
\begin{aligned}
  \|M_\lambda^{(n-1)}w_\ell\|_2
  &\leq
  2\left(\frac{(n-1)!}{(r-1)!s!}\right)^{1/2}
  \|\widetilde M_\lambda^{(n-1)}
  \widetilde w_{\ell,y_\ell}\|_2 \\
  &\leq
  2\left(\frac{(n-1)!}{(r-1)!s!}\right)^{1/2}
  \left(
  \left\|
  \left(-M_\lambda^{d,(n-1)}+
  \sum_{j=1}^{r-1}M_{\lambda,j}^{od,(n-1)}\right)
  \widetilde w_{\ell,y_\ell}
  \right\|_2
  +R_\ell(y_\ell)
  \right).
\end{aligned}
\end{equation*}
The first term tends to zero by
\eqref{eq:cluster-reconstruction-error}, and the second one tends to zero by
\eqref{eq:diagonal-choice-spectator}.  Hence
\begin{equation}\label{eq:reduced-weyl-error}
  M_\lambda^{(n-1)}w_\ell\longrightarrow0.
\end{equation}
Equations \eqref{eq:reduced-weyl-sequence},
\eqref{eq:reduced-weyl-weak}, and
\eqref{eq:reduced-weyl-error} show that $\{w_\ell\}$ is a singular Weyl
sequence for $M_\lambda^{(n-1)}$ at zero.  Corollary
\ref{appcert:cor:M-reduction}\textup{(iv)} then gives
\begin{equation*}
  \lambda\in\sigma_{\mathrm{ess}}(H_\mu^{[n]}).
\end{equation*}
This proves the inclusion up to $nm$.  Theorem
\ref{thm:free-essential-spectrum} supplies the interval $[nm,+\infty)$, and
\eqref{eq:lower-sector-essential-inclusion} follows.
\end{proof}

\subsection{Consequences for the first two sectors}
The first two sectors allow more precise results and may generate lower
essential thresholds in higher sectors. The existence of eigenvalues below
threshold in these sectors is studied in Section~\ref{sec:point-discrete-spectrum}.
\begin{corollary}\label{cor:essential-spectrum-sector-one}
One has
\begin{equation}\label{eq:essential-spectrum-sector-one}
  \sigma_{\mathrm{ess}}(H_\mu^{[1]})=[m,+\infty).
\end{equation}
If $\mu_g<m$, then for every $n\geq2$,
\begin{equation}\label{eq:one-sector-branch}
  [\lambda_1+(n-1)m,+\infty)
  \subset\sigma_{\mathrm{ess}}(H_\mu^{[n]}).
\end{equation}
\end{corollary}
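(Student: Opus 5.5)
The plan is to treat the one-excitation sector through the scalar reduction. For $n=1$ the reduced space $L_b^2(\RE^0)=\CO$ is one-dimensional, and by \eqref{eq:M-scalar-sector-zero}
\begin{equation*}
M_\lambda^{(0)}=\lambda-\mu_g-2\pi^2g^2\sqrt{m-\lambda},\qquad \lambda<m.
\end{equation*}
Applying Corollary~\ref{appcert:cor:M-reduction}\textup{(ii)} with $n=1$, a point $\lambda<m$ belongs to $\sigma_{\mathrm{ess}}(H_\mu^{[1]})$ if and only if $0\in\sigma_{\mathrm{ess}}(M_\lambda^{(0)})$. However, a self-adjoint operator on $\CO$ has empty essential spectrum. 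Hence $\sigma_{\mathrm{ess}}(H_\mu^{[1]})\cap(-\infty,m)=\varnothing$. Combining this with Theorem~\ref{thm:free-essential-spectrum} for $n=1$, which gives $[m,+\infty)\subset\sigma_{\mathrm{ess}}(H_\mu^{[1]})$, yields \eqref{eq:essential-spectrum-sector-one}.

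For \eqref{eq:one-sector-branch}, I will show that $\lambda_1\in\sigma(H_\mu^{[1]})$ and $\lambda_1<m$ whenever $\mu_g<m$, and then invoke Theorem~\ref{thm:lower-sector-inclusion} with $r=1$ and $\zeta=\lambda_1$. The inequality $\lambda_1<m$ was already noted after \eqref{eq:lambda-one-definition}. By the reduction, $\lambda_1\in\sigma_p(H_\mu^{[1]})$ is equivalent to $M_{\lambda_1}^{(0)}=0$. To check this, set $t=\sqrt{m-\lambda}>0$. The equation $M_\lambda^{(0)}=0$ becomes
\begin{equation*}
m-t^2-\mu_g-2\pi^2g^2t=0,\qquad\text{that is,}\qquad t^2+2\pi^2g^2t-(m-\mu_g)=0.
\end{equation*}
Its positive root is $t=\sqrt{\pi^4g^4+m-\mu_g}-\pi^2g^2$, which is positive precisely because $\mu_g<m$. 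Consequently $\lambda=m-t^2=\lambda_1$, matching \eqref{eq:lambda-one-definition}.

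With this in hand, Theorem~\ref{thm:lower-sector-inclusion} applied with $n\geq2$, $r=1$ and $\zeta=\lambda_1<m$ gives $[\lambda_1+(n-1)m,+\infty)\subset\sigma_{\mathrm{ess}}(H_\mu^{[n]})$.

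I do not expect a serious obstacle. The only point needing care is the remark that the Weyl essential spectrum of an operator on $\CO$ is empty, so the equivalence in Corollary~\ref{appcert:cor:M-reduction} for $\natural=\mathrm{ess}$ applies directly. Equivalently, no singular Weyl sequence can exist in a finite-dimensional space, since weak and strong convergence coincide there.
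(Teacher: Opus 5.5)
Your proposal is correct and follows the paper's proof. The paper likewise combines the free inclusion with the one-dimensionality of the reduced space for $n=1$ via Corollary~\ref{appcert:cor:M-reduction}\textup{(ii)}, and then applies Theorem~\ref{thm:lower-sector-inclusion} with $r=1$ and $\zeta=\lambda_1$. The only difference is that you verify $M_{\lambda_1}^{(0)}=0$ directly from the scalar quadratic instead of citing Theorem~\ref{thm:n1}, which carries out the same computation.
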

\begin{proof}
The inclusion $[m,+\infty)\subset\sigma_{\mathrm{ess}}(H_\mu^{[1]})$
follows from Theorem~\ref{thm:free-essential-spectrum}.  For every
$\lambda<m$, the reduced operator $M_\lambda^{(0)}$ acts on the
one-dimensional space $\CO$.  Corollary~\ref{appcert:cor:M-reduction}\textup{(ii)} excludes
essential spectrum below $m$, proving
\eqref{eq:essential-spectrum-sector-one}.  The description of the spectrum
below $m$ is given by Theorem~\ref{thm:n1} in
Section~\ref{sec:point-discrete-spectrum}.  If $\mu_g<m$, the inclusion
\eqref{eq:one-sector-branch} follows from
Theorem~\ref{thm:lower-sector-inclusion} with $r=1$ and
$\zeta=\lambda_1$, together with the free inclusion.
\end{proof}

\begin{corollary}\label{cor:essential-spectrum-sector-two}
Let
\begin{equation*}\label{eq:two-sector-threshold}
  T_2:=m+E_1
  =
  \begin{cases}
    2m,&\mu_g\geq m,\\
    m+\lambda_1&\mu_g<m.
  \end{cases}
\end{equation*}
Then
\begin{equation}\label{eq:essential-spectrum-sector-two}
  \sigma_{\mathrm{ess}}(H_\mu^{[2]})=[T_2,+\infty).
\end{equation}
In particular, every spectral point of $H_\mu^{[2]}$ below $T_2$ is a
discrete eigenvalue of finite multiplicity.
\end{corollary}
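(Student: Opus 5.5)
The plan has two parts: the inclusion $[T_2,+\infty)\subset\sigma_{\mathrm{ess}}(H_\mu^{[2]})$, which is already available, and the reverse inclusion, which will come from Corollary~\ref{appcert:cor:M-reduction} together with a compactness argument for the reduced operator $M_\lambda^{(1)}$ on $L^2(\RE^3)$.

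\emph{Step 1: the inclusion $\supset$.} Theorem~\ref{thm:free-essential-spectrum} gives $[2m,+\infty)\subset\sigma_{\mathrm{ess}}(H_\mu^{[2]})$. If $\mu_g<m$, Corollary~\ref{cor:essential-spectrum-sector-one}, equation \eqref{eq:one-sector-branch} with $n=2$, gives $[m+\lambda_1,+\infty)\subset\sigma_{\mathrm{ess}}(H_\mu^{[2]})$. In both cases this identifies $T_2=m+E_1$, using Theorem~\ref{thm:n1} for $E_1$.

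\emph{Step 2: setting up the reverse inclusion.} Fix $\lambda<T_2$, so in particular $\lambda<2m$. By Corollary~\ref{appcert:cor:M-reduction}(ii), it suffices to show $0\notin\sigma_{\mathrm{ess}}(M_\lambda^{(1)})$. Here
\[
M_\lambda^{(1)}=-M_\lambda^{d,(1)}+M_\lambda^{od,(1)}.
\]
The diagonal part is multiplication by
\[
m_\lambda(k)=|k|^2+m-\lambda+\mu_g+2\pi^2g^2\sqrt{|k|^2+2m-\lambda}.
\]
The off-diagonal part is the integral operator with kernel $g^2/(|k|^2+|p|^2+2m-\lambda)$. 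The key observation is that $\lambda<T_2$ gives $m_\lambda(k)>0$ for all $k$, with $m_\lambda(k)\geq m_\lambda(0)>0$. Indeed, $m_\lambda(0)=-M^{(0)}_{\lambda-m}$ by \eqref{eq:M-scalar-sector-zero}, and this quantity is positive for $\lambda-m<E_1$: the function $z\mapsto z-\mu_g-2\pi^2g^2\sqrt{m-z}$ is increasing and vanishes exactly at $\lambda_1$ when $\mu_g<m$, and it has no zero below $m$ when $\mu_g\geq m$. Consequently $M_\lambda^{d,(1)}$ is boundedly invertible and positive, and
\[
M_\lambda^{(1)}=-\bigl(M_\lambda^{d,(1)}\bigr)^{1/2}\bigl(1-B_\lambda\bigr)\bigl(M_\lambda^{d,(1)}\bigr)^{1/2},
\qquad
B_\lambda:=\bigl(M_\lambda^{d,(1)}\bigr)^{-1/2}M_\lambda^{od,(1)}\bigl(M_\lambda^{d,(1)}\bigr)^{-1/2}.
\]

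\emph{Step 3 (main obstacle): compactness of $B_\lambda$.} The operator $B_\lambda$ has kernel
\[
\frac{g^2}{m_\lambda(k)^{1/2}\,(|k|^2+|p|^2+2m-\lambda)\,m_\lambda(p)^{1/2}}.
\]
Since $m_\lambda(k)\gtrsim 1+|k|^2$, this kernel is bounded by $C(1+|k|^2)^{-1/2}(|k|^2+|p|^2+c)^{-1}(1+|p|^2)^{-1/2}$. This is not Hilbert--Schmidt: the integral behaves like $\int |k|^{-2}|p|^{-2}(|k|^2+|p|^2)^{-2}$, which diverges logarithmically at infinity because the kernel is scale-invariant there. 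So I would split $B_\lambda=\mathbf 1_{|k|\le R}B_\lambda+\mathbf 1_{|k|>R}B_\lambda$ and treat the two pieces separately.
\begin{itemize}
\item[(a)] The piece with both momenta bounded is Hilbert--Schmidt. The piece with $|k|\le R$ and $p$ unbounded is also Hilbert--Schmidt, since the $p$-integral $\int(1+|p|^2)^{-1}(|p|^2+c)^{-2}\,dp$ converges.
\item[(b)] The tail piece, with both momenta large, has norm that does not vanish; a Schur test with weights $|k|^{-\alpha}$ gives only a norm bound of order $\pi^2g^2$ times a constant.
\end{itemize}
This is precisely the difficulty. So instead I would argue directly on a singular Weyl sequence. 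Suppose $\{\xiq_j\}$ is such a sequence at zero for $M_\lambda^{(1)}$, and set $f_j:=(M_\lambda^{d,(1)})^{1/2}\xiq_j$, normalized. Then $(1-B_\lambda)f_j\to0$ and $f_j\rightharpoonup0$. The compact part of $B_\lambda$ kills the mass of $f_j$ on bounded sets, so the mass must escape to $|k|\to\infty$. On that region the relevant scale-invariant tail is
\[
\widehat B=g^2\,|k|^{-1}\bigl(|k|^2+|p|^2\bigr)^{-1}|p|^{-1},
\]
and $m_\lambda(k)\geq|k|^2$, so the tail of $B_\lambda$ is dominated by $\widehat B$. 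The explicit Mellin computation of $\|\widehat B\|$ (the same computation as in Lemma~\ref{lem:S-construction-estimates}, up to constants) is a constant times $g^2$. I expect that this plain bound gives $\|\widehat B\|<1$ only for small $g$, so the argument needs sharpening: I would use the exact multiplier $m_\lambda(k)\geq|k|^2+2\pi^2g^2|k|$, which yields a bound $\|\text{tail}\|\le\sup_t \frac{g^2 c}{t+2\pi^2g^2}\cdot(\ldots)$ uniformly below $1$. If this works, $(1-B_\lambda)f_j\to0$ is impossible for vanishing-at-bounded-momenta $f_j$, so $0\notin\sigma_{\mathrm{ess}}(M_\lambda^{(1)})$.

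\emph{Step 4: conclusion.} Once $0\notin\sigma_{\mathrm{ess}}(M_\lambda^{(1)})$ is known for every $\lambda<T_2$, Corollary~\ref{appcert:cor:M-reduction}(ii) gives $\lambda\notin\sigma_{\mathrm{ess}}(H_\mu^{[2]})$, hence $\sigma_{\mathrm{ess}}(H_\mu^{[2]})=[T_2,+\infty)$. The final claim follows from the definition $\sigma_d=\sigma\setminus\sigma_{\mathrm{ess}}$ for self-adjoint operators: every spectral point below $T_2$ is an isolated eigenvalue of finite multiplicity.
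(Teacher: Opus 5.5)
\textbf{There is a genuine gap in Step~3.} Your Steps~1, 2 and~4 are fine. Step~3, however, rests on a miscalculation, and the argument you put in its place is never completed, so the reverse inclusion is not proved.

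The kernel of $B_\lambda$ \emph{is} Hilbert--Schmidt. Its square behaves at infinity like $|k|^{-2}|p|^{-2}(|k|^2+|p|^2)^{-2}$. This is homogeneous of degree $-8$ on $\RE^6$, hence integrable at infinity ($\int^\infty r^{-8}r^5\,dr<\infty$); there is no logarithmic divergence. Concretely, $m_\lambda(k)\geq c_\lambda(1+|k|^2)$ together with $(1+|k|^2)(1+|p|^2)\leq(1+|k|^2+|p|^2)^2$ bounds the kernel by $C_\lambda(1+|k|^2)^{-1}(1+|p|^2)^{-1}\in L^2(\RE^6)$. This is exactly what Lemma~\ref{n-sp} records. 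The paper's proof of the corollary uses the equivalent fact that $M_\lambda^{od,(1)}(M_\lambda^{d,(1)}-i)^{-1}$ is Hilbert--Schmidt, and then applies Weyl's theorem on relatively compact perturbations.

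The kernel is also not scale invariant. Dilation invariance on $L^2(\RE^3)$ requires homogeneity of degree $-3$, whereas your $\widehat B$ has degree $-4$. Consequences:
\begin{itemize}
\item The tail $\mathbf 1_{|k|>R}B_\lambda$ has norm tending to zero as $R\to\infty$, contrary to your item~(b).
\item No Mellin computation can give $\|\widehat B\|$ as a fixed constant times $g^2$. By dilation, the restriction of $\widehat B$ to $|k|,|p|>R$ is unitarily equivalent to $R^{-1}$ times its restriction to $|k|,|p|>1$, so its norm is $O(g^2/R)$. On all of $L^2(\RE^3)$, $\widehat B$ is unbounded.
\end{itemize}

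The replacement argument ends with ``If this works'' and is not carried out. Its intended input $m_\lambda(k)\geq|k|^2+2\pi^2g^2|k|$ is also false when $\mu_g<m$ and $m+\mu_g<\lambda<m+\lambda_1$. Indeed, $m_\lambda(k)-|k|^2-2\pi^2g^2|k|\to m+\mu_g-\lambda<0$ as $|k|\to\infty$.

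\textbf{The repair is immediate within your own setup.} Once $B_\lambda$ is known to be compact, your normalized sequence satisfies $f_j\rightharpoonup0$, so $B_\lambda f_j\to0$ in norm. Then $(1-B_\lambda)f_j\to0$ forces $f_j\to0$, contradicting $\|f_j\|=1$. Hence $M_\lambda^{(1)}$ has no singular Weyl sequence at zero, $0\notin\sigma_{\mathrm{ess}}(M_\lambda^{(1)})$, and Step~4 finishes the proof.
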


\begin{proof}
The inclusion
\begin{equation*}
  [T_2,+\infty)\subset\sigma_{\mathrm{ess}}(H_\mu^{[2]})
\end{equation*}
follows from Theorem~\ref{thm:free-essential-spectrum} and, when
$\mu_g<m$, from \eqref{eq:one-sector-branch}.

Fix $\lambda<T_2$.  Then $\lambda<2m$ and the diagonal multiplier
$m_\lambda^{(1)}$ is strictly positive.  More precisely, there is
$c_\lambda>0$ such that
\begin{equation}\label{eq:two-sector-diagonal-lower-bound}
  m_\lambda^{(1)}(k)\geq c_\lambda(1+|k|^2),
  \qquad k\in\mathbb R^3.
\end{equation}
For $\mu_g\geq m$, this follows directly from the formula for
$m_\lambda^{(1)}$.  If $\mu_g<m$, put
$a=\sqrt{m-\lambda_1}$ and $b=\pi^2g^2$.  With
$x=\sqrt{|k|^2+2m-\lambda}$ one has
\begin{equation*}
  m_\lambda^{(1)}(k)
  =x^2+2bx+\mu_g-m
  =(x-a)(x+a+2b).
\end{equation*}
Since $\lambda<m+\lambda_1$, one has
$x\geq\sqrt{2m-\lambda}>a$, and
\eqref{eq:two-sector-diagonal-lower-bound} follows.

We now show that the off-diagonal term is relatively compact with respect
to $M_\lambda^{d,(1)}$.  The operator
\begin{equation*}
  M_\lambda^{od,(1)}(M_\lambda^{d,(1)}-i)^{-1}
\end{equation*}
has integral kernel
\begin{equation*}\label{eq:two-sector-relative-compact-kernel}
  \mathcal K_\lambda(k,p)
  =
  \frac{g^2}
  {\bigl(|k|^2+|p|^2+2m-\lambda\bigr)
   \bigl(m_\lambda^{(1)}(p)-i\bigr)}.
\end{equation*}
Since $m_\lambda^{(1)}(p)$ is real and has quadratic growth, there is a
constant $C_\lambda>0$ such that
\begin{equation*}
  \frac1{|m_\lambda^{(1)}(p)-i|}
  \leq\frac{C_\lambda}{1+|p|^2}.
\end{equation*}
Consequently,
\begin{equation*}
  |\mathcal K_\lambda(k,p)|
  \leq
  \frac{C_\lambda}
  {(1+|p|^2)(1+|k|^2+|p|^2)}.
\end{equation*}
The square of the right-hand side is integrable on $\mathbb R^6$.  Indeed,
\begin{equation*}
\begin{aligned}
 &\int_{\mathbb R^3}\int_{\mathbb R^3}
 \frac{dk\,dp}
 {(1+|p|^2)^2(1+|k|^2+|p|^2)^2} \\
 &\quad=\pi^2\int_{\mathbb R^3}
 (1+|p|^2)^{-5/2}\,dp<\infty.
\end{aligned}
\end{equation*}
Thus $M_\lambda^{od,(1)}(M_\lambda^{d,(1)}-i)^{-1}$ is
Hilbert--Schmidt, and $M_\lambda^{od,(1)}$ is relatively compact with
respect to $M_\lambda^{d,(1)}$.
By Weyl's theorem,
\begin{equation*}
  \sigma_{\mathrm{ess}}(M_\lambda^{(1)})
  =\sigma_{\mathrm{ess}}(-M_\lambda^{d,(1)}).
\end{equation*}
The lower bound \eqref{eq:two-sector-diagonal-lower-bound} gives
$0\notin\sigma_{\mathrm{ess}}(-M_\lambda^{d,(1)})$, and hence
\begin{equation*}
  0\notin\sigma_{\mathrm{ess}}(M_\lambda^{(1)}).
\end{equation*}
Corollary~\ref{appcert:cor:M-reduction}\textup{(ii)} therefore implies
$
  \lambda\notin\sigma_{\mathrm{ess}}(H_\mu^{[2]}).
$
This proves the reverse inclusion in
\eqref{eq:essential-spectrum-sector-two}.

Finally, a spectral point below the essential spectrum of a self-adjoint operator
is an isolated eigenvalue of finite multiplicity.  This proves the last assertion.
\end{proof}
Theorem~\ref{n=2eigenvalues} proves, under an explicit coupling condition,
the existence of an eigenvalue $\lambda_2<T_2=m+\lambda_1$. Thus
Theorem~\ref{thm:lower-sector-inclusion}, applied with $r=2$, yields
$[\lambda_2+(n-2)m,+\infty)\subset\sigma_{\mathrm{ess}}(H_\mu^{[n]})$
for every $n\geq3$, with an endpoint strictly below $\lambda_1+(n-1)m$.
%

\section{Point and discrete spectrum}
\label{sec:point-discrete-spectrum}

We first give the point spectrum in the zero- and one-excitation sectors,
where the reduced spaces are finite-dimensional and the spectral equations can
be solved explicitly.  The Birman--Schwinger analysis for the sectors $n\geq2$
is developed afterwards.

\subsection{The vacuum sector}
\label{subsec:vacuum-sector} This is a trivial case. By definition, $\IS^{[0]}=\CO\oplus0$ and $H_\mu^{[0]}$ is the zero
operator on this one-dimensional space. Hence,
\begin{equation*}
  \sigma(H_\mu^{[0]})=\{0\}.
\end{equation*}
The eigenvalue $0$ is simple, with normalized eigenvector $1\oplus0$.
Consequently, $0$ is always an eigenvalue of the full Hamiltonian $H_\mu$,
with eigenvector $\Omega^\circ\oplus0$.

\subsection{The sector \texorpdfstring{$n=1$}{n=1}}
\label{subsec:one-excitation-sector}

The one-excitation sector is the first nontrivial sector of the model.  Its
reduced space is one-dimensional, and hence the reduced operator is a scalar.
This makes it possible to determine all spectral points below the free threshold
$m$ explicitly. As recalled in the introduction, all the results in this subsection are well known (see \cite{FLL21} and references therein), but they are collected here both for completeness and for future reference. 

\begin{theorem}\label{thm:n1}
The following assertions hold.
\begin{enumerate}[label=\textup{(\roman*)}]
\item If $\mu_g\geq m$, then
\begin{equation*}
  \sigma(H_\mu^{[1]})\cap(-\infty,m)=\varnothing.
\end{equation*}
\item If $\mu_g<m$, then
\begin{equation*}
  \sigma(H_\mu^{[1]})\cap(-\infty,m)=\{\lambda_1\}.
\end{equation*}
The eigenvalue $\lambda_1$ is simple and discrete, and its eigenspace is
spanned by
\begin{equation*}
  G_{\lambda_1}^{\circ,(1)}1\oplus1.
\end{equation*}
If $\nu=\mu$, then $\lambda_1=\mu$.
\end{enumerate}
\end{theorem}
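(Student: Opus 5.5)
By Corollary~\ref{appcert:cor:M-reduction} with $n=1$, a real $\lambda<m$ lies in $\sigma(H_\mu^{[1]})$ if and only if $0\in\sigma(M_\lambda^{(0)})$. Since $M_\lambda^{(0)}$ acts on $\CO$, this happens exactly when the scalar vanishes. By \eqref{eq:M-scalar-sector-zero} the condition is
\begin{equation*}
 f(\lambda):=\lambda-\mu_g-2\pi^2g^2\sqrt{m-\lambda}=0,\qquad \lambda<m.
\end{equation*}
Multiplicities also agree, so any root is a simple eigenvalue. Its eigenspace is spanned by $G_\lambda^{\circ,(1)}1\oplus1$, by item~(iii) of the same corollary. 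Discreteness is automatic: the spectrum below $m$ will consist of at most one point, which is isolated, and $\sigma_{\mathrm{ess}}(H_\mu^{[1]})=[m,+\infty)$ by Corollary~\ref{cor:essential-spectrum-sector-one}.

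The whole proof therefore reduces to analysing the scalar equation $f(\lambda)=0$ on $(-\infty,m)$. The function $f$ is continuous and strictly increasing there, because both $\lambda$ and $-\sqrt{m-\lambda}$ increase. As $\lambda\to-\infty$ we have $f(\lambda)\to-\infty$, and as $\lambda\to m^-$ we have $f(\lambda)\to m-\mu_g$.

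If $\mu_g\geq m$, then $f<m-\mu_g\le0$ on all of $(-\infty,m)$, so there is no root and assertion~(i) follows. If $\mu_g<m$, there is exactly one root. To identify it, set $x=\sqrt{m-\lambda}>0$; the equation becomes
\begin{equation*}
 x^2+2\pi^2g^2x-(m-\mu_g)=0.
\end{equation*}
Its positive root is $x=\sqrt{\pi^4g^4+m-\mu_g}-\pi^2g^2$, which gives $\lambda=\lambda_1$ as in \eqref{eq:lambda-one-definition}.

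For the final claim, take $\nu=\mu$, so that $\mu_g=\mu-2\pi^2g^2\sqrt{m-\mu}$ and $\mu<m$. Then $f(\mu)=2\pi^2g^2\sqrt{m-\mu}-2\pi^2g^2\sqrt{m-\mu}=0$. One also checks that $\mu_g<m$ holds automatically in this case, since $\mu_g\le\mu<m$. By uniqueness of the root, $\lambda_1=\mu$.

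There is no real obstacle here. The only points needing care are the strict monotonicity of $f$, so that the root is unique, and the fact that the $n=1$ case of the reduction corollary applies with the scalar $M^{(0)}_\lambda$. Both are immediate from the results already established.
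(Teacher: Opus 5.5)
Your proof is correct and follows the paper's argument. You reduce via Corollary~\ref{appcert:cor:M-reduction} to the scalar equation $M_\lambda^{(0)}=0$, solve it through $x=\sqrt{m-\lambda}$, and read off simplicity and the eigenvector from the kernel isomorphism. There are only minor differences: you get existence, uniqueness and nonexistence from the strict monotonicity of $f$ rather than from the signs of the roots of the quadratic, and you deduce discreteness from $\sigma_{\mathrm{ess}}(H_\mu^{[1]})=[m,+\infty)$ instead of from the $\sigma_d$ equivalence on the one-dimensional reduced space. That use of Corollary~\ref{cor:essential-spectrum-sector-one} is not circular, since its first part does not rely on the present theorem.
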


\begin{proof}
For $n=1$, the lower space is $L_b^2(\RE^0)=\CO$.  By
\eqref{eq:M-scalar-sector-zero}, for every $\lambda<m$ the reduced operator
is multiplication by the scalar
\begin{equation*}
  M_\lambda^{(0)}
  =\lambda-\mu_g-2\pi^2g^2\sqrt{m-\lambda}.
\end{equation*}
By Corollary~\ref{appcert:cor:M-reduction}, a number $\lambda<m$ belongs
to $\sigma(H_\mu^{[1]})$ if and only if $M_\lambda^{(0)}=0$.

Set
\begin{equation*}
  x:=\sqrt{m-\lambda}>0.
\end{equation*}
Then $M_\lambda^{(0)}=0$ is equivalent to
\begin{equation}\label{eq:n1-scalar}
  x^2+2\pi^2g^2x+\mu_g-m=0.
\end{equation}
If $\mu_g\geq m$, equation \eqref{eq:n1-scalar} has no positive solution.
When $\mu_g=m$, its nonnegative root $x=0$ corresponds to the threshold
$\lambda=m$ and not to an isolated  spectral point.

If $\mu_g<m$, equation \eqref{eq:n1-scalar} has the unique positive root
\begin{equation*}
  x_1=\sqrt{\pi^4g^4+m-\mu_g}-\pi^2g^2
  =\sqrt{m - \lambda_1}.
\end{equation*}
It follows that
\begin{equation*}
  \lambda_1=m-x_1^2.
\end{equation*}
The complete spectral equivalence in
Corollary~\ref{appcert:cor:M-reduction} shows that there are no other
spectral points below $m$.  Since $M_{\lambda_1}^{(0)}$ acts on a
one-dimensional space, its zero eigenvalue is simple and discrete.  Then Corollary~\ref{appcert:cor:M-reduction}$\textrm{\ref{item:spcor2}}$ gives the
stated eigenvector of $H_\mu^{[1]}$.

Finally, assume $\nu=\mu$.  Since $\nu<m$, one has $\mu<m$, and
$x=\sqrt{m-\mu}$ solves \eqref{eq:n1-scalar}.  By uniqueness of the positive
root, $x_1=\sqrt{m-\mu}$ and hence $\lambda_1=\mu$.
\end{proof}

\begin{remark}
When $\nu=\mu$, Theorem~\ref{thm:n1} reproduces the corresponding
one-excitation result in \cite{Raj}.
\end{remark}

\subsubsection{Resonances in the one-excitation sector}

For completeness, we also record the poles obtained by continuing the
one-sector resolvent through the cut $[m,+\infty)$.  They are most naturally
described on the two-sheeted Riemann surface of the function
$z\mapsto\sqrt{m-z}$.

\begin{definition}[Resonant pole, cf.\ \cite{RSIV}]\label{d:respole}
Let $H_0$ and $H$ be self-adjoint operators in a Hilbert space $\mathscr H$.
Suppose that there is a dense set $\mathscr D\subset\mathscr H$ such that,
for every $\psi\in\mathscr D$, the scalar resolvent matrix elements
\begin{equation*}
  z\longmapsto\langle\psi,(H-z)^{-1}\psi\rangle,
  \qquad
  z\longmapsto\langle\psi,(H_0-z)^{-1}\psi\rangle,
\end{equation*}
have analytic continuations from the upper half-plane across a portion of
the continuous spectrum to a second sheet.  If the continued free matrix
element is analytic at $z_0$, while for some $\psi\in\mathscr D$ the
continued interacting matrix element has a pole at $z_0$, then $z_0$ is
called a resonant pole of $H$ relative to $H_0$.
\end{definition}

\begin{proposition}\label{prop:resonances-sector-one}
Assume $g\neq0$ and put $b:=\pi^2g^2$.  The poles of the meromorphic
continuation of $(M_z^{(0)})^{-1}$ are determined by
\begin{equation}\label{eq:one-sector-resonance-roots}
  X_\pm=-b\pm\sqrt{b^2+m-\mu_g},
  \qquad
  z_\pm=m-X_\pm^2.
\end{equation}
More precisely:
\begin{enumerate}[label=\textup{(\roman*)}]
\item If $\mu_g>m+b^2$, there are two nonreal resonant poles
\begin{equation*}
  \zeta_\pm
  =\mu_g-2b^2
  \pm2ib\sqrt{\mu_g-m-b^2}.
\end{equation*}
\item If $m<\mu_g<m+b^2$, there are two distinct real resonant poles below
$m$,
\begin{equation*}
  z_\pm
  =\mu_g-2b^2
  \pm2b\sqrt{b^2+m-\mu_g}.
\end{equation*}
If $\mu_g=m+b^2$, the two roots coalesce and give a real pole of
multiplicity two at $m-b^2$.
\item If $\mu_g=m$, one root lies on the second sheet and gives the real
resonant pole $m-4b^2$, while the other root reaches the threshold $m$.
The latter is a threshold singularity and is not an isolated eigenvalue.
\item If $\mu_g<m$, the root $X_+$ lies on the physical sheet and gives the
eigenvalue $z_+=\lambda_1$, whereas $X_-$ lies on the second sheet and gives
the real resonant pole
\begin{equation*}
  z_-
  =\mu_g-2b^2-2b\sqrt{b^2+m-\mu_g}<m.
\end{equation*}
\end{enumerate}
\end{proposition}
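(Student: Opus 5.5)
The plan is to read everything off the scalar reduced operator of the one-excitation sector. By \eqref{eq:M-scalar-sector-zero} and the Kre\u\i n formula \eqref{appcert:eq:krein-M-resolvent} with $n=1$, for $z\in\CO\setminus[m,+\infty)$ one has
\begin{equation*}
M_z^{(0)}=z-\mu_g-2b\sqrt{m-z},\qquad b=\pi^2g^2 .
\end{equation*}
The only non-free singularities of the matrix elements of $(-H_\mu^{[1]}+z)^{-1}$ therefore come from zeros of this scalar function.

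\emph{Uniformization.} I would introduce the variable $X=\sqrt{m-z}$. The physical sheet $\CO\setminus[m,+\infty)$ corresponds bijectively to the half-plane $\operatorname{Re}X>0$. Continuing from the upper half-plane across $(m,+\infty)$ means passing through $\operatorname{Re}X=0$ into the half-plane $\operatorname{Re}X<0$, which I call the second sheet. In this variable $M^{(0)}=m-X^2-\mu_g-2bX$ is a polynomial, hence entire. Its zeros are exactly the roots of $X^2+2bX+\mu_g-m=0$, namely $X_\pm$ in \eqref{eq:one-sector-resonance-roots}, and $z_\pm=m-X_\pm^2$. Consequently $(M_z^{(0)})^{-1}$ is meromorphic on the Riemann surface, with poles precisely at these roots.

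\emph{Resonant poles in the sense of Definition~\ref{d:respole}.} I would take a dense set $\mathscr D$ of vectors whose components are finite combinations of radial Gaussians times polynomials, together with arbitrary constants in the second component. For such data, $\langle\psi,R_z^{\circ,(1)}\psi\rangle$ and $\langle\psi,G_z^{\circ,(1)}1\rangle$ reduce to integrals of the form $\int_0^\infty \frac{F(r^2)\,r^2}{z-m-r^2}\,dr$ with $F$ entire. These continue analytically in $X$ to the whole $X$-plane: one can deform the contour, or compute explicitly via the error function. So the continued free matrix elements are analytic except at $z=\mu$, which arises from the entry $(\mu-z)^{-1}$ of $H^{free}_\mu$. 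The interacting matrix element for $\psi=0\oplus1$ is exactly $(M_z^{(0)})^{-1}$. It has a genuine pole at each root, simple unless the roots coalesce. This gives a resonant pole provided $z_\pm\neq\mu$, which holds generically; a remark can treat the exceptional parameter values. This verification, namely the analytic continuation of the free pieces and the choice of $\mathscr D$, is the main technical point, though it is routine.

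\emph{Case analysis.} Let $\Delta=b^2+m-\mu_g$.
\begin{enumerate}[label=\textup{(\roman*)}]
\item If $\Delta<0$, then $X_\pm=-b\pm i\sqrt{-\Delta}$. Both have real part $-b<0$, so both lie on the second sheet. Computing $m-X_\pm^2$ gives $\zeta_\pm=\mu_g-2b^2\pm2ib\sqrt{\mu_g-m-b^2}$.
\item If $0<\Delta<b^2$, that is $m<\mu_g<m+b^2$, both roots are real and negative, hence on the second sheet. One gets $z_\pm=\mu_g-2b^2\pm2b\sqrt{\Delta}<m$. When $\Delta=0$ there is a double root $X=-b$, giving a double pole at $m-b^2$.
\item If $\mu_g=m$, the roots are $X=0$ and $X=-2b$. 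The root $X=0$ is the branch point $z=m$, a threshold singularity; it is not an eigenvalue, by Theorem~\ref{thm:n1}(i). The root $X=-2b$ gives the pole $m-4b^2$ on the second sheet.
\item If $\mu_g<m$, then $X_+>0>X_-$. The root $X_+$ is on the physical sheet and is the eigenvalue $\lambda_1$ of Theorem~\ref{thm:n1}. The root $X_-$ gives the real resonant pole $z_-=\mu_g-2b^2-2b\sqrt{\Delta}<m$.
\end{enumerate}
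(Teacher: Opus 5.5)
\noindent Your argument follows the paper's proof essentially step for step. It uses the same scalar $M_z^{(0)}$, the same uniformizing variable $X=\sqrt{m-z}$ with second sheet $\operatorname{Re}X<0$, and the same Gaussian-times-polynomial dense set. As in the paper, the free and $G_z^{\circ,(1)}$ matrix elements are continued, the lower--lower entry of \eqref{appcert:eq:krein-M-resolvent} rules out cancellations, and the case analysis is identical.

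\noindent There is one slip, in (iii). Theorem~\ref{thm:n1}(i) only concerns $(-\infty,m)$ and says nothing about $\lambda=m$ itself. Corollary~\ref{appcert:cor:M-reduction} requires $\lambda<nm$, so it cannot be used at the threshold either. Your sentence ``it is not an eigenvalue, by Theorem~\ref{thm:n1}(i)'' is therefore unsupported. What the statement actually asserts is that $z=m$ is not an \emph{isolated} eigenvalue, and this follows at once from $m\in[m,+\infty)=\sigma_{\mathrm{ess}}(H_\mu^{[1]})$. For the stronger claim, argue as the paper does. By \eqref{eq:sector-action-basepoint}, an eigenvector $(\varphi+G^{\circ,(1)}\xi)\oplus\xi$ at $\lambda=m$ would satisfy
$|k|^2\varphi(k)=(m-\lambda_\star)(G^{\circ,(1)}\xi)(k)\to-g\xi$ as $k\to0$. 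Hence $\varphi\sim-g\xi|k|^{-2}$, which is not in $L^2$ near $0$ unless $\xi=0$, and then $\varphi=0$.

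\noindent Your caveat $z_\pm\neq\mu$, which the paper passes over, can be settled completely. A root with $m-X^2=\mu$ must satisfy $2bX=\mu-\mu_g=2b\sqrt{m-\nu}$, that is, $X=\sqrt{m-\nu}>0$. Such a root lies on the physical sheet; it is the case $\nu=\mu$, $\lambda_1=\mu$ of Theorem~\ref{thm:n1}. So the second-sheet roots never coincide with the free pole, and no exceptional parameter values need to be excluded.
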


\begin{proof}
In Definition~\ref{d:respole} we take
$H_0=H^{\circ,(1)}\oplus\mu$ on $L^2(\RE^3)\oplus\CO$ and, for example,
the dense set
\begin{equation*}
  \mathscr D
  =\operatorname{span}\{P(k)e^{-|k|^2/2}:P\text{ is a polynomial}\}
  \oplus\CO.
\end{equation*}
The physical sheet is determined by the branch
$\operatorname{Re}\sqrt{m-z}>0$; the second sheet is obtained by changing
the sign of this square root.
The scalar function controlling the one-sector resolvent is
\begin{equation*}
  M_z^{(0)}=z-\mu_g-2b\sqrt{m-z}.
\end{equation*}
The free resolvent matrix elements admit analytic continuation through
$[m,+\infty)$ on the indicated dense set, and the same is true of the
matrix elements of $G_z^{\circ,(1)}$ and $G_{\bar z}^{\circ,(1)*}$
against these test vectors.  By the resolvent formula
\eqref{appcert:eq:krein-M-resolvent}, the additional poles are therefore the
zeros of the continued scalar function $M_z^{(0)}$.  There is no cancellation:
for instance, the lower--lower matrix element of the resolvent contains
$(M_z^{(0)})^{-1}$ directly.

Set $X=\sqrt{m-z}$.  On the physical sheet, $X>0$ for real $z<m$; after
continuation to the second sheet the sign is reversed, so that $X<0$ for real
$z<m$.  Since $z=m-X^2$, the equation $M_z^{(0)}=0$ becomes
\begin{equation*}
  X^2+2bX+\mu_g-m=0,
\end{equation*}
whose roots are exactly those in
\eqref{eq:one-sector-resonance-roots}.  If $\mu_g>m+b^2$, the square root in
\eqref{eq:one-sector-resonance-roots} is purely imaginary and both roots have
negative real part, which gives the two poles in item~\textup{(i)}.  If
$m<\mu_g<m+b^2$, the square root is real and strictly smaller than $b$;
hence both roots are negative and lie on the second sheet, giving
item~\textup{(ii)}.  At equality the quadratic has a double root $X=-b$.
If $\mu_g=m$, the roots are $0$ and $-2b$, which gives item~\textup{(iii)}.
The threshold root $X=0$ is not an eigenvalue: by
\eqref{appcert:eq:G-lambda-explicit}, the formal lifted component at
$\lambda=m$ is proportional to $|k|^{-2}$ and therefore does not
belong to $L^2(\RE^3)$ near $k=0$.
Finally, if $\mu_g<m$, one has $X_+>0$ and $X_-<0$.  The first root is the
physical-sheet eigenvalue already found in Theorem~\ref{thm:n1}, while the
second gives item~\textup{(iv)}.
\end{proof}

\subsection[The sectors n at least 2]{The sectors \texorpdfstring{$n\geq2$}{n at least 2}}
\label{subsec:point-discrete-general-sectors}
We take $n\ge 2$ and use the notation
\begin{equation*}
L_{b}^{2,\alpha}(\RE^{3(n-1)})
\equiv L_{b}^{2}\bigl(\RE^{3(n-1)},(|K|^{2}+1)^{\alpha}dK\bigr).
\end{equation*}
We notice that for fixed $n$, the identifications
\begin{equation*}
 \mathcal D_{n-1}=\H_b^{(n-1)}
 =L_b^{2,2}(\RE^{3(n-1)})
\end{equation*}
hold as sets, with equivalent norms.
We denote by $\|\cdot\|_{\alpha}$ the norm in
$L_{b}^{2,\alpha}(\RE^{3(n-1)})$ and set
$\|\cdot\|\equiv\|\cdot\|_{0}$. For brevity we omit the index $(n-1)$
from functions and operators acting in these spaces:
\begin{equation*}
M^{d}_{\lambda}:=M_{\lambda}^{d,(n-1)}
\,,\qquad
    M_{\lambda}^{od}:=M_{\lambda}^{od,(n-1)}
    \,.
\end{equation*}
More explicitly,
$M^d_\lambda$ is the multiplication by the function 
\begin{equation*}
{m}_\lambda(K)\equiv{m}_\lambda(|K|)=p_{2}\big(\sqrt{|K|^{2}+nm-\lambda}\,\big)\,,\qquad
p_{2}(x):=x^{2}+2\pi^{2}g^{2}x+\mu_{g}-m\,.
\end{equation*}
Subsequently, in the case  $\mu_{g}< m$, we use the following convenient writing of the two roots $x_{\pm}$ of $p_{2}$:
\be\label{roots}
x_{+}=\sqrt{m-\lambda_{1}}\,,\qquad x_{-}=-\big(2\pi^{2}g^{2}+\sqrt{m-\lambda_{1}}\,\big)\,. 
\ee
where $\lambda_1$ denotes the simple eigenvalue for the $n=1$ case given in Eq. \eqref{eq:lambda-one-definition}.
One has
\begin{equation}\label{1}
|K|^{2}+p_{2}(\sqrt{nm-\lambda}\,)\le{m}_\lambda(K)\le (1+2\pi^{2}g^{2})(|K|^{2}+nm-\lambda) +\pi^{2}g^{2}+\mu_{g}-m\,.
\end{equation}

We define 
\begin{equation*}\label{eq:intro-two-threshold2}
  \thr_n:=
  \begin{cases}
    nm,&\mu_g\geq m,\\
    (n-1)m+\lambda_1&\mu_g<m,
  \end{cases}\qquad n\geq 1. 
\end{equation*}

Within this section we always assume $\lambda < \thr_n$, which implies $\lambda<nm$. 

\begin{remark}
For $\mu_g<m$, the number $\thr_n$ is the threshold generated by the
one-excitation sector; for $\mu_g\geq m$, it is the free threshold.  If
$n\geq3$, one cannot in general claim that $\thr_n$ is the bottom of the full essential
spectrum.  Indeed, if $\zeta\in\sigma(H_\mu^{[2]})$ and $\zeta<2m$, then
Theorem \ref{thm:lower-sector-inclusion} for $r=2$ 
gives
$
  [\zeta+(n-2)m,+\infty)
  \subset\sigma_{\mathrm{ess}}(H_\mu^{[n]}),
$
and its left endpoint may lie below $\thr_n$. In the absence regimes of
Theorem~\ref{ub-n}, however, $\thr_n$ is the bottom of the essential spectrum.
\end{remark}
There holds 
\begin{equation}\label{2}
p_{2}(\sqrt{nm-\lambda}\,)\ge 
\begin{cases}
\thr_n-\lambda+\mu_{g}-m &\text{if $\mu_{g}\ge m$}\\
\thr_n-\lambda&\text{if $\mu_{g}< m$}
\end{cases}
\end{equation}
For $\mu_{g}\ge m$, the lower bound \eqref{2} follows immediately from $p_{2}(x)\geq x^{2}+\mu_{g}-m$, for $x>0$; for $\mu_{g}< m$ it follows from the identity 
\begin{equation*}
p_{2}(\sqrt{nm -\lambda}) = \thr_n -\lambda + 2\pi^2g^2\big(\sqrt{\thr_n-\lambda +m-\lambda_1} - \sqrt{m-\lambda_1}\big) \qquad \mu_{g}< m.
\end{equation*}
By the lower bounds \eqref{1} and \eqref{2} there follows 
\begin{equation*}
{m}_\lambda(K) \geq |K|^{2}+p_{2}(\sqrt{nm-\lambda}\,)
\ge 
\begin{cases}
|K|^2+\thr_n-\lambda+\mu_{g}-m &\text{if $\mu_{g}\ge m$}\\
|K|^2+\thr_n-\lambda&\text{if $\mu_{g}< m$}
\end{cases}
\end{equation*}
We write the latter inequality as 
\begin{equation}\label{mK_lowerbound}
{m}_\lambda(K) \geq |K|^2+\thr_n-\lambda+\delta_g
\end{equation}
with 
\begin{equation}\label{deltag}
\delta_g :=
  \begin{cases}
 \mu_g - m&\mu_g\geq m,\\
    0 &\mu_g<m,
  \end{cases}. 
\end{equation}

\begin{lemma}\label{BS} 
Let $\lambda<nm$ and $\beta>1$. If
\begin{equation*}
  \frac{\beta-2}{2}\leq\alpha\leq\frac{\beta+2}{2},
\end{equation*}
then
\begin{equation*}
  M^{od}_{\lambda}\in
  \B\bigl(L^{2,\alpha}_{b}(\RE^{3(n-1)}),
  L^{2,\alpha-\beta}_{b}(\RE^{3(n-1)})\bigr).
\end{equation*}
\end{lemma}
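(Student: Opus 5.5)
By the bosonic symmetry it suffices to bound a single summand of $M^{od}_\lambda$, say the one replacing the last variable, acting on the full space $L^2(\RE^{3(n-1)})$; the sum over $j$ only costs a factor $n-1$. Write $K=(\hat K,k)$ with $\hat K\in\RE^{3(n-2)}$, and set $A:=1+|\hat K|^2$. Since $\lambda<nm$, there is $c_\lambda>0$ with
\begin{equation*}
|p|^2+|K|^2+nm-\lambda\ \ge\ c_\lambda\,(A+|k|^2+|p|^2).
\end{equation*}
Moreover $1+|K|^2=A+|k|^2$. The claimed bound is then equivalent to the uniform-in-$\hat K$ boundedness, on $L^2(\RE^3)$, of the operator $T_{A}$ with nonnegative kernel
\begin{equation*}
\mathcal K_A(k,p)=\frac{(A+|k|^2)^{(\alpha-\beta)/2}\,(A+|p|^2)^{-\alpha/2}}{A+|k|^2+|p|^2}.
\end{equation*}
Once $\|T_A\|\le C$ independently of $A\ge1$, Fubini in $\hat K$ gives the estimate $\|M^{od}_\lambda f\|_{\alpha-\beta}\le C g^2 (n-1)\|f\|_\alpha$. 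This is checked first on the dense set of smooth compactly supported functions and then extended by density.

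\emph{Step 1: scaling out $A$.} Substitute $k=\sqrt A\,u$ and $p=\sqrt A\,v$. The kernel becomes $A^{(\alpha-\beta)/2-\alpha/2-1}\mathcal K_1(u,v)$, and the unitary dilation on $L^2(\RE^3)$ contributes a factor $A^{3/2}$ to the operator norm. Hence
\begin{equation*}
\|T_A\|=A^{(1-\beta)/2}\,\|T_1\|\le\|T_1\|,
\end{equation*}
because $\beta>1$ and $A\ge1$. This is exactly where the hypothesis $\beta>1$ enters.

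\emph{Step 2: boundedness of $T_1$.} It remains to show that the fixed kernel
\begin{equation*}
\mathcal K_1(u,v)=\frac{(1+|u|^2)^{(\alpha-\beta)/2}(1+|v|^2)^{-\alpha/2}}{1+|u|^2+|v|^2}
\end{equation*}
defines a bounded operator on $L^2(\RE^3)$. I will use the Schur test with power weights $h_1(u)=(1+|u|^2)^{-\gamma}$ and $h_2(v)=(1+|v|^2)^{-\gamma'}$. The required integrals are estimated by splitting into the regions $|v|\le|u|$ and $|v|\ge|u|$. In each region the denominator is comparable to $1+\max(|u|,|v|)^2$, so the integrals reduce to one-dimensional radial power integrals. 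These converge and yield the correct decay precisely when the exponents satisfy a system of linear inequalities.

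\emph{Main obstacle.} The hard step is showing that this system has a solution $(\gamma,\gamma')$ exactly for $(\beta-2)/2\le\alpha\le(\beta+2)/2$, endpoints included. The two endpoints correspond to the regimes where all the weight sits on the output side ($k$ large) or on the input side ($p$ large). I would first try the symmetric choice $\gamma=\gamma'=3/4$ shifted by $(\alpha-\beta/2)/2$. If the endpoint cases fail by a logarithm, I would instead treat them separately by Hilbert--Schmidt-type or Hardy-type bounds on the region where one variable dominates the other.
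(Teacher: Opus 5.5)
Your reduction is correct: freezing $\hat K$, comparing the denominator with $A+|k|^2+|p|^2$, and the dilation identity $\|T_A\|=A^{(1-\beta)/2}\|T_1\|$ all hold. The gap is Step~2. The boundedness of $T_1$ is the whole analytic content of the lemma, you leave it as a plan, and the obstacle you anticipate is not the real one.

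The range $(\beta-2)/2\le\alpha\le(\beta+2)/2$ is not where $T_1$ stops being bounded. Testing $T_1$, resp.\ $T_1^*$, on the indicator of the unit ball shows unboundedness for $\alpha\ge\beta+\frac12$, resp.\ $\alpha\le-\frac12$. Conversely, for every $-\frac12<\alpha<\beta+\frac12$ the Schur test already works with your \emph{unshifted} choice $h_1=h_2=(1+|\cdot|^2)^{-3/4}$. Split into $|v|\le|u|$ and $|v|>|u|$; the latter integral converges because $\alpha>-\frac12$. This gives $\int\mathcal K_1(u,v)h_2(v)\,dv\lesssim(1+|u|^2)^{-\beta/2-1/4}+(1+|u|^2)^{(\alpha-\beta-2)/2}\log(2+|u|)\lesssim h_1(u)$, using $\beta>1$ and $\alpha<\beta+\frac12$. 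The second Schur integral is the same computation with $\alpha$ replaced by $\beta-\alpha$.

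For $\beta>1$ this open interval strictly contains the closed range of the lemma. The endpoints coincide only when $\beta=1$, so $\beta>1$ is used in Step~2 as well, not only in Step~1. Hence no logarithm appears at the lemma's endpoints and no separate endpoint argument is needed. The shift by $(\alpha-\beta/2)/2$ is what would break the test: for $\alpha=\beta=2$, every choice $\gamma,\gamma'\in\{\frac14,\frac54\}$ violates one of the two Schur inequalities.

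The quickest way to close the gap is a pointwise rank-one bound, which is what the paper's Cauchy--Schwarz argument amounts to. Since $1+|u|^2+|v|^2$ dominates both $1+|u|^2$ and $1+|v|^2$, one has $1+|u|^2+|v|^2\ge(1+|u|^2)^{1-\theta}(1+|v|^2)^{\theta}$ for every $\theta\in[0,1]$. The paper's choice $4\theta=\beta-2\alpha+2$ lies in $[0,1]$ exactly on the range of the lemma. It gives $\mathcal K_1(u,v)\le(1+|u|^2)^{-(\beta+2)/4}(1+|v|^2)^{-(\beta+2)/4}$, hence $\|T_1\|\le\int_{\RE^3}(1+|w|^2)^{-1-\beta/2}\,dw<\infty$ since $\beta>1$. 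These are the paper's two radial integrals with exponent $1+\beta/2$.

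The paper applies this splitting directly on $\RE^{3(n-1)}$ and needs no dilation. Once the factors are combined, the $\hat K$-dependent weight appears with the negative power $-1-\beta/2$, so $1+|\hat K|^2\ge1$ lets one drop it; your dilation step is a correct alternative. The stated range is therefore an artifact of this splitting ($0\le\theta\le1$), not a threshold for boundedness.
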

\begin{proof} Set
$4\theta=\beta-2\alpha+2$. The restriction on $\alpha$ is equivalent to
$0\leq\theta\leq1$. By the Cauchy--Schwarz inequality, 
\begin{align*}
&|\langle \phi,M^{od}_{\lambda}\varphi\rangle_{\alpha-\beta}|^{2}
\leq C_\lambda \left(\int_{\RE^{3n}}  \sum_{j=1}^{n-1}
\frac{  |\phi(K)|\,|\varphi(\hat K_{j,p})|\,(\,|K|^{2}+1)^{\alpha-\beta}}{ |p|^2+
| K|^{2}+1  }\, dpdK
  \right)^{2}
  \\
\leq &C_\lambda\left(\sum_{j=1}^{n-1}\int_{\RE^{3n}}  
\frac{  |\phi(K)|^{2}\, \,(|K|^{2}+1)^{\alpha-\beta}(|p|^{2}+|\hat K_{j}|^{2}+1)^{-\alpha}}{ (\,|p|^2+|K|^{2}+1)^{2\theta}  }\,dpdK  \right)\times
\\
&\times\left(\sum_{j=1}^{n-1}\int_{\RE^{3n}}  
\frac{  |\varphi(\hat K_{j,p})|^{2}(|\hat K_{j,p}|^{2}+1)^{\alpha}(|k_{j}|^{2}+|\hat K_{j}|^{2}+1)^{\alpha-\beta}}{(|p|^{2}+|k_{j}|^2+|\hat K_{j}|^{2}+1)^{2(1-\theta)}}
  \, dpdK\right)
  \\
\leq &C_\lambda\|\phi\|_{\alpha-\beta}^{2}\|\varphi\|^{2}_{\alpha}\int_{0}^{+\infty}  
\frac{  \varrho^{2} d\varrho}{ (\varrho^{2}+1)^{2\theta+\alpha}}
 \int_{0}^{+\infty}    
\frac{ r^{2} dr}{ (r^2+1)^{2(1-\theta)+\beta-\alpha}}  
  \,.
\end{align*}
With the above choice of $\theta$, both exponents in the last
line equal $1+\beta/2>3/2$, and hence both radial integrals converge.
\end{proof}

By \eqref{1} and \eqref{mK_lowerbound} there follows that for any $\lambda <\thr_n$ there exists $C_\lambda >0$ such that 
\begin{equation*}
\frac{1}{C_\lambda}(|K|^{2}+1)\le{m}_\lambda(K)\le C_\lambda(|K|^{2}+1).
\end{equation*}
Hence, for every $\alpha,q\in\RE$ and $\lambda<\thr_n$,
\begin{equation}\label{Mlambda-boundedness}
  (M_{\lambda}^{d})^{-q/2}
  \in\B\bigl(L_{b}^{2,\alpha}(\RE^{3(n-1)}),
  L_{b}^{2,\alpha+q}(\RE^{3(n-1)})\bigr).
\end{equation}

Let $\gamma\in(0,1)$, set $\beta=2-\gamma$, and assume that
$0\leq s\leq2$ and
\begin{equation*}
  -\frac{\gamma}{2}\leq\alpha\leq2-\frac{\gamma}{2}.
\end{equation*}
Lemma~\ref{BS} then gives
\begin{equation}\label{imp}
(M_{\lambda}^{d})^{-1+\frac{s}2}M^{od}_{\lambda}(M_{\lambda}^{d})^{-\frac{s}2}
\in \B\bigl(L_{b}^{2,\alpha-s}(\RE^{3(n-1)}),
L_{b}^{2,\alpha-s+\gamma}(\RE^{3(n-1)})\bigr),
\qquad \lambda<\thr_n.
\end{equation}
In particular, taking $s=1$ and $\alpha=1$, for all
$\lambda<\thr_n$ one has
\begin{equation}\label{bl}
B_{\lambda}:
=(M_{\lambda}^{d})^{-\frac{1}2}M^{od}_{\lambda}(M_{\lambda}^{d})^{-\frac{1}2} 
\in \B\bigl(L_{b}^{2}(\RE^{3(n-1)}),
L_{b}^{2,\gamma}(\RE^{3(n-1)})\bigr)
\subset\B(L_{b}^{2}(\RE^{3(n-1)})).
\end{equation} 
For fixed $n$, we write $B_\lambda\equiv B_{n,\lambda}$, in the notation
of the Introduction.
\begin{equation}\label{bl1}
B_{\lambda}\phi(K)=g^2 {m}^{-\frac{1}2}_\lambda(K)\sum_{j=1}^{n-1}   \int_{\RE^3} \frac{{m}^{-\frac{1}2}_\lambda(\hat K_{j,p})\phi(\hat K_{j,p}) }{ |p|^2+| K|^{2}+nm-\lambda  }
  \, dp\,.
\end{equation} 
\begin{lemma}\label{symm} 
$B_{\lambda}$ is a bounded self-adjoint nonnegative operator in
$L_{b}^{2}(\RE^{3(n-1)})$.
\end{lemma}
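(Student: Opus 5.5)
Boundedness is already recorded in \eqref{bl}, so only symmetry and nonnegativity remain to be shown; self-adjointness then follows from symmetry, because a bounded, everywhere-defined symmetric operator is self-adjoint.

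For symmetry, I will work first on a dense core, namely smooth compactly supported symmetric functions $\phi,\psi$. The quadratic form then reads
\begin{equation*}
\langle\psi,B_\lambda\phi\rangle=g^2\sum_{j=1}^{n-1}\int d\hat K_j\int dk_j\int dp\,
\frac{\overline{\Psi(\hat K_j,k_j)}\,\Phi(\hat K_j,p)}{|\hat K_j|^2+|k_j|^2+|p|^2+nm-\lambda},
\end{equation*}
with $\Phi:=m_\lambda^{-1/2}\phi$ and $\Psi:=m_\lambda^{-1/2}\psi$. Here I use that $m_\lambda$ is real and positive for $\lambda<\thr_n$ by \eqref{mK_lowerbound}. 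The denominator is invariant under the exchange $k_j\leftrightarrow p$, so the same Fubini argument as in the proof of Lemma~\ref{lem:S-construction-estimates} gives $\langle\psi,B_\lambda\phi\rangle=\langle B_\lambda\psi,\phi\rangle$. Absolute integrability, which justifies Fubini, follows from the weighted Cauchy--Schwarz estimate of Lemma~\ref{BS}. The identity then extends to all of $L^2_b$ by boundedness and density.

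For nonnegativity, I will write the kernel as a Laplace integral. Set $c:=nm-\lambda>0$ and use
\begin{equation*}
\frac{1}{|\hat K_j|^2+|k|^2+|p|^2+c}=\int_0^\infty e^{-t(|\hat K_j|^2+c)}\,e^{-t|k|^2}e^{-t|p|^2}\,dt .
\end{equation*}
Inserting this into the $j$-th term, at fixed $t$ and $\hat K_j$ the $k,p$ integrals factor. The $j$-th term therefore becomes
\begin{equation*}
g^2\int_0^\infty dt\int d\hat K_j\, e^{-t(|\hat K_j|^2+c)}\Bigl|\int dk\, e^{-t|k|^2}\Phi(\hat K_j,k)\Bigr|^2\ \ge 0.
\end{equation*}
Summing over $j$ gives $\langle\phi,B_\lambda\phi\rangle\ge0$ on the core. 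Since $B_\lambda$ is bounded, this extends by continuity to all of $L^2_b$.

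The main obstacle is justifying the Tonelli/Fubini interchange with the $t$-integral for the signed integrand. I would handle it by first proving finiteness of the absolute version, obtained by replacing $\Phi$ with $|\Phi|$. That version equals the form of $B_\lambda$ evaluated on $|\phi|$, which is finite by Lemma~\ref{BS}, so it supplies the dominating integrable function. Restricting to compactly supported $\phi$ as above avoids any remaining subtlety.
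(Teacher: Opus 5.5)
Your proposal is correct and follows essentially the paper's proof. Symmetry comes from the $k_j\leftrightarrow p$ invariance of the kernel (the paper phrases this as symmetry of $M^{od}_\lambda$ conjugated by the real positive multiplier $(M^d_\lambda)^{-1/2}$ on $L^{2,1}_b$), extended by density and the boundedness in \eqref{bl}; nonnegativity comes from the representation $1/x=\int_0^{\infty}e^{-tx}\,dt$, which turns the form into a sum of squared moduli. Your Tonelli justification, which dominates the signed integrand by the form evaluated on $|\phi|$, is a welcome detail that the paper leaves implicit.
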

\begin{proof} Let $\phi,\varphi\in
L_{b}^{2,1}(\RE^{3(n-1)})$.  Then
$(M_{\lambda}^{d})^{-1/2}\phi$ and
$(M_{\lambda}^{d})^{-1/2}\varphi$ belong to
$L_{b}^{2,2}(\RE^{3(n-1)})$.  The explicit formula for
$M_{\lambda}^{od}$ gives its symmetry on this domain, and therefore
\begin{align*}
&\left\langle
(M_{\lambda}^{d})^{-\frac12}M_{\lambda}^{od}
(M_{\lambda}^{d})^{-\frac12}\phi,\varphi\right\rangle \\
&\qquad=
\left\langle\phi,
(M_{\lambda}^{d})^{-\frac12}M_{\lambda}^{od}
(M_{\lambda}^{d})^{-\frac12}\varphi\right\rangle.
\end{align*}
Since $L_{b}^{2,1}$ is dense and $B_\lambda$ is bounded by
\eqref{bl}, the identity extends to all of $L_b^2$.  Thus $B_\lambda$ is
self-adjoint.

To prove nonnegativity, use
\begin{equation*}
  \frac1{x}=\int_0^{+\infty}e^{-tx}\,dt,
  \qquad x>0.
\end{equation*}
For $\psi\in L_b^{2,1}(\RE^{3(n-1)})$, set
$u=(M_\lambda^d)^{-1/2}\psi$.  The explicit formula for
$M_\lambda^{od}$, the above representation and Fubini's theorem give
\begin{equation*}
\begin{aligned}
  \langle\psi,B_\lambda\psi\rangle
  & =\langle u,M_\lambda^{od}u\rangle\\
  & =g^2\sum_{j=1}^{n-1}\int_0^{+\infty}
  \int_{\RE^{3(n-2)}}
  e^{-t(|\hat K_j|^2+nm-\lambda)}
  \left|\int_{\RE^3}e^{-t|k_j|^2}
  u(\hat K_j,k_j)\,dk_j\right|^2
  d\hat K_j\,dt
  \geq0.
\end{aligned}
\end{equation*}
The case $n=2$ is understood with the variable $\hat K_j$ absent.  By the
density of $L_b^{2,1}$ and the boundedness of $B_\lambda$, the inequality
extends to every $\psi\in L_b^2$.
\end{proof}
\begin{lemma}\label{sp}
\begin{equation}\label{6.9.1}
1\notin\sigma(B_{\lambda})\quad\Rightarrow\quad
\lambda\notin \sigma(H^{[n]}_{\mu})\cap(-\infty,\thr_n),
\end{equation}
\begin{equation}\label{6.9.2}
1\notin\sigma_{\mathrm{ess}}(B_{\lambda})\quad\Rightarrow\quad
\lambda\notin \sigma_{\mathrm{ess}}(H^{[n]}_{\mu})\cap(-\infty,\thr_n),
\end{equation}
\begin{equation}\label{6.9.3}
1\notin\sigma_{c}(B_{\lambda})\quad\Rightarrow\quad\lambda\notin \sigma_{c}(H^{[n]}_{\mu})\cap(-\infty,\thr_n) \,,
\end{equation}
\begin{equation}\label{6.9.4}
1\in\sigma_{p}(B_{\lambda})\quad\Leftrightarrow\quad
\lambda\in \sigma_{p}(H^{[n]}_{\mu})\cap(-\infty,\thr_n).
\end{equation}
\begin{equation}\label{6.9.5}
1\in\sigma_{d}(B_{\lambda})\quad\Rightarrow\quad\lambda\in \sigma_{d}(H^{[n]}_{\mu})\cap(-\infty,\thr_n) \,.
\end{equation}
\end{lemma}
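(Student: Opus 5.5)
The key identity is the factorization
\begin{equation*}
M_\lambda=-M^d_\lambda+M^{od}_\lambda=-(M^d_\lambda)^{1/2}\,(1-B_\lambda)\,(M^d_\lambda)^{1/2},
\qquad \lambda<\thr_n ,
\end{equation*}
combined with Corollary~\ref{appcert:cor:M-reduction}(ii)--(iv), which already translates every spectral statement about $H^{[n]}_\mu$ at $\lambda<nm$ into the corresponding statement about $0$ for $M_\lambda\equiv M^{(n-1)}_\lambda$. So it suffices to transfer each statement between $M_\lambda$ at $0$ and $1-B_\lambda$ at $0$.

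By \eqref{Mlambda-boundedness} with $q=\pm1$, the operator $D:=(M^d_\lambda)^{1/2}$ is a bounded isomorphism $L^{2,\alpha+1}_b\to L^{2,\alpha}_b$ for every $\alpha$. In particular $D:\mathcal D_{n-1}=L^{2,2}_b\to L^{2,1}_b$ and $D:L^{2,1}_b\to L^2_b$ are isomorphisms. I would first verify the factorization rigorously on $\mathcal D_{n-1}=\dom(M_\lambda)$, interpreting the outer $D$ on $L^{2,1}_b$. For $u\in\mathcal D_{n-1}$ we have $(1-B_\lambda)Du=Du-D^{-1}M^{od}_\lambda u\in L^{2,1}_b$, since Lemma~\ref{BS} (with $\beta=2$, $\alpha=2$) gives $M^{od}_\lambda u\in L^2_b$. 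Hence the identity holds pointwise. Conversely, if $f\in L^2_b$ satisfies $B_\lambda f=f$, then $f\in L^{2,\gamma}_b$ by \eqref{bl}. Bootstrapping with \eqref{imp} (each application of $B_\lambda$ gains $\gamma$ in the weight) gives $f\in L^{2,1}_b$, and then $u:=D^{-1}f\in L^{2,2}_b$. This regularity bootstrap is the step I expect to need the most care, since the range of admissible $\alpha$ in \eqref{imp} must cover each step; I would check that $\alpha$ up to $2-\gamma/2$ suffices to reach weight $1$.

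With this, $u\mapsto Du$ is a linear isomorphism $\ker M_\lambda\to\ker(1-B_\lambda)$, which gives \eqref{6.9.4}. For \eqref{6.9.1}: if $1\in\varrho(B_\lambda)$, then $M_\lambda^{-1}=-D^{-1}(1-B_\lambda)^{-1}D^{-1}$ is bounded on $L^2_b$ (using $D^{-1}\in\B(L^2_b)$), so $0\in\varrho(M_\lambda)$. For \eqref{6.9.2}: take a singular Weyl sequence $u_j$ for $M_\lambda$ at $0$. Then $\|Du_j\|^2=\langle u_j,M^d_\lambda u_j\rangle$. If $\|Du_j\|$ stays bounded, set $f_j=Du_j/\|Du_j\|$. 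We have $\|Du_j\|\geq c>0$ because $D^{-1}$ is bounded. Moreover $(1-B_\lambda)f_j=-D^{-1}M_\lambda u_j/\|Du_j\|\to0$, and $f_j\rightharpoonup0$ since $D u_j=D^{-1}M^d_\lambda u_j$ tested against nice functions tends to $0$. This gives a singular Weyl sequence for $B_\lambda$ at $1$.

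The main obstacle is the unbounded case $\|Du_j\|\to\infty$. I would rule it out by pairing with $u_j$: $\langle u_j,M_\lambda u_j\rangle\to0$ gives $\|Du_j\|^2=\langle Du_j,B_\lambda Du_j\rangle+o(1)$. One then uses $M^{od}_\lambda$ relatively form-small against $M^d_\lambda$ with bound $<1$ at high momenta, via the $L^{2,\gamma}$ gain in \eqref{bl}, which yields compactness of $B_\lambda$ restricted to large-$|K|$ cutoffs. If this fails, the fallback is a Weyl sequence built directly from $D^{-1}$-renormalized vectors. Statement \eqref{6.9.3} then follows by combining \eqref{6.9.1}, \eqref{6.9.4} and the corollary: $\sigma_c=\sigma\setminus\sigma_p$ for self-adjoint operators, and closedness of the range transfers through bounded isomorphisms. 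Finally, \eqref{6.9.5} follows because $1\in\sigma_d(B_\lambda)$ gives, via \eqref{6.9.4}, a finite-dimensional kernel of $M_\lambda$ and $1\notin\sigma_{\rm ess}(B_\lambda)$. Then \eqref{6.9.2} yields $\lambda\notin\sigma_{\rm ess}(H^{[n]}_\mu)$ while $\lambda\in\sigma_p$, hence $\lambda\in\sigma_d$.
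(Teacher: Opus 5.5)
Your route is the same as the paper's. You factor through $D:=(M^d_\lambda)^{1/2}$, obtain the kernel isomorphism $u\mapsto Du$ with the weight bootstrap via \eqref{imp}, and transfer Weyl sequences from $M_\lambda$ at $0$ to $B_\lambda$ at $1$. Your variants for \eqref{6.9.1} and \eqref{6.9.3} are fine. For \eqref{6.9.1}, read your formula as a left inverse: on $\mathcal D_{n-1}$ one has $u=-D^{-1}(1-B_\lambda)^{-1}D^{-1}M_\lambda u$, which gives $\|M_\lambda u\|\geq c\|u\|$, and self-adjointness then yields $0\in\varrho(M_\lambda)$. For \eqref{6.9.3} you argue from \eqref{6.9.1} and \eqref{6.9.4}, whereas the paper uses \eqref{6.9.2}. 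The gap is in \eqref{6.9.2}, and hence also in \eqref{6.9.5}, which depends on it. You only treat Weyl sequences with $\|Du_j\|$ bounded and leave the case $\|Du_j\|\to\infty$ to an unspecified argument.

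That case needs no separate treatment, because only the lower bound $\|Du_j\|\geq\|D^{-1}\|^{-1}$ is used, and you already have it. Put $f_j=Du_j/\|Du_j\|$ and use $(1-B_\lambda)Du_j=-D^{-1}M_\lambda u_j$. Then $\|(1-B_\lambda)f_j\|\leq\|D^{-1}\|^2\|M_\lambda u_j\|\to0$. For $\varphi\in L^{2,1}_b$ one has $|\langle\varphi,f_j\rangle|\leq\|D^{-1}\|\,|\langle D\varphi,u_j\rangle|\to0$, so $f_j\rightharpoonup0$ by density and $\|f_j\|=1$. A large normalization only helps. This is exactly the paper's argument, which uses $\|\phi_\ell\|\geq\sqrt{c_\lambda}$. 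In fact the unbounded case cannot occur at all. Since $M_\lambda\in\B(L^{2,2}_b,L^2_b)$ is closed on $L^{2,2}_b$, its graph norm is equivalent to the $L^{2,2}_b$-norm, so every Weyl sequence is bounded there.

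The remedy you sketch would not have worked. The pairing $\|Du_j\|^2=\langle Du_j,B_\lambda Du_j\rangle+o(1)$ gives no bound when $\|B_\lambda\|\geq1$, and that is the only regime in which $1\in\sigma(B_\lambda)$ is possible. Moreover, for $n\geq3$ no momentum cutoff makes $B_\lambda$ compact, since it acts as the identity in the $n-2$ variables it does not replace. The genuinely delicate direction is the converse one, from $B_\lambda$ back to $M_\lambda$, where $\|D^{-1}f_j\|$ may tend to zero while $D$ is unbounded. That direction is precisely the one the lemma does not assert.
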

\begin{proof} We use the characterization of the spectrum of $H^{[n]}_{\mu}$ below $nm$ given by Corollary \ref{appcert:cor:M-reduction}\ref{item:spcor1}.

Take $\lambda <\thr_n$ and 
 let  $\{\psi_{\ell}\}\subset L_{b}^{2,2}(\RE^{3(n-1)})$ be a Weyl sequence for $(0,M^{d}_{\lambda}-M_{\lambda}^{od})$, i.e., $\|\psi_{\ell}\|=1$ and $\|(M^{d}_{\lambda}-M_{\lambda}^{od})\psi_{\ell}\|\to 0$. By Eq. \eqref{Mlambda-boundedness}, 
  $\phi_{\ell}:=(M_{\lambda}^{d})^{\frac{1}2}\psi_{\ell}\in 
L_{b}^{2,1}(\RE^{3(n-1)})\subset L_{b}^{2}(\RE^{3(n-1)})$ and one has 
\begin{equation}\label{rel}
(1-B_{\lambda})\phi_{\ell}=(M_{\lambda}^{d})^{-\frac{1}2}
(M^{d}_{\lambda}-M_{\lambda}^{od})(M_{\lambda}^{d})^{-\frac{1}2}\phi_{\ell}
=(M_{\lambda}^{d})^{-\frac{1}2}(M^{d}_{\lambda}-M_{\lambda}^{od})\psi_{\ell}\,.
\end{equation} 
Hence
\begin{equation*}
\|(1-B_{\lambda})\phi_{\ell}\|\le \|(M_{\lambda}^{d})^{-\frac12}\|\,\|(M^{d}_{\lambda}-M_{\lambda}^{od})\psi_{\ell}\|\to 0\,.
\end{equation*}
Set
\begin{equation*}
  c_\lambda:=\inf_K m_\lambda(K)>0,
\end{equation*}
where positivity follows from \eqref{mK_lowerbound}. Then
\begin{equation*}
  \|\phi_\ell\|
  =\|(M_\lambda^d)^{1/2}\psi_\ell\|
  \geq\sqrt{c_\lambda}\,\|\psi_\ell\|
  =\sqrt{c_\lambda}.
\end{equation*}
Consequently, with
$\widetilde\phi_\ell:=\phi_\ell/\|\phi_\ell\|$, the sequence
$\{\widetilde\phi_\ell\}$ is a Weyl sequence for $(1,B_\lambda)$, which proves \eqref{6.9.1}.

To prove \eqref{6.9.2}, 
 assume in addition that
$\psi_\ell\rightharpoonup0$. For every
$\varphi\in L_b^{2,1}(\RE^{3(n-1)})$ one has
\begin{equation*}
  \langle\varphi,\phi_\ell\rangle
  =\langle(M_\lambda^d)^{1/2}\varphi,\psi_\ell\rangle
  \longrightarrow0.
\end{equation*}
Since $L_b^{2,1}$ is dense, the normalized sequence also satisfies
$\widetilde\phi_\ell\rightharpoonup0$.

As regards the point spectrum, see Eq. \eqref{6.9.4}, Eq. 
\eqref{rel} shows that if
$\psi_0\in L_b^{2,2}(\RE^{3(n-1)})$ and
$(M_\lambda^d-M_\lambda^{od})\psi_0=0$, then
$\phi_0=(M_\lambda^d)^{1/2}\psi_0$ satisfies
$\phi_0=B_\lambda\phi_0$. Conversely, suppose that
$\phi_0\in L_b^2(\RE^{3(n-1)})$ and
$\phi_0=B_\lambda\phi_0$. Choose $\gamma=1/2$ in \eqref{imp}.
Taking successively $(s,\alpha)=(1,1)$ and $(s,\alpha)=(1,3/2)$ gives
\begin{equation*}
  \phi_0\in L_b^{2,1/2}(\RE^{3(n-1)})
  \quad\text{and then}\quad
  \phi_0\in L_b^{2,1}(\RE^{3(n-1)}).
\end{equation*}
Thus
\begin{equation*}
  \psi_0:=(M_\lambda^d)^{-1/2}\phi_0
  \in L_b^{2,2}(\RE^{3(n-1)}),
\end{equation*}
and \eqref{rel} yields
$(M_\lambda^d-M_\lambda^{od})\psi_0=0$.

To prove Eq. \eqref{6.9.3}, assume that $\lambda\in \sigma_{c}(H^{[n]}_{\mu})\cap(-\infty,\thr_n)$. By Eq. \eqref{6.9.2} it should be $1\in\sigma_{\mathrm{ess}}(B_{\lambda})$. Since, by Eq.\eqref{6.9.4}, $1$ cannot be an eigenvalue of $B_{\lambda}$ there follows $1\in\sigma_{\mathrm{c}}(B_{\lambda})$. 

To conclude, $1\in\sigma_{d}(B_{\lambda})$ implies, by Eqs. \eqref{6.9.2} and \eqref{6.9.4},  $\lambda \notin \sigma_{\mathrm{ess}}(H^{[n]}_{\mu})\cap(-\infty,\thr_n)$ and $\lambda\in \sigma_{p}(H^{[n]}_{\mu})\cap(-\infty,\thr_n)$, hence $\lambda\in \sigma_{d}(H^{[n]}_{\mu})\cap(-\infty,\thr_n)$ which proves Eq. \eqref{6.9.5}.

\end{proof}

\begin{theorem}\label{ub-n}
Let $n\geq2$. If
\begin{equation*}
\mu_g<m\qquad\text{and}\qquad
\frac{\pi^2g^2}{\sqrt{m-\lambda_1}}\leq1,
\end{equation*}
or if $\mu_g\geq m$, then
\begin{equation*}
\sigma(H_\mu^{[n]})
=\sigma_{\mathrm{ess}}(H_\mu^{[n]})
=[\thr_n,+\infty).
\end{equation*}
\end{theorem}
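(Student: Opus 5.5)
The plan is to prove $\|B_\lambda\|<1$ for every $\lambda<\thr_n$ under either hypothesis, and then combine three ingredients. First, since $B_\lambda$ is self-adjoint and nonnegative (Lemma~\ref{symm}), the bound gives $1\notin\sigma(B_\lambda)$. Second, \eqref{6.9.1} of Lemma~\ref{sp} then gives $\sigma(H_\mu^{[n]})\cap(-\infty,\thr_n)=\varnothing$. Third, the reverse inclusion $[\thr_n,+\infty)\subset\sigma_{\mathrm{ess}}(H_\mu^{[n]})$ comes from earlier results: Theorem~\ref{thm:free-essential-spectrum} when $\mu_g\geq m$, and \eqref{eq:one-sector-branch} of Corollary~\ref{cor:essential-spectrum-sector-one} when $\mu_g<m$. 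Since $\sigma_{\mathrm{ess}}\subset\sigma$, these give $\sigma=\sigma_{\mathrm{ess}}=[\thr_n,+\infty)$. So everything reduces to the norm estimate. Because $B_\lambda\geq0$, it suffices to prove the form inequality
\begin{equation*}
\langle u,M^{od}_\lambda u\rangle\leq\langle u,M^{d}_\lambda u\rangle,\qquad u\in L^{2,2}_b(\RE^{3(n-1)}),
\end{equation*}
with a constant strictly less than one, or at least with strict inequality uniform on $\lambda$-compacts.

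For the form bound I would argue as in the proof of Lemma~\ref{lem:S-construction-estimates}. Write $\langle u,M^{od}_\lambda u\rangle$ as a sum over $j$ of integrals in $(\hat K_j,k_j,p)$ with kernel $g^2(|p|^2+|k_j|^2+X_j)^{-1}$, where $X_j:=|\hat K_j|^2+nm-\lambda$. On each term apply the Schur test
\begin{equation*}
2|u(\ldots,k,\ldots)\,u(\ldots,p,\ldots)|\leq\frac{h(k)}{h(p)}|u(\ldots,k,\ldots)|^2+\frac{h(p)}{h(k)}|u(\ldots,p,\ldots)|^2 ,
\end{equation*}
with a weight $h$ depending on $X_j$. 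The resulting $p$-integrals are explicit through the partial-fraction computation already used for $J_j$ in that proof, giving bounds of the type $2\pi^2/(\sqrt{A}+\sqrt{B})$. The aim is to find $h$ so that, after summing over $j$ and using bosonic symmetry, the total is bounded pointwise by a multiplier $\pi^2g^2\,\Theta_\lambda(K)$. I would first try $h(k)=|k|^2+X_j$, together with the convex weights $w_j=(|k_j|^2+c)/(|K|^2+\ldots)$. These weights sum to at most one, and the point of using them is to avoid losing a factor $n-1$ in the sum over $j$.

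The comparison with the diagonal part uses the factorization already exploited in Corollary~\ref{cor:essential-spectrum-sector-two}. With $x=\sqrt{|K|^2+nm-\lambda}$, $a=\sqrt{m-\lambda_1}$ and $b=\pi^2g^2$, one has $m_\lambda(K)=p_2(x)=(x-a)(x+a+2b)$ when $\mu_g<m$, and $m_\lambda(K)\geq x^2+2bx$ when $\mu_g\geq m$. In the case $\mu_g\geq m$ the term $2bx$ in $m_\lambda$ should dominate an off-diagonal bound of order $bx$, for every $g$. In the case $\mu_g<m$ the hypothesis $b\leq a$ must be used exactly where $x$ is close to $a$.

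The hard part will be this last regime, and I expect it to be the main obstacle. A crude bound $\Theta_\lambda\sim x$ fails near $x\approx a$, because there $m_\lambda$ vanishes linearly. So $\Theta_\lambda$ must be sharper: it should depend on the spectator energy $|\hat K_j|^2+nm-\lambda\geq m-\lambda_1+(|K|^2-|k_j|^2)$ rather than on the full $x$. I would try to choose $h$ so that the off-diagonal form is bounded by
\begin{equation*}
b\,\bigl(x-\sqrt{x^2-a^2}\bigr)\ \text{or similar},
\end{equation*}
and then check the scalar inequality against $(x-a)(x+a+2b)$ for $x\geq a$ using $b\leq a$. I expect strictness to come from $\lambda<\thr_n$, i.e. from $x>a$ uniformly.
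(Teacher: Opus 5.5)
Your reduction is the same as the paper's: show $\|B_\lambda\|<1$ for each $\lambda<\thr_n$, apply \eqref{6.9.1}, and get $[\thr_n,+\infty)\subset\sigma_{\mathrm{ess}}(H_\mu^{[n]})$ from Theorem~\ref{thm:free-essential-spectrum} or \eqref{eq:one-sector-branch}. The gap is in the only nontrivial step, the form bound on $M^{od}_\lambda$. You never produce it, and the concrete choices you propose cannot produce it.

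Put $c=nm-\lambda$. For $\mu_g<m$ one has $m_\lambda(0)=(\sqrt c-a)(\sqrt c+a+2b)\to0$ as $\lambda\nearrow\thr_n$. So a multiplier $\Theta_\lambda$ with $\langle u,M^{od}_\lambda u\rangle\le\int\Theta_\lambda|u|^2$ can be compared pointwise with $m_\lambda$ only if it vanishes at $K=0$.

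Your first weight $h(k)=|k|^2+X_j$ gives, by the partial-fraction formula, $\Theta_j=2bx^2/(x+\sqrt{X_j})\ge bx$, hence $\Theta_\lambda\ge(n-1)bx$. This does not vanish at $K=0$. It also carries the factor $n-1$: for $n=3$, $\mu_g=m$ and $c\ll|k_1|^2=|k_2|^2\ll b^2$, the ratio $\Theta_\lambda/(x^2+2bx)$ is about $4/(2+\sqrt2)>1$. So this weight fails even in the large-gap case.

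Your fallback target $b(x-\sqrt{x^2-a^2})$ equals $ab$ at $x=a$. The scalar inequality you intend to check against $(x-a)(x+a+2b)$ is therefore false near $x=a$.

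Separately, you need a constant $<1$ for each fixed $\lambda$. A strict form inequality without such a constant only gives $\|B_\lambda\|\le1$, which does not exclude $1\in\sigma(B_\lambda)$.

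The missing idea is the Schur weight $h(k)=|k|^2$, which vanishes at the origin. Since $\int_{\RE^3}|p|^{-2}(|p|^2+A)^{-1}\,dp=2\pi^2/\sqrt A$, each $\Theta_j$ becomes proportional to $|k_j|^2$. Because $\sum_j|k_j|^2=|K|^2$, the factor $n-1$ disappears and one gets
\begin{equation*}
\langle u,M^{od}_\lambda u\rangle\le 2\pi^2g^2\int_{\RE^{3(n-1)}}\frac{|K|^2}{\sqrt{|K|^2+c}}\,|u(K)|^2\,dK .
\end{equation*}
Now $|K|^2=x^2-c<(x-a)(x+a)$ cancels the linear zero of $p_2(x)=(x-a)(x+a+2b)$. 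Since $x\mapsto\frac{x+a}{x(x+a+2b)}$ is decreasing and $\sqrt c>a$, this gives
\begin{equation*}
\|B_\lambda\|\le 2b\sup_{x\ge\sqrt c}\frac{x+a}{x(x+a+2b)}=\frac{2b(\sqrt c+a)}{\sqrt c(\sqrt c+a+2b)}<\frac{2b}{a+b}\le1
\end{equation*}
when $b\le a$. For $\mu_g\ge m$, the bound $p_2(x)\ge x(x+2b)$ gives $\|B_\lambda\|\le 2b/(\sqrt c+2b)<1$. This is exactly the estimate in the paper's proof.
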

\begin{proof}
Fix $\lambda<\thr_n$. The case $g=0$ is immediate since $B_\lambda=0$;
suppose henceforth that $g\neq0$.

Define 
\begin{equation*}
\gamma_g:=
  \begin{cases}
 0&\mu_g\geq m,\\
 m-\lambda_1&\mu_g<m,
  \end{cases} 
\end{equation*}
With this definition $nm = \thr_n +\gamma_g$. Set
$c:=nm-\lambda=\thr_n-\lambda+\gamma_g>0$.
We first take $\phi$ smooth, compactly supported and symmetric; the resulting
form estimate extends to $L_b^{2,1}$ by density.

Setting $\hat K := (k_1,\dots,k_{n-2})$ and, in the $n=2$ case,  $\phi(\hat K,\cdot)\equiv \phi$,  by $\frac1{x}=\int_{0}^{+\infty}e^{-tx}dt$ one has
\begin{equation*}
\begin{aligned}
\langle\phi,M_\lambda^{od}\phi\rangle 
= & (n-1) g^2 \int_{\RE^{3(n-2)}}\int_{\RE^{3}}  \int_{\RE^{3}} \frac{\overline{\phi}(\hat K,k) \phi(\hat K,p)\,dp\,dk\,d\hat K}{|\hat K|^2+|k|^2+|p|^2+\thr_n -\lambda+ \gamma_g}   
\\
= & (n-1) g^2 \int_0^{+\infty} \left(\int_{\RE^{3(n-2)}} e^{-t(|\hat K|^2+\thr_n -\lambda+ \gamma_g)} \left| \int_{\RE^{3}} e^{-t |k|^2} \phi(\hat K,k)\,dk  \right|^2 d\hat K\right) dt
\end{aligned}
\end{equation*}
which shows that $\langle\phi,M_\lambda^{od}\phi\rangle$ is real valued and nonnegative, so that 
\begin{equation*}
\langle\phi,M_\lambda^{od}\phi\rangle 
\leq  (n-1) g^2 \int_{\RE^{3(n-2)}}\int_{\RE^{3}}  \int_{\RE^{3}} \frac{|\phi(\hat K,k)|\, | \phi(\hat K,p)|\,dp\,dk\,d\hat K}{|\hat K|^2+|k|^2+|p|^2+\thr_n -\lambda+ \gamma_g}   .
\end{equation*}
In the integral we use the inequality 
\begin{equation*}
|\phi(\hat K,k)|\, |\phi(\hat K,p)|
\leq\frac12\left(
\frac{|k|^2}{|p|^2}|\phi(\hat K,k)|^2
+\frac{|p|^2}{|k|^2}|\phi(\hat K,p)|^2\right)
\end{equation*}
and get two terms that give exactly the same contribution after exchanging $k$ and $p$:
\begin{equation*}
\begin{aligned}
\langle\phi,M_\lambda^{od}\phi\rangle 
\leq&  (n-1) g^2 \int_{\RE^{3(n-2)}}\int_{\RE^{3}}  \int_{\RE^{3}}
 \frac{|k|^2 |\phi(\hat K,k)|^2\,dp\,dk\,d\hat K}{|p|^2(|\hat K|^2+|k|^2+|p|^2+\thr_n -\lambda+ \gamma_g)}    \\ 
 =& \sum_{j=1}^{n-1} g^2 \int_{\RE^{3(n-1)}} \int_{\RE^{3}}
 \frac{|k_j|^2 |\phi(K)|^2\,dp\,d K}{|p|^2(| K|^2+|p|^2+\thr_n -\lambda+ \gamma_g)}  \\
=&2\pi^2g^2\int_{\RE^{3(n-1)}}
\frac{|K|^2}{\sqrt{|K|^2+\thr_n -\lambda+ \gamma_g}}\,|\phi(K)|^2\,dK,
\end{aligned}
\end{equation*}
where we used $\sum_{j=1}^{n-1}|k_j|^2=|K|^2$ and, by integration in spherical coordinates,
\begin{equation*}
\int_{\RE^3}\frac{dp}{|p|^2(|p|^2+|K|^2+c)}
=4\pi\int_0^{+\infty}\frac{dr}{r^2+|K|^2+c}
=\frac{2\pi^2}{\sqrt{|K|^2+c}}.
\end{equation*}
By Lemma~\ref{symm}, $B_\lambda\geq0$. Hence, by \eqref{bl}, one gets 
\begin{equation*}
\sup\sigma(B_{\lambda})=\|B_{\lambda}\|
=\sup_{\|\phi\|=1}\langle(M_{\lambda}^{d})^{-1/2}\phi,M_\lambda^{od}(M_{\lambda}^{d})^{-1/2}\phi\rangle.
\end{equation*}
Recalling  that ${m}_\lambda(K)\equiv{m}_\lambda(|K|)=p_{2}\big(\sqrt{|K|^{2}+nm-\lambda}\,\big)$ there follows 
\begin{equation*}
\|B_{\lambda}\| \leq 2\pi^2g^2 \sup_{r>0}\,\frac{r^{2}}{p_{2}\big(\sqrt{r^2+nm-\lambda}\,\big)\sqrt{r^{2} +\thr_n -\lambda+ \gamma_g}}.
\end{equation*}

First suppose that $\mu_g<m$. Put $a:=\sqrt{m-\lambda_1}$ and
$b:=\pi^2g^2$. Then $c>a^2$ and, by \eqref{roots},
$p_2(x)=(x-a)(x+a+2b)$. With $x=\sqrt{r^2+c}$, the preceding bound gives
\begin{align*}
\|B_\lambda\|
&\leq 2b\sup_{x\geq\sqrt c}
\frac{x^2-c}{x(x-a)(x+a+2b)}\\
&\leq 2b\sup_{x\geq\sqrt c}
\frac{x+a}{x(x+a+2b)}.
\end{align*}
The last function is strictly decreasing, since
\begin{equation*}
\frac{d}{dx}\frac{x+a}{x(x+a+2b)}
=-\frac{(x+a)^2+2ab}{x^2(x+a+2b)^2}<0.
\end{equation*}
Consequently,
\begin{equation}\label{eq:BS-small-gap-bound}
\|B_\lambda\|
\leq\frac{2b(\sqrt c+a)}{\sqrt c(\sqrt c+a+2b)}
<\frac{2b}{a+b}.
\end{equation}
Here the strict inequality follows from $\sqrt c>a$. Thus $a\geq b$
gives $\|B_\lambda\|<1$, including the endpoint $a=b$.

If  $\mu_g\geq m$ then  $\gamma_g =  0 $, $\thr_n=nm$ and there holds $p_{2}(x)\geq x(x+2\pi^{2}g^{2})$. Hence,
\begin{equation}\label{eq:BS-large-gap-bound}
\begin{aligned}
\|B_{\lambda}\| \leq& 2\pi^2g^2 \sup_{r>0}\,\frac{r^{2}}{p_{2}\big(\sqrt{r^2+nm-\lambda}\,\big)\sqrt{r^{2} +\thr_n -\lambda+ \gamma_g}} \\ 
 \leq & 2\pi^2g^2 \sup_{r>0}\,\frac{r^{2}}{(r^2+nm-\lambda) (\sqrt{r^{2} +nm -\lambda} + 2\pi^2g^2)}\\ 
 \leq & \frac{ 2\pi^2g^2}{(\sqrt{nm -\lambda} + 2\pi^2g^2)}
<1.
\end{aligned}
\end{equation}
This bound also holds for $g=0$ and includes the case $\mu_g=m$.

In any case, Lemma~\ref{sp} yields
$\sigma(H_\mu^{[n]})\cap(-\infty,\thr_n)=\varnothing$.
The reverse inclusion
$[\thr_n,+\infty)\subset\sigma_{\mathrm{ess}}(H_\mu^{[n]})$
follows from Theorem~\ref{thm:free-essential-spectrum} when
$\mu_g\geq m$, and from
Corollary~\ref{cor:essential-spectrum-sector-one} when $\mu_g<m$.
This proves the stated equality.

In these regimes the expected HVZ formula also follows. Indeed,
Theorem~\ref{thm:n1} and the equality just proved give $E_r=rm$ for
$r\geq0$ when $\mu_g\geq m$. When $\mu_g<m$ and $b\leq a$, they give
$E_0=0$ and $E_r=\lambda_1+(r-1)m$ for $r\geq1$. Hence, in both cases,
\begin{equation*}
\tau_n=\min_{0\leq r<n}\{E_r+(n-r)m\}=\thr_n,
\qquad n\geq2,
\end{equation*}
and $\sigma_{\mathrm{ess}}(H_\mu^{[n]})=[\tau_n,+\infty)$.
\end{proof}

The next Lemmata are ingredients for the proof of the successive Theorem.
\begin{lemma}\label{mon} The function $(-\infty, \thr_n)\ni\lambda\mapsto \|B_{\lambda}\|$ is continuous and non decreasing. Moreover, if $\lim_{\lambda\nearrow \thr_n}\|B_{\lambda}\|>1$ then $\|B_{\lambda}\|=1$ for some $\lambda<\thr_n$.
\end{lemma}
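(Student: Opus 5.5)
The plan is to work with the quadratic-form representation of $B_\lambda$ and use monotonicity of the kernel in $\lambda$. For $\phi\in L_b^2$, set $u_\lambda:=(M_\lambda^d)^{-1/2}\phi$. Lemma~\ref{symm} and the Laplace-transform representation give
\begin{equation*}
\langle\phi,B_\lambda\phi\rangle
=g^2\sum_{j=1}^{n-1}\int_{\RE^{3(n-1)}}\int_{\RE^3}
\frac{\overline{\phi(K)}\,m_\lambda(K)^{-1/2}\,m_\lambda(\hat K_{j,p})^{-1/2}\,\phi(\hat K_{j,p})}{|p|^2+|K|^2+nm-\lambda}\,dp\,dK .
\end{equation*}
Since $B_\lambda\ge0$, we have $\|B_\lambda\|=\sup_{\|\phi\|=1}\langle\phi,B_\lambda\phi\rangle$. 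Restricting the supremum to $\phi\ge0$ does not change it: the kernel is pointwise positive, so $|\langle\phi,B_\lambda\phi\rangle|\le\langle|\phi|,B_\lambda|\phi|\rangle$, and $|\phi|$ is still symmetric.

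For $\phi\ge0$, each factor in the integrand is nondecreasing in $\lambda$. The denominator $|p|^2+|K|^2+nm-\lambda$ decreases as $\lambda$ increases. The multiplier $m_\lambda(K)=p_2(\sqrt{|K|^2+nm-\lambda})$ is also decreasing in $\lambda$, because $p_2$ is increasing on $x>0$ beyond its positive root. Here we use \eqref{mK_lowerbound}: for $\lambda<\thr_n$ we have $\sqrt{|K|^2+nm-\lambda}\ge\sqrt{nm-\lambda}>x_+$ (or $p_2$ is simply increasing on $(0,\infty)$ when $\mu_g\ge m$). Hence $m_\lambda^{-1/2}$ increases. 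Consequently $\lambda\mapsto\langle\phi,B_\lambda\phi\rangle$ is nondecreasing for each nonnegative $\phi$, and taking the supremum gives monotonicity of $\|B_\lambda\|$.

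For continuity, I would show that $\lambda\mapsto B_\lambda$ is norm continuous on $(-\infty,\thr_n)$. Fix a compact interval $I\Subset(-\infty,\thr_n)$ and $\lambda,\lambda'\in I$. First, $m_\lambda^{-1/2}$ and $m_{\lambda'}^{-1/2}$ are bounded and their ratio is uniformly close to $1$: indeed $|m_\lambda(K)-m_{\lambda'}(K)|\le C_I|\lambda-\lambda'|$, by the explicit formula and a uniform bound on the derivative of the square root away from zero, while $m_\lambda\ge c_I>0$. Second, the kernel ratio satisfies
\begin{equation*}
\frac{|p|^2+|K|^2+nm-\lambda}{|p|^2+|K|^2+nm-\lambda'}\in[1-C|\lambda-\lambda'|,\,1+C|\lambda-\lambda'|].
\end{equation*}
Then, for $\phi\ge0$,
\begin{equation*}
\langle\phi,B_{\lambda}\phi\rangle\le(1+C_I'|\lambda-\lambda'|)\langle\phi,B_{\lambda'}\phi\rangle,
\end{equation*}
and symmetrically with $\lambda,\lambda'$ exchanged. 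Taking suprema gives
\begin{equation*}
\bigl|\|B_\lambda\|-\|B_{\lambda'}\|\bigr|\le C_I''|\lambda-\lambda'|\,\sup_{I}\|B\|,
\end{equation*}
which is finite by \eqref{bl} with locally uniform constants. This comparison-of-kernels step is the main technical point. The care needed is in making the constants uniform in $K$ and $p$, which follows from the lower bound $nm-\lambda\ge\thr_n-\lambda>0$.

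For the last assertion, I would use the limit $\|B_\lambda\|\to0$ as $\lambda\to-\infty$. This follows, for instance, from the estimate in the proof of Theorem~\ref{ub-n}, which bounds $\|B_\lambda\|$ by $2\pi^2g^2\sup_r r^2/(p_2(\cdot)\sqrt{\cdot})$. Since $p_2(x)\ge x^2-C$ for large $x$, this is $\le C/\sqrt{nm-\lambda}$ for $\lambda$ very negative. The limit $\lim_{\lambda\nearrow\thr_n}\|B_\lambda\|$ exists by monotonicity, and if it exceeds $1$ then some $\lambda_0<\thr_n$ has $\|B_{\lambda_0}\|>1$. The intermediate value theorem on $(-\infty,\lambda_0]$, using the continuity just proved, then produces $\lambda$ with $\|B_\lambda\|=1$.
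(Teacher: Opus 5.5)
Your proof is correct. The monotonicity step is the same as the paper's: you pass to $|\phi|$ and use that every factor of the positive kernel in \eqref{bl1} increases with $\lambda$. The final step is also the same: you use the bound from the proof of Theorem~\ref{ub-n}, which needs no smallness assumption, to get $\|B_\lambda\|\to0$ as $\lambda\to-\infty$, and then apply the intermediate value theorem.

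The genuine difference is the continuity argument. The paper extends $z\mapsto B_z$ to a complex neighbourhood of a compact $I\Subset(-\infty,\thr_n)$ and asserts locally uniform weighted bounds there. It then deduces operator-norm analyticity from weak analyticity plus local boundedness. You stay on the real axis and compare the positive kernels of $B_\lambda$ and $B_{\lambda'}$ pointwise, with a ratio in $[1-C_I|\lambda-\lambda'|,\,1+C_I|\lambda-\lambda'|]$. This is uniform in $(K,p)$ because $m_\lambda\ge\thr_n-\lambda>0$ and $nm-\lambda>0$.

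Your route is more elementary and self-contained. It reuses the positivity already needed for monotonicity and avoids verifying the complex-$z$ estimates that the paper only sketches. It also gives an explicit local Lipschitz bound on $\|B_\lambda\|$. In fact it gives more than you state: since the kernels satisfy $|\text{kernel}_\lambda-\text{kernel}_{\lambda'}|\le C_I|\lambda-\lambda'|\,\text{kernel}_{\lambda'}$ pointwise, testing against $|\psi|,|\phi|$ yields $\|B_\lambda-B_{\lambda'}\|\le C_I|\lambda-\lambda'|\,\|B_{\lambda'}\|$, i.e. operator-norm continuity. The paper's route buys stronger regularity, namely analyticity of the family, which this lemma does not need.

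Two minor points:
\begin{itemize}
\item Finiteness of $\sup_I\|B_\lambda\|$ follows directly from your monotonicity, since $\sup_I\|B_\lambda\|=\|B_{\max I}\|$. You do not need locally uniform constants in \eqref{bl}.
\item $p_2$ is increasing on all of $(0,\infty)$, so the phrase ``beyond its positive root'' is unnecessary. What \eqref{mK_lowerbound} actually supplies is the positivity of $m_\lambda$, which makes $m_\lambda^{-1/2}$ well defined and increasing in $\lambda$.
\end{itemize}
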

\begin{proof} Let $I\Subset(-\infty,\thr_n)$.  In a complex
neighbourhood $U$ of $I$, the branch of $m_z^{-1/2}$ occurring in
\eqref{bl1} is uniformly defined, and the weighted estimates used above are
locally uniform in $z$.  Thus $z\mapsto B_z$ is locally bounded in
$\B(L_b^2(\RE^{3(n-1)}))$.  Formula \eqref{bl1} shows that
$z\mapsto\langle\phi,B_z\varphi\rangle$ is analytic for every
$\phi,\varphi\in L_b^2$.  Weak analyticity together with local boundedness
therefore gives operator-norm analyticity.  In particular,
$\lambda\mapsto\|B_\lambda\|$ is continuous on
$(-\infty,\thr_n)$.
By Lemma~\ref{symm}, $B_\lambda\geq0$.  Since both
$\lambda\mapsto m_{\lambda}^{1/2}(K)$ and
$\lambda\mapsto |p|^{2}+|K|^{2}+nm-\lambda$ are decreasing on
$(-\infty,\thr_n)$, formula \eqref{bl1} gives, whenever
$\lambda'\leq\lambda''$,
\begin{equation*}
\|B_{\lambda'}\|=\sup_{\|\phi\|=1}\langle \phi, B_{\lambda'}\phi\rangle\le 
\sup_{\|\phi\|=1}\langle |\phi|, B_{\lambda'}|\phi|\rangle\le
\sup_{\|\phi\|=1}\langle |\phi|, B_{\lambda''}|\phi|\rangle\le\|B_{\lambda''}\|\,.
\end{equation*}

The bounds \eqref{eq:BS-small-gap-bound} and \eqref{eq:BS-large-gap-bound}
hold without the small-coupling hypothesis and tend to zero as
$\lambda\to-\infty$, since $c=nm-\lambda\to+\infty$.
Thus $\|B_\lambda\|\to0$, and the last assertion follows by continuity.

\end{proof}
\begin{lemma}\label{n-sp}
For all $\lambda<\thr_n$,
\begin{equation*}
  \|B_{\lambda}\|\in\sigma(B_{\lambda}).
\end{equation*}
If $g\neq0$ and $\|B_{\lambda}\|\in\sigma_{p}(B_{\lambda})$, then this
eigenvalue is simple.  If $n=2$ and $g\neq0$, then
$\|B_{\lambda}\|$ is always a simple isolated eigenvalue.
\end{lemma}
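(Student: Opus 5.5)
The first assertion is the standard fact that, for a bounded nonnegative self-adjoint operator, $\|B_\lambda\|=\sup\sigma(B_\lambda)\in\sigma(B_\lambda)$. This follows from Lemma~\ref{symm}, since the spectrum is closed and contained in $[0,\|B_\lambda\|]$, with the supremum equal to the norm.

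For simplicity of the top eigenvalue when $g\neq0$, we use a Perron--Frobenius argument. Formula \eqref{bl1} shows that $B_\lambda$ has a strictly positive integral-type kernel, since $m_\lambda>0$ and the denominators are positive. Moreover, the positivity identity in the proof of Lemma~\ref{mon} gives $\langle\phi,B_\lambda\phi\rangle\le\langle|\phi|,B_\lambda|\phi|\rangle$. Let $\phi$ be an eigenvector with eigenvalue $\|B_\lambda\|$. Then $|\phi|$ is also a maximizer of the quadratic form, and therefore an eigenvector. Equality in $|\langle\phi,B_\lambda\phi\rangle|\le\langle|\phi|,B_\lambda|\phi|\rangle$ forces $\phi$ to have constant phase a.e. on the set where it is nonzero.

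The main obstacle is \emph{positivity improving}: we must show that $\phi\geq0$ nonzero implies $B_\lambda\phi>0$ a.e., or at least that a nonnegative eigenvector is a.e. strictly positive. We would argue via $B_\lambda^{\,n-1}$ (or $B_\lambda^{N}$ for suitable $N$). Each application of $M^{od}$ replaces one coordinate $k_j$ by an integrated variable with a strictly positive kernel in that variable. The bosonic sum over $j$ then ensures that after $n-1$ applications every coordinate has been "smeared". Hence $(B_\lambda^{n-1}\phi)(K)>0$ for all $K$ whenever $\phi\geq0$, $\phi\neq0$; here $g\neq0$ is used. For an eigenvector this gives $\phi=\|B_\lambda\|^{-(n-1)}B_\lambda^{n-1}\phi>0$ a.e.

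Two eigenvectors for the top eigenvalue, both of which can be taken strictly positive, cannot be orthogonal. Therefore the eigenspace is one-dimensional. This last step is the standard Reed--Simon XIII.43 argument: a real eigenvector has real part and imaginary part that are themselves eigenvectors, and each has fixed sign, so a.e. strict positivity forbids orthogonality.

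For $n=2$, we show that $B_\lambda$ is compact on $L^2(\RE^3)$, so that every nonzero spectral point is an isolated eigenvalue of finite multiplicity. Its kernel is
\[
g^2 m_\lambda(k)^{-1/2}(|k|^2+|p|^2+2m-\lambda)^{-1}m_\lambda(p)^{-1/2}.
\]
Since $m_\lambda(k)\ge c(1+|k|^2)$ for $\lambda<\thr_2$, the kernel is bounded by
\[
C(1+|k|^2)^{-1/2}(1+|p|^2)^{-1/2}(1+|k|^2+|p|^2)^{-1}.
\]
Its square integral is finite, because $\int\!\!\int (1+|k|^2)^{-1}(1+|p|^2)^{-1}(1+|k|^2+|p|^2)^{-2}\,dk\,dp<\infty$: in 6D polar coordinates the total decay is of order $R^{-8}$ against $R^5\,dR$. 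Thus $B_\lambda$ is Hilbert--Schmidt. When $g\neq0$ we have $B_\lambda\neq0$, since the quadratic form is positive on positive functions, so $\|B_\lambda\|>0$. Being the maximum of the spectrum of a compact positive operator, $\|B_\lambda\|$ is an eigenvalue that is isolated with finite multiplicity, and it is simple by the argument above. For $n=2$, positivity improving holds directly because the kernel itself is strictly positive.
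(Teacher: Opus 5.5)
Your proposal is correct and follows essentially the paper's proof. The first claim comes from nonnegativity. Simplicity comes from positivity improving of $B_\lambda^{\,n-1}$; the paper makes your ``smearing'' heuristic precise by extending $B_\lambda=\sum_j T_j$ to $L^2(\RE^{3N})$, $N=n-1$, and noting that the term $T_1T_2\cdots T_N$ of $B_\lambda^{N}$ has a strictly positive kernel while all other terms are nonnegative. For $n=2$, a Hilbert--Schmidt bound together with a strictly positive kernel (Jentzsch) gives a simple isolated eigenvalue.

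The only difference is that you run the Perron--Frobenius/ergodicity argument directly on $B_\lambda$. The paper instead applies Schaefer's result to $B_\lambda^{\,n-1}$ and transfers simplicity via $\ker(B_\lambda^{\,n-1}-\|B_\lambda\|^{n-1})=\ker(B_\lambda-\|B_\lambda\|)$.

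One intermediate justification needs repair for $n>2$. First, $B_\lambda$ itself has no strictly positive kernel. Second, equality in $|\langle\phi,B_\lambda\phi\rangle|\le\langle|\phi|,B_\lambda|\phi|\rangle$ does not by itself force constant phase: for disjoint balls $A,C$, the function $\boldsymbol{1}_{A^N}+e^{i\theta}\boldsymbol{1}_{C^N}$ gives equality, because all cross terms vanish. The fixed sign of a real top eigenvector $u$ should instead come from your maximizer argument. It shows that $|u|$, and hence $u_\pm=(|u|\pm u)/2$, are nonnegative eigenvectors with $u_+u_-=0$. By positivity improving of $B_\lambda^{\,n-1}$, one of them must vanish. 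This is exactly what the Reed--Simon argument you cite does.
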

\begin{proof}
By Lemma~\ref{symm}, $B_\lambda$ is bounded, self-adjoint and
nonnegative.  Therefore,
\begin{equation*}
  \|B_\lambda\|=\max\sigma(B_\lambda)\in\sigma(B_\lambda).
\end{equation*}

For $n>2$, $B_\lambda$ itself is not positivity improving. Put $N=n-1$,
let $A\subset\RE^3$ be a ball, and set
\begin{equation*}
  \phi(K)=\prod_{i=1}^{N}\boldsymbol{1}_A(k_i).
\end{equation*}
This is a nonzero nonnegative symmetric function. If at least two components
of $K$ lie outside $A$, replacing only one component cannot bring $K$ into
the support of $\phi$, and hence $(B_\lambda\phi)(K)=0$.

Nevertheless, first extend \eqref{bl1} to the full space
$L^2(\RE^{3N})$, replacing the $j$-th variable at its original position.
The individual summands are bounded by the same weighted estimates, and
their sum leaves the symmetric subspace invariant. On the full space write
\begin{equation*}
  B_\lambda=\sum_{j=1}^{N}T_j,
\end{equation*}
where $T_j$ is the $j$-th summand in this extension of \eqref{bl1}. If $g\neq0$, each $T_j$
has a strictly positive kernel in the variable it replaces. In the expansion
of $B_\lambda^N$ there is, for example, the term
$T_1T_2\cdots T_N$, which replaces all $N$ variables and therefore has a
strictly positive integral kernel on
$\RE^{3N}\times\RE^{3N}$. All the remaining terms are nonnegative. Thus $B_\lambda^{\,n-1}$ is positivity improving, also upon restriction
to the symmetric subspace.

If $\|B_\lambda\|$ is an eigenvalue, then
$\|B_\lambda\|^{n-1}$ is the maximal eigenvalue of
$B_\lambda^{\,n-1}$ and is simple by
\cite[Corollary to Theorem 5.2, Chapter V]{Scha}. Since $B_\lambda\geq0$,
the spectral calculus gives
\begin{equation*}
 \ker\bigl(B_\lambda^{\,n-1}-\|B_\lambda\|^{n-1}\bigr)
 =
 \ker\bigl(B_\lambda-\|B_\lambda\|\bigr),
\end{equation*}
so $\|B_\lambda\|$ is simple as an eigenvalue of $B_\lambda$ as well.
For $n=2$, $B_\lambda$ is an integral operator with kernel
\begin{equation}\label{kernel}
{\mathcal K}_{\lambda}(k,p)=g^{2}\,
\frac{m^{-1/2}_{\lambda}(k)\,m^{-1/2}_{\lambda}(p)}
{|k|^{2}+|p|^{2}+2m-\lambda}\,.
\end{equation}
By \eqref{kernel},
\begin{equation*} 
{\mathcal K}_{\lambda}(k,p)\leq C_\lambda \frac{1}{(|k|^{2}+1)^{1/2}(|p|^{2}+1)^{1/2}(|k|^{2}+|p|^{2}+1)}\le \frac{C_\lambda}{(|k|^{2}+1)(|p|^{2}+1)}
\end{equation*} 
and so ${\mathcal K}_{\lambda}\in L^{2}(\RE^{6})$. Therefore, in the case $n=2$,  $B_{\lambda}$ is Hilbert-Schmidt and hence compact. If $g\neq0$, its kernel is strictly positive; Jentzsch's theorem, see \cite[Theorem 6.6, Chapter V]{Scha},  implies that $\|B_{\lambda}\|$ is a simple isolated eigenvalue.
\end{proof}
\begin{remark}\label{rem:caveat}
For $n>2$, the lemma does not assert that $\|B_\lambda\|$ is an
eigenvalue. Even when it is an eigenvalue, simplicity alone does not imply
that it is isolated. Thus Lemma \ref{mon} 
alone is not sufficient to show 
discrete spectrum in the higher sectors.
\end{remark}
\begin{theorem}\label{n=2eigenvalues} Let $n=2$. If $\mu_g<m$ and
\begin{equation*}
  \frac{\pi^{2}g^{2}}{\sqrt{m-\lambda_{1}}}
  >\frac{\pi}{4-\pi},
\end{equation*}
then $H_{\mu}^{[2]}$ has a simple isolated eigenvalue
$\lambda_{2}<m+\lambda_{1}$.
\end{theorem}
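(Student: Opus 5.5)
The plan is to combine the monotonicity and continuity of $\lambda\mapsto\|B_\lambda\|$ from Lemma~\ref{mon} with the Perron--Frobenius structure for $n=2$ in Lemma~\ref{n-sp}, and then to transfer the result to $H_\mu^{[2]}$ through Lemma~\ref{sp} and Corollary~\ref{cor:essential-spectrum-sector-two}.

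Throughout, put $a:=\sqrt{m-\lambda_1}$ and $b:=\pi^2g^2$. The hypothesis then reads $b/a>\pi/(4-\pi)$. Note that $\thr_2=T_2=m+\lambda_1$, so $2m-T_2=a^2$.

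\textbf{Step 1: lower bound on $\lim_{\lambda\nearrow T_2}\|B_\lambda\|$.} I will show that this limit exceeds $1$. I would use the variational characterization $\|B_\lambda\|=\sup_{\|\phi\|=1}\langle\phi,B_\lambda\phi\rangle$, valid because $B_\lambda\ge0$ by Lemma~\ref{symm}, with a trial vector of the form $\phi=(M^d_\lambda)^{1/2}u$ for a fixed radial $u$. This gives
\begin{equation*}
\|B_\lambda\|\ \ge\ \frac{\langle u,M^{od}_\lambda u\rangle}{\langle u,m_\lambda u\rangle}.
\end{equation*}
As $\lambda\nearrow T_2$, both the numerator and the denominator converge by monotone and dominated convergence. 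In the limit, $m_{T_2}(k)=(x-a)(x+a+2b)$ with $x=\sqrt{|k|^2+a^2}$, so $m_{T_2}(k)\sim\frac{a+b}{a}|k|^2$ near $k=0$. The off-diagonal kernel is $g^2/(|k|^2+|p|^2+a^2)$.

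My first try for $u$ would be $u(k)=(|k|^2+a^2)^{-1}$, or a one-parameter family $u_t(k)=(|k|^2+t^2a^2)^{-1}$ to be optimized over $t$. With such a choice both quadratic forms are explicit radial integrals. After rescaling $|k|=a r$, the ratio becomes a function of $b/a$ alone, schematically $\frac{2b\,I_1}{a\,I_2+b\,I_3}$ with explicit numerical integrals $I_j$ producing powers of $\pi$. The threshold $b/a>\pi/(4-\pi)$ should be exactly the condition that this ratio exceeds $1$.

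\textbf{Step 2: existence of $\lambda_2$ and its properties.} With $\lim_{\lambda\nearrow T_2}\|B_\lambda\|>1$ in hand, Lemma~\ref{mon} gives some $\lambda_2<T_2$ with $\|B_{\lambda_2}\|=1$. By Lemma~\ref{n-sp}, since $n=2$ and $g\neq0$, the value $1=\|B_{\lambda_2}\|$ is a simple isolated eigenvalue of the compact operator $B_{\lambda_2}$. Lemma~\ref{sp}, through \eqref{6.9.4} and \eqref{6.9.5}, then shows that $\lambda_2\in\sigma_d(H_\mu^{[2]})$. This also follows from Corollary~\ref{cor:essential-spectrum-sector-two}, because $\lambda_2<T_2$.

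For simplicity, the proof of \eqref{6.9.4} shows that $\psi\mapsto(M^d_{\lambda_2})^{1/2}\psi$ is a bijection between $\ker M_{\lambda_2}^{(1)}$ and $\ker(1-B_{\lambda_2})$. Hence $\dim\ker M^{(1)}_{\lambda_2}=1$, and Corollary~\ref{appcert:cor:M-reduction}(iii) gives that $\lambda_2$ is a simple eigenvalue of $H_\mu^{[2]}$.

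\textbf{Main obstacle.} I expect the main difficulty to lie entirely in Step 1: choosing the trial function and evaluating the integrals sharply enough to recover the precise constant $\pi/(4-\pi)$. In particular, one must verify that the numerator integral $\iint u(k)u(p)\,(|k|^2+|p|^2+a^2)^{-1}\,dk\,dp$ stays finite at $\lambda=T_2$. One must also check that the degeneracy $m_{T_2}(k)\sim|k|^2$ near $k=0$ does not make the denominator vanish or blow up for the chosen $u$. If the simple choice above does not yield exactly $\pi/(4-\pi)$, I would switch to $u=m_{T_2}^{-1}\,(|k|^2+a^2)^{-1/2}$-type profiles adapted to the threshold behaviour and optimize a scaling parameter.
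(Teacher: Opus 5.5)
Your Step 2 is the paper's argument: Lemma~\ref{mon} gives $\|B_{\lambda_2}\|=1$, Lemma~\ref{n-sp} gives simplicity and isolation, and Lemma~\ref{sp} with the kernel bijection transfers the eigenvalue to $H_\mu^{[2]}$. Your overall strategy for Step 1 is also the paper's, namely a Rayleigh quotient with $\phi=(M^d_\lambda)^{1/2}u$. The gap is in Step 1 itself, which carries the entire quantitative content of the theorem, and the trial functions you propose do not work there.

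For $u_t(k)=(|k|^2+t^2a^2)^{-1}$ one has $m_\lambda(k)\sim|k|^2$ at infinity. Hence $|\phi|^2=m_\lambda u_t^2\sim|k|^{-2}$, so $\phi\notin L^2(\RE^3)$. Both quadratic forms $\langle u_t,m_\lambda u_t\rangle$ and $\langle u_t,M^{od}_\lambda u_t\rangle$ are infinite, for every $\lambda$ and every $t$, so no Rayleigh quotient is available. Your assertion that the condition "should be exactly" $b/a>\pi/(4-\pi)$ is also unfounded. Each admissible trial function produces its own sufficient condition, and the constant $\pi/(4-\pi)$ comes from one specific choice. Your fallback $m_{T_2}^{-1}(|k|^2+a^2)^{-1/2}$ has a different defect. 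It is a fixed function behaving like $|k|^{-2}$ at the origin, and since $m_\lambda(0)>0$ for $\lambda<T_2$, it gives $|\phi|^2\sim|k|^{-4}$ near $0$, so again $\phi\notin L^2$.

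The missing idea is a $\lambda$-dependent trial function adapted to the threshold. It must decay like $|k|^{-3}$ at infinity and, only in the limit $\lambda\nearrow T_2$, develop the $|k|^{-2}$ singularity at the origin, where $m_{T_2}$ vanishes quadratically. The paper takes
\[
u_\lambda(k)=(|k|^2+T_2-\lambda)^{-1}(|k|^2+2m-\lambda)^{-1/2}.
\]
Dominated convergence, using a $\lambda$-independent integrable majorant of $|\phi_\lambda|^2$, gives
\[
\|\phi_\lambda\|^2\to\frac{2\pi^2}{a}\Bigl(1+2\bigl(1-\tfrac2\pi\bigr)\tfrac ba\Bigr).
\]
Monotone convergence together with $\int_{(0,\infty)^2}\frac{dr\,d\varrho}{\sqrt{r^2+1}\sqrt{\varrho^2+1}(r^2+\varrho^2+1)}=\frac{\pi^2}{8}$ gives
\[
\langle\phi_\lambda,B_\lambda\phi_\lambda\rangle\to\frac{2\pi^2b}{a^2}.
\]
The limiting ratio is $\frac{b/a}{1+2(1-2/\pi)\,b/a}$, and it exceeds $1$ precisely when $b/a>\pi/(4-\pi)$. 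Without a trial function of this kind and these two limits, your proposal does not reach the stated hypothesis.
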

\begin{proof} Take
$\lambda<\thr_2=m+\lambda_1<2m$. We use the trial function 
 \begin{align*}
 &\phi_{\lambda}(k):=m_{\lambda}^{1/2}(k)(|k|^{2}+m+\lambda_{1}-\lambda)^{-1}(|k|^{2}+2m-\lambda)^{-1/2}
\\
&=p_{2}\big(\sqrt{|k|^{2}+2m-\lambda}\,\big)^{1/2}(|k|^{2}+m+\lambda_{1}-\lambda)^{-1}(|k|^{2}+2m-\lambda)^{-1/2} \,.
 \end{align*} 
 Recall that  $p_{2}(x)=(x-x_{+})(x-x_{-})$, where the $x_{\pm}$'s are given in \eqref{roots}; in particular, we will  use the identity $x_{+}^2=m-\lambda_{1}$. By  
 \begin{equation*}
 (|k|^{2}+m+\lambda_{1}-\lambda) = \big(\sqrt{|k|^{2}+2m-\lambda} -x_+\big)\big(\sqrt{|k|^{2}+2m-\lambda} +x_+\big),
 \end{equation*}
 one has,
 \begin{align*}
 \phi_{\lambda}(k)=&
 \frac1{\big(\sqrt{|k|^{2}+2m-\lambda}-x_{+}\big)^{1/2}}\ 
\frac{\big(\sqrt{|k|^{2}+2m-\lambda}-x_{-}\big)^{1/2}}{(\sqrt{|k|^{2}+2m-\lambda}+x_{+})\sqrt{|k|^{2}+2m-\lambda}}
\,.
 \end{align*}
More explicitly, since $x_-=-x_+-2\pi^2g^2$ and $2m-\lambda>x_+^2$,
\begin{equation*}
\begin{aligned}
|\phi_\lambda(k)|^2
&=\frac{1+2\pi^2g^2/(\sqrt{|k|^2+2m-\lambda}+x_+)}
{(|k|^2+m+\lambda_1-\lambda)(|k|^2+2m-\lambda)}\\
&\leq\frac{1+\pi^2g^2/x_+}{|k|^2(|k|^2+x_+^2)}.
\end{aligned}
\end{equation*}
The right-hand side is independent of $\lambda$ and integrable over
$\RE^3$, and hence supplies the majorant needed below.
 
By the dominated convergence theorem and by the identity
\begin{equation*}
r^{2}=((r^{2}+x_{+}^{2})^{1/2}-x_{+})((r^{2}+x_{+}^{2})^{1/2}+x_{+})\,,
\end{equation*}
\begin{align*}
&\lim_{\lambda\nearrow m+\lambda_{1} } \|\phi_{\lambda}\|^{2}=
\lim_{\lambda\nearrow m+\lambda_{1} } \int_{\RE^{3}}\frac{m_{\lambda}(k)\,dk}{
(|k|^{2}+m+\lambda_{1}-\lambda)^{2}(|k|^{2}+2m-\lambda)}
\\
=&
4\pi\lim_{\lambda\nearrow m+\lambda_{1} } \int_{0}^{+\infty}\frac{p_{2}(\sqrt{r^{2}+2m-\lambda}\,)\,r^{2}\, dr}{
(r^{2}+m+\lambda_{1}-\lambda)^{2}(r^{2}+2m-\lambda)}
=
4\pi \int_{0}^{+\infty}\frac{p_{2}(\sqrt{r^{2}+m-\lambda_{1}}\,)\, dr}{
r^{2}(r^{2}+m-\lambda_{1})}
\\
=&4\pi \int_{0}^{+\infty}\frac{((r^{2}+x_{+}^{2})^{1/2}-x_{-})\, dr}{
((r^{2}+x_{+}^{2})^{1/2}+x_{+})(r^{2}+x_{+}^{2})}
=
4\pi\left(\frac{\pi}{2\,x_{+}}-\left(\frac{\pi}2-1\right)\,\frac{x_{+}+x_{-}}{x^{2}_{+}}\right)
\\
=&\frac{2\pi^{2}}{\sqrt{m-\lambda_{1}}}\left(1+2\left(1-\frac2{\pi}\right)\,\,\frac{\pi^{2}g^{2}}{\sqrt{m-\lambda_{1}}}\right)
\,.
\end{align*}
Furthermore, by the monotone convergence theorem
\begin{align*}
\lim_{\lambda\nearrow m+\lambda_{1} }&\langle \phi_{\lambda},B_{\lambda}\phi_{\lambda}\rangle
\\
= g^{2}
\lim_{\lambda\nearrow m+\lambda_{1} }&\int_{\RE^{3}}\int_{\RE^{3}}  
\frac{ (|k|^{2}+m+\lambda_{1}-\lambda)^{-1}(|p|^{2}+m+\lambda_{1}-\lambda)^{-1} \,dp\,dk}{ (|k|^{2}+2m-\lambda)^{1/2} (|p|^{2}+2m-\lambda)^{1/2}
(|p|^{2}+|k|^{2}+2m-\lambda)  }  
  \\
  = g^{2}&
\int_{\RE^{3}}\int_{\RE^{3}}  
\frac{ |k|^{-2}|p|^{-2} \,dp\,dk}{
(|k|^{2}+m-\lambda_{1})^{1/2}
(|p|^{2}+m-\lambda_{1})^{1/2}
(|p|^{2}+|k|^{2}+m-\lambda_{1})}
\\
  = \frac{16\,\pi^{2} g^{2}}{m-\lambda_{1}}&\int_{(0,+\infty)^{2}} 
 \frac{  drd\varrho}{ (r^{2}+1)^{1/2}(\varrho^{2}+1)^{1/2}(r^{2}+\varrho^{2}+1)  }
=\frac{2\pi^{4} g^{2}}{m-\lambda_{1}}\,.
 \end{align*} 
It was this simple result that dictated the choice of the trial function $\phi_{\lambda}$. 
Therefore, 
 \begin{align}
\lim_{\lambda\nearrow m+\lambda_{1} }\|B_{\lambda}\|\ge &\lim_{\lambda\nearrow m+\lambda_{1} }\frac{\langle \phi_{\lambda},B_{\lambda}\phi_{\lambda}\rangle}{\| \phi_{\lambda}\|^{2}}=
\frac{\lim_{\lambda\nearrow m+\lambda_{1} }\langle \phi_{\lambda},B_{\lambda}\phi_{\lambda}\rangle}{\lim_{\lambda\nearrow m+\lambda_{1} }\| \phi_{\lambda}\|^{2}}\nonumber
  \\
 =& \frac{\pi^{2}g^{2}}{\sqrt{m-\lambda_{1}}}\left(1+2\left(1-\frac2{\pi}\right)\,\,\frac{\pi^{2}g^{2}}{\sqrt{m-\lambda_{1}}}\right)^{-1}\label{nb1}
 \,.
 \end{align}
Under the hypothesis of the theorem, the right-hand side
of \eqref{nb1} is greater than one. 
Lemma \ref{mon} 
therefore gives
some $\lambda_2<m+\lambda_1$ such that $\|B_{\lambda_2}\|=1$.
By Lemma~\ref{n-sp}, this is a simple isolated eigenvalue of
$B_{\lambda_2}$, and Lemma~\ref{sp} gives a
simple discrete eigenvalue of $H_\mu^{[2]}$.
\end{proof}
\begin{remark}\label{rem:Ibrogimov-comparison}
A related model is studied in \cite{Ibrogimov2018}, and we give here a brief comparison between the respective frameworks and results. 
In the present Lee model, the total excitation number is conserved and $H_\mu^{[2]}$ is the restriction of the Hamiltonian
to the invariant space
\begin{equation*}
  L_b^2(\RE^6)\oplus L_b^2(\RE^3),
\end{equation*}
corresponding to two field bosons in one internal state of the source and one
field boson in the other.  In the regular spin--boson model studied in \cite{Ibrogimov2018}, by contrast,
the interaction does not conserve photon number.  Truncation at two photons
means replacing the full Fock space by the direct sum of its zero-, one-,
and two-photon subspaces.  This whole truncated space is retained.
Under Assumption~(A) of \cite{Ibrogimov2018}, Theorem~1(i) gives
\begin{equation*}
  \sigma_{\mathrm{ess}}(H_\alpha)
  =[m+E(\alpha),+\infty),
\end{equation*}
whereas Theorem~1(ii) proves that the discrete spectrum of $H_\alpha$ is
finite for every coupling.  The latter assertion permits the discrete
spectrum to be empty and it supplies no quantitative coupling criterion for
binding.  Theorem~2 concerns the weak-coupling expansion of the essential
threshold and is not an existence theorem. 

So, while treating different models, the present and Ibrogimov results are in some sense complementary.  Ibrogimov proves finiteness of
the entire discrete spectrum at arbitrary coupling, while the present
analysis gives explicit absence criteria in every sector and an existence
criterion in the two-excitation sector.  We finally notice that his argument on finiteness of the discrete spectrum does not apply directly to the present 
singular renormalized Lee model, being based on the presence of a square
integrable form factor.
\end{remark}

\begin{remark}[Ground states]\label{rem:ground-state}
Let $E_n:=\inf\sigma(H_\mu^{[n]})$, with $E_0=0$, and let
$E_{\mathrm{gs}}:=\inf\sigma(H_\mu)$. Since $H_\mu$ is self-adjoint
and bounded from below, the direct-sum decomposition gives
\begin{equation*}
 E_{\mathrm{gs}}=\inf_{n\geq0}E_n
 =\inf_{\substack{\Psi\in\dom(H_\mu)\\\|\Psi\|=1}}
   \langle\Psi,H_\mu\Psi\rangle\leq0.
\end{equation*}
A ground state is a normalized vector attaining this infimum,
equivalently an eigenvector of $H_\mu$ at $E_{\mathrm{gs}}$.

The weighted estimate in the proof of Theorem~\ref{ub-n} also yields
the rough lower bound
\begin{equation}\label{eq:ground-state-sector-lower-bound}
 E_n\geq\min\{nm,(n-1)m+\mu_g\},\qquad n\geq1.
\end{equation}
Indeed, for $n\geq2$ and
$\lambda<\min\{nm,(n-1)m+\mu_g\}$, put $c=nm-\lambda>0$.
Subtracting the off-diagonal estimate from the diagonal term gives
\begin{equation*}
\begin{aligned}
 \langle\phi,-M_\lambda^{(n-1)}\phi\rangle
 &\geq\int_{\RE^{3(n-1)}}
 \left(|K|^2+(n-1)m-\lambda+\mu_g
       +\frac{2\pi^2g^2c}{\sqrt{|K|^2+c}}\right)|\phi(K)|^2\,dK\\
 &\geq\bigl((n-1)m-\lambda+\mu_g\bigr)\|\phi\|^2>0
 \qquad(\phi\neq0).
\end{aligned}
\end{equation*}
The spectral reduction in Corollary~\ref{appcert:cor:M-reduction}
therefore excludes every such $\lambda$ from the spectrum.
For $n=1$, the bound follows from Theorem~\ref{thm:n1}.
Consequently, $E_n\to+\infty$ as $n\to\infty$, and
$E_{\mathrm{gs}}=\min_{n\geq0}E_n$: only finitely many sectors can
contribute to the ground-state energy. Its attainment by an eigenvector
requires that $E_n$ be an eigenvalue in at least one minimizing sector.

In the absence regimes proved in Theorem~\ref{ub-n}, the answer is explicit.
If $\mu_g\geq m$, then $E_n=nm$ for all $n\geq0$, so the vacuum
$\Omega^\circ\oplus0$ is the unique ground state, up to a phase.
If $\mu_g<m$ and $\pi^2g^2/\sqrt{m-\lambda_1}\leq1$, then
\begin{equation*}
 E_n=\lambda_1+(n-1)m\quad(n\geq1),\qquad
 E_{\mathrm{gs}}=\min\{0,\lambda_1\}.
\end{equation*}
The ground state is the vacuum if $\lambda_1>0$ and the simple
one-excitation eigenstate if $\lambda_1<0$. If $\lambda_1=0$, the
ground-state eigenspace is two-dimensional, spanned by these two states.
In fact, \eqref{eq:ground-state-sector-lower-bound} also identifies the
vacuum as the unique ground state whenever $\mu_g\geq0$, without a
weak-coupling condition: $E_n>0$ for $n\geq2$, and
Theorem~\ref{thm:n1} gives $E_1>0$ (at $\mu_g=0$ we use $g\neq0$).
Under the hypothesis of Theorem~\ref{n=2eigenvalues}, one may choose
$\lambda_2=E_2$. Indeed, take the first value of $\lambda$ at which
$\|B_{2,\lambda}\|=1$. Below it, Lemma~\ref{sp} excludes spectrum,
while at it Lemma~\ref{n-sp} gives a simple isolated eigenvalue.
Thus the two-excitation sector has a simple ground state below
$T_2=m+\lambda_1$. On the other hand, this inequality alone does not allow to compare $\lambda_2$
with $0$, $\lambda_1$, or the higher-sector infima, and therefore does
not allow to identify it as a ground state of the full Hamiltonian.
For general parameters, the existence and the excitation sector of a
ground state of $H_\mu$ remain to be determined. The missing reverse
HVZ inclusion is relevant here: if $N\geq1$ is a minimizing sector,
then
\begin{equation*}
 \tau_N=\min_{0\leq r<N}\{E_r+(N-r)m\}
 \geq E_{\mathrm{gs}}+m.
\end{equation*}
The full identity $\sigma_{\mathrm{ess}}(H_\mu^{[N]})
=[\tau_N,+\infty)$ would therefore make
$E_N=E_{\mathrm{gs}}$ a discrete eigenvalue.
This identity is available for $N=1,2$ and in the absence regimes,
but is not yet established for arbitrary higher sectors.
\end{remark}

\appendix
\section{Approximation by regularized Hamiltonians}
\label{app:reg}
This appendix gives a characterization of the
Hamiltonian constructed intrinsically in Sections~\ref{s:prelude} as a norm resolvent limit of a family of regularized Hamiltonians. 
Notice that none of the spectral arguments and results of the preceding Sections depends on
the approximation proved here.
\subsection{The abstract Lee model}
We refer to the results and notation in Subsection~\ref{s:abstractlee};
for our purposes, it is enough to consider the case
$\F_1^\circ=\F_2^\circ=\F^\circ$ and
$H_1^\circ=H_2^\circ=H^\circ$.
We use the notation
\begin{equation*}
 \lambdastar:=\lambda_{1,\star},
 \qquad R^\circ:=(-H^\circ+\lambdastar)^{-1}.
\end{equation*}
For any $\Lambda>0$, we assume that
$$
a_{\Lambda}:\dom(a_\Lambda)\subseteq\F^{\circ}\to\F^{\circ}
$$ 
is closable,  such that $a_{\Lambda}\in\B(\H^{\circ,1/2},\F^{\circ})$ and
\begin{equation}\label{teo_i}
\begin{bmatrix}0&a_{\Lambda}^{*}\\
a_{\Lambda}&0\end{bmatrix}\quad\text{is}\quad \begin{bmatrix}H^{\circ}&0\\
0&H^{\circ}\end{bmatrix}\text{-small.}
\end{equation}

By the Rellich--Kato theorem, the operators
\begin{equation*}
 H_\Lambda:=\begin{bmatrix}H^\circ&a_\Lambda^*\\
 a_\Lambda&H^\circ\end{bmatrix},
 \qquad \dom(H_\Lambda)=\H^\circ\oplus\H^\circ,
\end{equation*}
are self-adjoint and bounded from below.

\begin{theorem}\label{teo-conv-Lee} Let $S^{\circ}\equiv S_{2}^{\circ}$, $a$ and $H_{S^{\circ}}\equiv H_{S_{2}^{\circ}}$ be as in Theorem \ref{teo}; let $a_{\Lambda}$ as above and suppose that
\begin{equation}\label{h1}
\lim_{\Lambda \nearrow\infty}\|a_{\Lambda}-a\|_{\H^{\circ} ,\F^{\circ}}=0
\end{equation}
and that there exists a symmetric operator $E_{\Lambda}^{\circ}\in\B(\F^{\circ})$ such that
\begin{equation}\label{h3}
\lim_{\Lambda \nearrow\infty}\|E^{\circ}_{\Lambda}-a_{\Lambda}R^{\circ}a_{\Lambda}^{*}-S^{\circ}\|_{\H^{\circ} ,\F^{\circ}}=0\,.
\end{equation}
Then, as $\Lambda \nearrow\infty$,  the bounded-from-below self-adjoint operators
$$
H_{\Lambda}-E_{\Lambda}
:\H^{\circ}\oplus\H^{\circ}\subseteq\F^{\circ}\oplus\F^{\circ}\to\F^{\circ}\oplus\F^{\circ}\,,
$$
$$
(H_{\Lambda}-E_{\Lambda})(\psi_{1}\oplus\psi_{2}):=(H^{\circ}\psi_{1}+a_{\Lambda}^{*}\psi_{2})\oplus((H^{\circ}-E_{\Lambda}^{\circ})\psi_{2}+a_{\Lambda}\psi_{1})
$$
converge in norm resolvent sense to $H_{S^{\circ}}$.
\end{theorem}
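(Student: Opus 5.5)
We prove norm resolvent convergence by comparing the resolvent of the regularized operator with the Kre\u\i n-type formula \eqref{Raj} of Theorem~\ref{Res-Lee}. The comparison is carried out at a single real point $z=\lambda$ far below all spectra. There it is enough to show that $(-(H_\Lambda-E_\Lambda)+\lambda)^{-1}\to(-H_{S^\circ}+\lambda)^{-1}$ in $\B(\F^\circ\oplus\F^\circ)$; convergence at one point of the common resolvent set then propagates to all of it by the standard argument.

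The first step is a Schur-complement resolvent formula for the regular operator $H_\Lambda-E_\Lambda$. It has exactly the structure of \eqref{Raj}, with $G_{\Lambda,z}:=R^\circ_z a_\Lambda^*$ in place of $G_z^\circ$ and with
\begin{equation*}
M_{\Lambda,z}:=-H^\circ+E^\circ_\Lambda-a_\Lambda R^\circ_z a_\Lambda^*+z
\end{equation*}
in place of $M_z$. This follows from the same triangular factorization $\mathcal T_z$ used in the proof of Theorem~\ref{Res-Lee}; here no renormalization is needed, since $a_\Lambda^*$ is a bona fide operator. Using the resolvent identity, we then rewrite
\begin{equation*}
M_{\Lambda,z}=-H^\circ+\bigl(E^\circ_\Lambda-a_\Lambda R^\circ a_\Lambda^*\bigr)+a_\Lambda\bigl(R^\circ-R^\circ_z\bigr)a_\Lambda^*+z ,
\end{equation*}
which makes the comparison with $M_z=-H^\circ+S^\circ+a(G^\circ-G^\circ_z)+z$ term by term transparent.

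The second step establishes convergence of the ingredients. From \eqref{h1} and duality, $G_{\Lambda,z}\to G^\circ_z$ and $G_{\Lambda,\bar z}^*\to G_{\bar z}^{\circ*}$ in norm. For the second difference we use $a_\Lambda(R^\circ-R^\circ_z)a_\Lambda^*=(z-\lambdastar)G^*_{\Lambda,\lambdastar}G_{\Lambda,z}$, which converges in $\B(\F^\circ)$ to $(z-\lambdastar)G^{\circ*}G^\circ_z$. Combined with \eqref{h3}, this gives $M_{\Lambda,z}-M_z\to0$ in $\B(\H^\circ,\F^\circ)$. Since both operators share the domain $\H^\circ$, the second resolvent identity
\begin{equation*}
M_{\Lambda,z}^{-1}-M_z^{-1}=M_{\Lambda,z}^{-1}(M_z-M_{\Lambda,z})M_z^{-1}
\end{equation*}
yields $\|M_{\Lambda,z}^{-1}-M_z^{-1}\|_{\F^\circ,\F^\circ}\to0$, provided $M_z^{-1}\in\B(\F^\circ,\H^\circ)$ and $\|M_{\Lambda,z}^{-1}\|$ is bounded uniformly in $\Lambda$.

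The main obstacle is exactly this uniform bound on $M_{\Lambda,\lambda}^{-1}$, that is, showing that the $\Lambda$-dependent perturbation stays uniformly $H^\circ$-small with relative bound below one. The natural route is a Neumann series around the limit. Writing
\begin{equation*}
M_{\Lambda,\lambda}=M_\lambda\bigl(1+M_\lambda^{-1}(M_{\Lambda,\lambda}-M_\lambda)\bigr),
\end{equation*}
the correction term $M_\lambda^{-1}(M_{\Lambda,\lambda}-M_\lambda)$ is small in $\B(\H^\circ)$ for $\Lambda$ large, because $M_\lambda^{-1}\in\B(\F^\circ,\H^\circ)$ and $M_{\Lambda,\lambda}-M_\lambda\to0$ in $\B(\H^\circ,\F^\circ)$. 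Invertibility of $M_\lambda$ with $M_\lambda^{-1}\in\B(\F^\circ,\H^\circ)$ holds for $\lambda$ sufficiently negative: $S^\circ$ is $H^\circ$-small and $a(G^\circ-G^\circ_\lambda)$ is bounded, with norm controlled as in Remark~\ref{r:3.9}, so $-M_\lambda$ is bounded below by a positive constant and is invertible onto $\H^\circ$. The Neumann series therefore gives the uniform bound for all large $\Lambda$, which also justifies the resolvent identity above. Finally, we insert the convergent factors into the block formula. Each entry of \eqref{Raj} is a product of norm-convergent bounded operators, so the full resolvents converge in norm.
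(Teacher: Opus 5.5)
Your proposal is correct and follows essentially the paper's proof: the triangular factorization of $H_\Lambda-E_\Lambda$ with $G_{\Lambda,z}=R_z^\circ a_\Lambda^*$, the resolvent-identity rewriting of $M_{\Lambda,z}$ that gives $M_{\Lambda,z}\to M_z$ in $\B(\H^\circ,\F^\circ)$, a Neumann series around $M_z^{-1}\in\B(\F^\circ,\H^\circ)$, and norm convergence of the block factors at a single real point. The one real difference is how you get $M_\lambda^{-1}\in\B(\F^\circ,\H^\circ)$, and your justification there needs repair. The paper takes $z<\min\{\lambdastar,\inf\sigma(H^\circ),\inf\sigma(H_{S^\circ})\}$ and reads off the bijectivity of $M_z$ from \eqref{eq:abstract-Lee-factorization}. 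Your lower-bound argument should instead use the sign $-a(G^\circ-G^\circ_\lambda)=(\lambdastar-\lambda)G^{\circ*}G^\circ_\lambda\geq0$ for $\lambda<\lambdastar$, because the norm estimate of Remark~\ref{r:3.9} grows linearly in $|\lambda|$ and so does not by itself beat the $-\lambda$ term.
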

\begin{proof}
We use the space $\H^\circ=\dom(H^\circ)$ with its free graph norm
\begin{equation*}
 \|\psi\|_{\H^\circ}
 :=\bigl(\|\psi\|^2+\|H^\circ\psi\|^2\bigr)^{1/2}.
\end{equation*}
Choose a real number
\begin{equation*}
 z<\min\{\lambdastar,\inf\sigma(H^\circ),
                    \inf\sigma(H_{S^\circ})\}.
\end{equation*}
Such a choice is possible by Theorem~\ref{teo}.
For real $w\in\varrho(H^\circ)$, set
\begin{equation*}
 R_w^\circ:=(-H^\circ+w)^{-1},\qquad
 G_{\Lambda,w}^\circ:=(a_\Lambda R_w^\circ)^*,\qquad
 G_w^\circ:=(aR_w^\circ)^*.
\end{equation*}
Since $R_w^\circ\in\B(\F^\circ,\H^\circ)$, hypothesis~\eqref{h1}
implies
\begin{equation}\label{eq:approx-G-convergence}
 G_{\Lambda,w}^\circ\longrightarrow G_w^\circ
 \quad\text{in }\B(\F^\circ),\qquad w=z,\lambdastar.
\end{equation}
Put
\begin{equation*}
 D_\Lambda:=E_\Lambda^\circ-a_\Lambda R_{\lambdastar}^\circ
 a_\Lambda^*-S^\circ.
\end{equation*}
By~\eqref{h3}, $D_\Lambda\to0$ in $\B(\H^\circ,\F^\circ)$.

The reduced operator for the regularized Hamiltonian is
\begin{equation*}
 M_{\Lambda,z}:=-H^\circ+z+E_\Lambda^\circ-a_\Lambda R_z^\circ a_\Lambda^*,
 \qquad\dom(M_{\Lambda,z})=\H^\circ.
\end{equation*}
The resolvent identity gives, on $\H^\circ$,
\begin{align*}
 a_\Lambda(R_{\lambdastar}^\circ-R_z^\circ)a_\Lambda^*
 &=(z-\lambdastar)
   (G_{\Lambda,\lambdastar}^\circ)^*G_{\Lambda,z}^\circ,\\
 M_{\Lambda,z}
 &=-H^\circ+z+S^\circ+D_\Lambda
   +(z-\lambdastar)
    (G_{\Lambda,\lambdastar}^\circ)^*G_{\Lambda,z}^\circ.
\end{align*}
On the same domain, the limiting reduced operator~\eqref{Mz} is
\begin{equation*}
 M_z=-H^\circ+z+S^\circ
       +(z-\lambdastar)(G_{\lambdastar}^\circ)^*G_z^\circ.
\end{equation*}
Consequently, by~\eqref{eq:approx-G-convergence} and~\eqref{h3},
\begin{equation}\label{eq:approx-M-graph-convergence}
 \|M_{\Lambda,z}-M_z\|_{\H^\circ,\F^\circ}\longrightarrow0.
\end{equation}

The factorization~\eqref{eq:abstract-Lee-factorization} and the choice
of $z$ show that $M_z:\H^\circ\to\F^\circ$ is bijective.
This map is bounded with respect to the free graph norm, since
$S^\circ$ is $H^\circ$-bounded and the remaining correction is
bounded on $\F^\circ$. The bounded inverse theorem therefore gives
\begin{equation*}
 M_z^{-1}\in\B(\F^\circ,\H^\circ).
\end{equation*}
Thus~\eqref{eq:approx-M-graph-convergence} implies
\begin{equation*}
 Q_\Lambda:=(M_{\Lambda,z}-M_z)M_z^{-1}\longrightarrow0
 \quad\text{in }\B(\F^\circ).
\end{equation*}
For all sufficiently large $\Lambda$, $1+Q_\Lambda$ is invertible
by a Neumann series. Since $M_{\Lambda,z}=(1+Q_\Lambda)M_z$ on
$\H^\circ$, it follows that
\begin{equation}\label{eq:approx-M-inverse-convergence}
 M_{\Lambda,z}^{-1}=M_z^{-1}(1+Q_\Lambda)^{-1}
 \longrightarrow M_z^{-1}
 \quad\text{in }\B(\F^\circ,\H^\circ).
\end{equation}

Write
\begin{equation*}
 E_\Lambda=\begin{bmatrix}0&0\\0&E_\Lambda^\circ\end{bmatrix},
 \qquad
 \mathcal T_{\Lambda,z}:=
 \begin{bmatrix}1&G_{\Lambda,z}^\circ\\0&1\end{bmatrix}.
\end{equation*}
The triangular operator $\mathcal T_{\Lambda,z}$ is boundedly
invertible on $\F^\circ\oplus\F^\circ$.
For each fixed $\Lambda$, assumption~\eqref{teo_i} implies that
$a_\Lambda^*$ maps $\H^\circ$ into $\F^\circ$.
Hence $G_{\Lambda,z}^\circ=R_z^\circ a_\Lambda^*$ on $\H^\circ$
maps $\H^\circ$ into itself, and $\mathcal T_{\Lambda,z}$ maps
$\H^\circ\oplus\H^\circ$ bijectively onto itself.
Direct block multiplication on this domain gives
\begin{equation*}
 \mathcal T_{\Lambda,z}^*
 (-H_\Lambda+E_\Lambda+z)\mathcal T_{\Lambda,z}
 =\begin{bmatrix}-H^\circ+z&0\\0&M_{\Lambda,z}\end{bmatrix}.
\end{equation*}
For sufficiently large $\Lambda$, both diagonal entries are
invertible, and therefore $z\in\varrho(H_\Lambda-E_\Lambda)$ with
\begin{equation*}
 (-H_\Lambda+E_\Lambda+z)^{-1}
 =\mathcal T_{\Lambda,z}
 \begin{bmatrix}R_z^\circ&0\\0&M_{\Lambda,z}^{-1}\end{bmatrix}
 \mathcal T_{\Lambda,z}^*.
\end{equation*}
By~\eqref{eq:approx-G-convergence} and~\eqref{eq:approx-M-inverse-convergence},
the bounded factors on the right converge in norm. Their product
therefore converges to
\begin{equation*}
 \begin{bmatrix}1&G_z^\circ\\0&1\end{bmatrix}
 \begin{bmatrix}R_z^\circ&0\\0&M_z^{-1}\end{bmatrix}
 \begin{bmatrix}1&0\\(G_z^\circ)^*&1\end{bmatrix}
 =(-H_{S^\circ}+z)^{-1},
\end{equation*}
where the equality follows from Theorem~\ref{Res-Lee}.
Convergence at this common real resolvent point proves the asserted
norm resolvent convergence.
\end{proof}

\subsection{The regularized Lee Hamiltonian: removal of the ultraviolet cutoff} Here we apply the preceding scheme to the Lee Hamiltonian with an ultraviolet cutoff.
\par
For any $\Lambda >0$, let  $\chi_\Lambda \in L^2(\RE^3)$  be  a cut-off function and define the  operator 
\begin{equation*}
a_\Lambda: \dom(a_\Lambda)  \subset \F_b \to \F_b\,,   
\end{equation*}
\begin{equation*}
\dom(a_\Lambda):= \left\{\psi\in \F_b : \sum_{n=0}^\infty (n+1)\int_{\RE^{3n}} dK \left| \int_{\RE^3} dp \,  \overline{\chi_\Lambda}(p) \psi^{(n+1)}(K,p)\right|^2<\infty\right\},
\end{equation*}
\begin{equation*}
(a_\Lambda \psi)^{(0)} := g\int_{\RE^3} \overline{\chi_\Lambda}(p) \psi^{(1)}(p) \, dp\, ,  
\end{equation*}
and
\begin{equation*}
(a_\Lambda \psi)^{(n)}(K) := g\sqrt{n+1}\int_{\RE^3} \overline{\chi_\Lambda}(p) \psi^{(n+1)}(K,p) \, dp\,, \qquad K\in \RE^{3n}\,, \ n\ge 1\,. 
\end{equation*}
The operator $a_\Lambda$ is unbounded, densely defined and closed  on $\F_b$, see, e.g., \cite[Section 5.7]{Arai}. 

The adjoint of $a_\Lambda$ is  
\begin{equation*}
a_\Lambda^*:\dom(a_\Lambda^*)\subset  \F_b \to \F_b 
\end{equation*}
with 
\begin{equation*}
\dom(a_\Lambda^*) = \left\{\psi\in \F_b : \sum_{n=1}^\infty \frac{1}{n}\int_{\RE^{3n}} dK \left| \sum_{j=1}^{n} \chi_\Lambda(k_j) \psi^{(n-1)}(\hat K_j) \right|^2<\infty\right\},
\end{equation*}

\begin{equation*}
(a_\Lambda^* \psi)^{(0)}= 0,
\end{equation*}
and 
\begin{equation*}
(a_\Lambda^* \psi)^{(n)}(K) = \frac{g}{\sqrt{n}} \sum_{j=1}^{n} \chi_\Lambda(k_j) \psi^{(n-1)}(\hat K_j) \qquad K\in \RE^{3n}, n\in\NA.
\end{equation*}
Likewise, $a_\Lambda^*$ is unbounded, densely defined, and closed on $\F_b$.

\begin{remark}\label{Hyp_i}
We remark that, by \cite[Theorem 5.17 and Remark 5.8]{Arai}, there hold
\begin{equation*}
\H_{b}^{1/2} \subset \dom(a_\Lambda) = \dom(a_\Lambda^*);\quad  a_\Lambda \in \B(\H_{b}^{1/2},\F_b)  ;\quad  a_\Lambda^* \in \B(\H_{b}^{1/2},\F_b).   
\end{equation*}
Additionally, see, e.g., \cite[Corollary 5.10]{Arai}, both $a_\Lambda$ and $a_\Lambda^*$ are $H^{\circ}$-small; hence
\begin{equation*}
\text{the operator} \quad 
\begin{bmatrix}0&a_{\Lambda}^{*}\\
a_{\Lambda}&0\end{bmatrix}\quad\text{is}\quad \begin{bmatrix}H^{\circ}&0\\
0&H^{\circ}\end{bmatrix}\text{-small}
\end{equation*}
and  the assumption  \eqref{teo_i}  of Theorem \ref{teo-conv-Lee} is satisfied. 
\end{remark}

By  the Rellich-Kato theorem, for any $\Lambda >0$ and $\mu > 0 $ the Lee Hamiltonian with cutoff $\Lambda$ 
\begin{equation*}
H_{\Lambda,\mu}:\H^{\circ}\oplus\H^{\circ}\subseteq\F^{\circ}\oplus\F^{\circ}\to\F^{\circ}\oplus\F^{\circ}\,,
\end{equation*}
\begin{equation*}
H_{\Lambda,\mu}(\psi_{1}\oplus\psi_{2}):=(H^{\circ}\psi_{1}+a_{\Lambda}^{*}\psi_{2})\oplus((H^{\circ}+\mu)\psi_{2}+a_{\Lambda}\psi_{1})
\end{equation*}
is self-adjoint and  bounded-from-below. 
 
Now,  we notice that $a_\Lambda R_z^\circ a_\Lambda^* \in \B(\F_b)$ for all $z\in \rho(H^{\circ})$ and define the renormalization constant
$$
    \mathscr E^{\circ}_{\Lambda} := \langle\Omega^{\circ},a_\Lambda R^\circ_\nu a^*_\Lambda\Omega^{\circ}\rangle \qquad \nu\in\RE\cap\rho(H^{\circ})\,.
$$
We use the same symbol $\mathscr E^{\circ}_{\Lambda}$ to denote the corresponding multiplication operator and then we define 
$$
E_{\Lambda}:=\begin{bmatrix}0&0\\0&\mathscr E^{\circ}_{\Lambda}
\end{bmatrix}.
$$
Since $\nu<m$, $R_\nu^\circ$ is negative on the one-boson subspace and
$\mathscr E_\Lambda^\circ\leq0$.  Thus subtracting $E_\Lambda$ adds the
positive counterterm $-\mathscr E_\Lambda^\circ$ to the second component.
The parameters $\mu$ and $\nu$ are kept fixed in the limit (equivalently, so
is the renormalized gap $\mu_g$).  Although $E_{\Lambda}$, like $H_{\mu}$, depends on the parameter $\nu\in\RE\cap\rho(H^{\circ})$, we omit such dependence from the notation. The renormalizing operator $E_{\Lambda}$ enters in the following theorem.
 
\begin{theorem}\label{th:convergence} Suppose that, for some $s\in(3/4,1]$, there holds 
$$
\lim_{\Lambda \nearrow\infty}\|\chi_\Lambda - 1\|_{L^2(\RE^3,(|k|^2 +1)^{-2s}dk)}=0\,.
$$ Then 
\begin{equation}\label{limitRLM_1}
\lim_{\Lambda \nearrow\infty}(H_{\Lambda,\mu}-E_{\Lambda})=H_{\mu}\quad\text{in norm resolvent sense.}
\end{equation}
\end{theorem}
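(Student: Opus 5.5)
The plan is to apply Theorem~\ref{teo-conv-Lee} with $a_\Lambda$ the cutoff annihilator, $a$ the singular annihilator \eqref{an}, $S^\circ$ as in \eqref{Scirc}, and $E_\Lambda^\circ$ chosen so that $E_\Lambda$ absorbs both the counterterm and the bare gap. Since the second component of $H_{\Lambda,\mu}-E_\Lambda$ is $(H^\circ+\mu-\mathscr E^\circ_\Lambda)\psi_2$, we take $E^\circ_\Lambda:=\mathscr E^\circ_\Lambda-\mu$, a real constant and hence a bounded symmetric operator. Assumption \eqref{teo_i} is Remark~\ref{Hyp_i}. It then remains to verify \eqref{h1} and \eqref{h3}.

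For \eqref{h1}, $a_\Lambda-a$ is the annihilator with form factor $\chi_\Lambda-1$. We would rerun the estimate behind Remark~\ref{r:lonigro}, i.e.\ \cite[Proposition~3.4]{Lonigro2022}, keeping track of the form factor. Applying Cauchy--Schwarz in $p$ with weight $\omega(p)^{2s}$ gives
\begin{equation*}
\|(a_\Lambda-a)\psi\|\le |g|\,\|\chi_\Lambda-1\|_{L^2(\omega^{-2s}dk)}\,\|(H^\circ)^{s}\psi\|,
\end{equation*}
where we use that $\sum_j\omega(k_j)^{2s}\le E_n(K)^{2s}$ after bosonic symmetrization. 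Since $L^2(\omega^{-2s})$ and $L^2((|k|^2+1)^{-2s})$ have equivalent norms, the hypothesis yields $\|a_\Lambda-a\|_{\H_b^s,\F_b}\to0$, and a fortiori convergence in $\B(\H_b,\F_b)$.

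The core step is \eqref{h3}. We compute $a_\Lambda R^\circ_{\lambdastar}a_\Lambda^*$ sector by sector. On the $n$-boson sector it splits into a diagonal part,
\begin{equation*}
g^2\int\frac{|\chi_\Lambda(p)|^2}{-|p|^2-|K|^2-(n+1)m+\lambdastar}\,dp,
\end{equation*}
and an off-diagonal exchange part,
\begin{equation*}
g^2\sum_j\int\frac{\overline{\chi_\Lambda(k_j)}\chi_\Lambda(p)\,\psi(\hat K_{j,p})}{-|p|^2-|K|^2-(n+1)m+\lambdastar}\,dp,
\end{equation*}
which is exactly of the form $S^{\circ,od}$ with cutoff factors. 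The constant $\mathscr E^\circ_\Lambda$ equals $g^2\int|\chi_\Lambda|^2(-|p|^2-m+\nu)^{-1}dp$. Subtracting, the diagonal part of $E^\circ_\Lambda-a_\Lambda R^\circ a_\Lambda^*-S^{\circ,d}$ is multiplication by
\begin{equation*}
g^2\int(|\chi_\Lambda(p)|^2-1)\,\frac{-|K|^2-nm+\lambdastar-\nu}{(-|p|^2-m+\nu)(-|p|^2-|K|^2-(n+1)m+\lambdastar)}\,dp,
\end{equation*}
using the first line of \eqref{Scircdn}. The off-diagonal difference has kernel with $\overline{\chi_\Lambda(k_j)}\chi_\Lambda(p)-1$.

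We estimate both in $\B(\H_b,\F_b)$. For the diagonal term, bound the integrand by $C(|K|^2+n)\,\omega(p)^{-2}\,||\chi_\Lambda|^2-1|$ and control $||\chi_\Lambda|^2-1|$ by writing $|\chi_\Lambda|^2-1=|\chi_\Lambda-1|^2+2\mathrm{Re}(\chi_\Lambda-1)$. Cauchy--Schwarz against $\omega^{-2+2s}\in L^1$ requires $2-2s>3/2$, which fails, so here we use the $\H_b$ weight in place of uniform integrability: we allow a factor $E_n(K)^{1/2}$ growth and, by splitting the $p$-integral at $|p|^2\sim E_n(K)$, obtain a bound $o(1)\,E_n(K)$.

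For the off-diagonal term, repeat the proof of Lemma~\ref{lem:S-construction-estimates} with $\psi$ replaced by the factor $(\overline{\chi_\Lambda(k_j)}\chi_\Lambda(p)-1)$. Split it as $(\overline{\chi_\Lambda}-1)(k_j)\chi_\Lambda(p)+(\chi_\Lambda(p)-1)$. The $c_s$ factor in that proof is then replaced by $\|\chi_\Lambda-1\|^2_{L^2(\omega^{-2s})}$, giving a bound $o(1)\|(H^\circ)^{\alpha_s}\psi\|$. The weight $\omega(k_j)^{-1}$ in $I_j$ must be traded for $|\chi_\Lambda(k_j)-1|^2\omega(k_j)^{-2s}$, which is compensated by the extra $E_n^{2s}$ weight available in $\H_b$.

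We expect these diagonal and off-diagonal estimates, uniform in the boson number $n$, to be the main obstacle. The first thing to try is the weighted Cauchy--Schwarz scheme of Lemma~\ref{lem:S-construction-estimates}, applied verbatim. Once \eqref{h1} and \eqref{h3} hold, Theorem~\ref{teo-conv-Lee} gives \eqref{limitRLM_1}.
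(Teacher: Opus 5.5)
Your argument has the same skeleton as the paper's proof. You apply Theorem~\ref{teo-conv-Lee} with $E^\circ_\Lambda=\mathscr E^\circ_\Lambda-\mu$, take hypothesis \eqref{teo_i} from Remark~\ref{Hyp_i}, and then verify \eqref{h1} and \eqref{h3}. The difference is in how you verify those two.

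The paper computes nothing there. It takes \eqref{h1} from \cite[Proposition~3.3]{Lampart2025} applied to the form factor $g(\chi_\Lambda-1)$. It takes \eqref{h3} from the normal-ordering formula \cite[Lemma~3.9]{Lampart2025} combined with the graph-norm estimate of \cite[Proposition~3.1]{Lampart2025}. You prove both directly, and the plan works.

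\begin{itemize}
\item With $\epsilon_\Lambda:=\|\chi_\Lambda-1\|_{L^2(\omega^{-2s}dk)}$, your bound $\|(a_\Lambda-a)\psi\|\le|g|\,\epsilon_\Lambda\|(H^\circ)^s\psi\|$ is correct and stronger than \eqref{h1}.
\item Your sector-wise expansion of $a_\Lambda R^\circ a_\Lambda^*$ is exactly the normal ordering: the contraction is the diagonal multiplier and the exchange is the off-diagonal part.
\item The weighted Cauchy--Schwarz scheme of Lemma~\ref{lem:S-construction-estimates} does control the exchange difference uniformly in $n$. For the piece with $\chi_\Lambda(p)-1$ inside the $p$-integral, $c_s$ becomes $\epsilon_\Lambda^2$.
\item For the piece with $\chi_\Lambda(k_j)-1$ outside the integral, $c_s$ becomes $\sup_\Lambda\|\chi_\Lambda\|^2_{L^2(\omega^{-2s})}<\infty$. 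The smallness then comes from $I_j$, because $\omega(k_j)^{2s-1}E_n(K)\,(E_n(K)+\omega(p)-\lambdastar)^{-2}$ is bounded.
\item Both pieces are $O(\epsilon_\Lambda)\|\psi\|_{\H_b}$.
\end{itemize}

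Your route is self-contained, stays in the paper's notation, and gives an explicit rate in $\epsilon_\Lambda$. The paper's route is shorter but relies on translating into the interior-boundary-condition framework of \cite{Lampart2025}.

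Two corrections, neither fatal. First, the exchange kernel is $\chi_\Lambda(k_j)\overline{\chi_\Lambda(p)}$, not its conjugate; this does not affect the estimates.

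Second, and more important, the obstacle you report for the diagonal term does not exist; the exponent is miscounted. The correct pairing is $\int\omega^{-2}|\chi_\Lambda-1|\,dp\le\epsilon_\Lambda\|\omega^{s-2}\|_{L^2(\RE^3)}$. This is finite because $\omega^{2s-4}\sim|p|^{4s-8}$ is integrable on $\RE^3$ when $8-4s>3$. Combine it with $\int\omega^{-2}|\chi_\Lambda-1|^2\,dp\le m^{2s-2}\epsilon_\Lambda^2$. Your own crude pointwise bound then already gives a diagonal multiplier bounded by $C(E_n(K)+1)(\epsilon_\Lambda+\epsilon_\Lambda^2)$. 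That is all \eqref{h3} needs, so no splitting of the $p$-integral is required.

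The factor $E_n(K)+1$ cannot be dropped, though. For a sharp cutoff and fixed $\Lambda$, the multiplier grows like $E_n(K)^{1/2}$, which is the square-root term of $S^{\circ,d}$ that the cutoff self-energy lacks. So the convergence genuinely holds in $\B(\H_b,\F_b)$, not in $\B(\F_b)$.
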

\begin{proof}
Hypothesis \eqref{teo_i} is provided by
Remark \ref{Hyp_i}. We verify hypotheses \eqref{h1}--\eqref{h3} of
Theorem \ref{teo-conv-Lee}.  By the definition of $H_{\Lambda,\mu}$, the operator
$E_\Lambda^\circ$ in \eqref{h3} is here multiplication by
$\mathscr E_\Lambda^\circ-\mu$.

For \eqref{h1}, apply \cite[Proposition 3.3]{Lampart2025} with form-factor
difference $F=g(\chi_\Lambda-1)$ and positive resolvent parameter
$-\lambdastar$.  Taking into account the equivalence of the weights
$|k|^2+m$ and $|k|^2+1$, that estimate gives
\begin{equation}\label{eq:cutoff-annihilator-convergence}
 \|(a_\Lambda-a)R_{\lambdastar}^\circ\|_{\B(\F^\circ)}
 \leq
 C_{g,m,s,\lambdastar}
 \|\chi_\Lambda-1\|_{L^2(\RE^3,(|k|^2+1)^{-2s}dk)}.
\end{equation}
The right-hand side tends to zero.  Since
$R_{\lambdastar}^\circ:\F^\circ\to\H^\circ$ is a bounded bijection with
bounded inverse, \eqref{eq:cutoff-annihilator-convergence} is equivalent to
\eqref{h1}.  

Finally, \eqref{h3} is, in the present notation,
\begin{equation*}
 \lim_{\Lambda\nearrow\infty}
 \left\|
 \mathscr E_\Lambda^\circ
 -a_\Lambda R^\circ a_\Lambda^*
 -(\mu+S^\circ)
 \right\|_{\H_b,\F_b}
 =0.
\end{equation*}
This is the normal ordered finite part convergence proved in
\cite{Lampart2025}: the identification follows from the normal ordering
formula in \cite[Lemma 3.9]{Lampart2025}, while the required graph norm
continuity is the estimate in \cite[Proposition 3.1]{Lampart2025} used in
the proof of \cite[Proposition 3.5, immediately before Eq.~(3.6)]{Lampart2025}.
Thus all hypotheses of Theorem \ref{teo-conv-Lee} hold, and
\eqref{limitRLM_1} follows.
\end{proof}

\section*{Acknowledgments.}
\noindent The authors acknowledge the support of the Next Generation EU - Prin 2022 project "Singular Interactions and Effective Models in Mathematical Physics- 2022CHELC7".\\ C.C., D.N. and A.P. also acknowledge the support  of the INdAM-GNFM. 

\end{document}